\let\newfloat\newfloat@ltx
\renewcommand\onecolumngrid{
\do@columngrid{one}{\@ne}
\def\set@footnotewidth{\onecolumngrid}
\def\footnoterule{\kern-6pt\hrule width 1.5in\kern6pt}
}
\renewcommand\twocolumngrid{
        \def\footnoterule{
        \dimen@\skip\footins\divide\dimen@\thr@@
        \kern-\dimen@\hrule width.5in\kern\dimen@}
        \do@columngrid{mlt}{\tw@}
}
\def\HC{\mathcal{H}}
\def\LC{\mathcal{L}}
\def\ad{^{\dagger}}
\def\w{\omega}
\tikzset{every picture/.style=remember picture}
\newcommand{\poly}{\operatorname{poly}}
\newcommand{\Ebb}{\mathbb{E}}
\newcommand{\Ubb}{\mathbb{U}}
\newcommand{\AC}{\mathcal{A}}
\newcommand{\EC}{\mathcal{E}}
\newcommand{\IC}{\mathcal{I}}
\newcommand{\MC}{\mathcal{M}}
\newcommand{\NC}{\mathcal{N}}
\newcommand{\OC}{\mathcal{O}}
\newcommand{\PC}{\mathcal{P}}
\newcommand{\QC}{\mathcal{Q}}
\newcommand{\SC}{\mathcal{S}}
\newcommand{\UC}{\mathcal{U}}
\newcommand{\VC}{\mathcal{V}}
\newcommand{\XC}{\mathcal{X}}
\newcommand{\Var}{{\rm Var}}
\newcommand{\Cov}{{\rm Cov}}
\renewcommand{\geq}{\geqslant}
\renewcommand{\leq}{\leqslant}
\newcommand{\rhot}{\widetilde{\rho}}
\renewcommand{\vec}[1]{\boldsymbol{#1}}  
\newcommand{\bs}{\textsf{BS}}
\newcommand{\thv}{\vec{\theta}}
\newcommand{\xv}{\vec{x}}
\newenvironment{customthm}[1]
  {\innercustomthm}
  {\endinnercustomthm}
\def\be{\begin{equation}}
\def\ee{\end{equation}}
\def\bs{\begin{split}}
\def\es{\end{split}}
\def\bea{\begin{eqnarray}}
\def\eea{\end{eqnarray}}
\def\w{\omega}
\newcommand\X{\text{X}}
\newtheorem{theorem}{Theorem}
\newtheorem{lemma}{Lemma}
\newtheorem{corollary}{Corollary}
\newtheorem{proposition}{Proposition}
\newtheorem*{proposition*}{Proposition}
\newtheorem{definition}{Definition}
\newenvironment{specialproof}{\textit{Proof:}}{\hfill$\square$}
\begin{document}

\title{On fundamental aspects of quantum extreme learning machines}

\author{Weijie Xiong}
\thanks{The first three authors contributed equally to this work.}
\affiliation{Institute of Physics, Ecole Polytechnique F\'{e}d\'{e}rale de Lausanne (EPFL), Switzerland}

\author{Giorgio Facelli}
\thanks{The first three authors contributed equally to this work.}
\affiliation{Institute of Physics, Ecole Polytechnique F\'{e}d\'{e}rale de Lausanne (EPFL), Switzerland}

\author{Mehrad Sahebi}
\thanks{The first three authors contributed equally to this work.}
\affiliation{Institute of Electrical and Micro Engineering, Ecole Polytechnique F\'{e}d\'{e}rale de Lausanne (EPFL), Switzerland}

\author{Owen Agnel}
\affiliation{Department of Computer Science, University of Oxford, Oxford, UK}

\author{Thiparat Chotibut}
\affiliation{Chula Intelligent and Complex Systems Lab, Department of Physics, Faculty of Science, Chulalongkorn University, Bangkok, Thailand}

\author{Supanut Thanasilp}
\affiliation{Institute of Physics, Ecole Polytechnique F\'{e}d\'{e}rale de Lausanne (EPFL), Switzerland}

\author{Zo\"{e} Holmes}
\affiliation{Institute of Physics, Ecole Polytechnique F\'{e}d\'{e}rale de Lausanne (EPFL), Switzerland}

\date{\today}

\begin{abstract}
Quantum Extreme Learning Machines (QELMs) have emerged as a promising framework for quantum machine learning. Their appeal lies in the rich feature map induced by the dynamics of a quantum substrate – the quantum reservoir – and the efficient post-measurement training via linear regression. Here we study the expressivity of QELMs by decomposing the prediction of QELMs into a Fourier series. We show that the achievable Fourier frequencies are determined by the data encoding scheme, while Fourier coefficients depend on both the reservoir and the measurement. Notably, the expressivity of QELMs is fundamentally limited by the number of Fourier frequencies and the number of observables, while the complexity of the prediction hinges on the reservoir. As a cautionary note on scalability, we identify four sources that can lead to the exponential concentration of the observables as the system size grows (randomness, hardware noise, entanglement, and global measurements) and show how this can turn QELMs into useless input-agnostic oracles. {In particular, our result on the reservoir-induced concentration strongly indicates that quantum reservoirs drawn from a highly random ensemble make QELM models unscalable.} Our analysis elucidates the potential and fundamental limitations of QELMs, and lays the groundwork for systematically exploring quantum reservoir systems for other machine learning tasks.
\end{abstract}

\maketitle

\section{Introduction}
Extreme learning machines (ELMs)~\cite{huang2004extreme, Ding2015extreme, wang2022review, HUANG2015ELMRev} are a class of feed-forward neural networks designed to address the challenges of conventional neural networks training. While gradient-based training methods such as backpropagation have propelled the rapid development of neural network models, their efficiency can diminish with increasing data and model size due to issues like vanishing gradients and  increasing computational resources ~\cite{goodfellow2016deep}. ELMs leverage a single hidden layer with numerous randomly initialized hidden neurons, whose parameters are subsequently fixed to create a rich representation of the input data. Training ELMs reduces to a one-step linear regression on the output layer weights, which is simply a convex optimization problem. This approach can significantly reduce training time and even improve generalization performance~\cite{huang2006extreme,huang2011extreme, HUANG2015ELMRev}.

\begin{figure}
    \centering
    
    \includegraphics[width=0.48\textwidth]{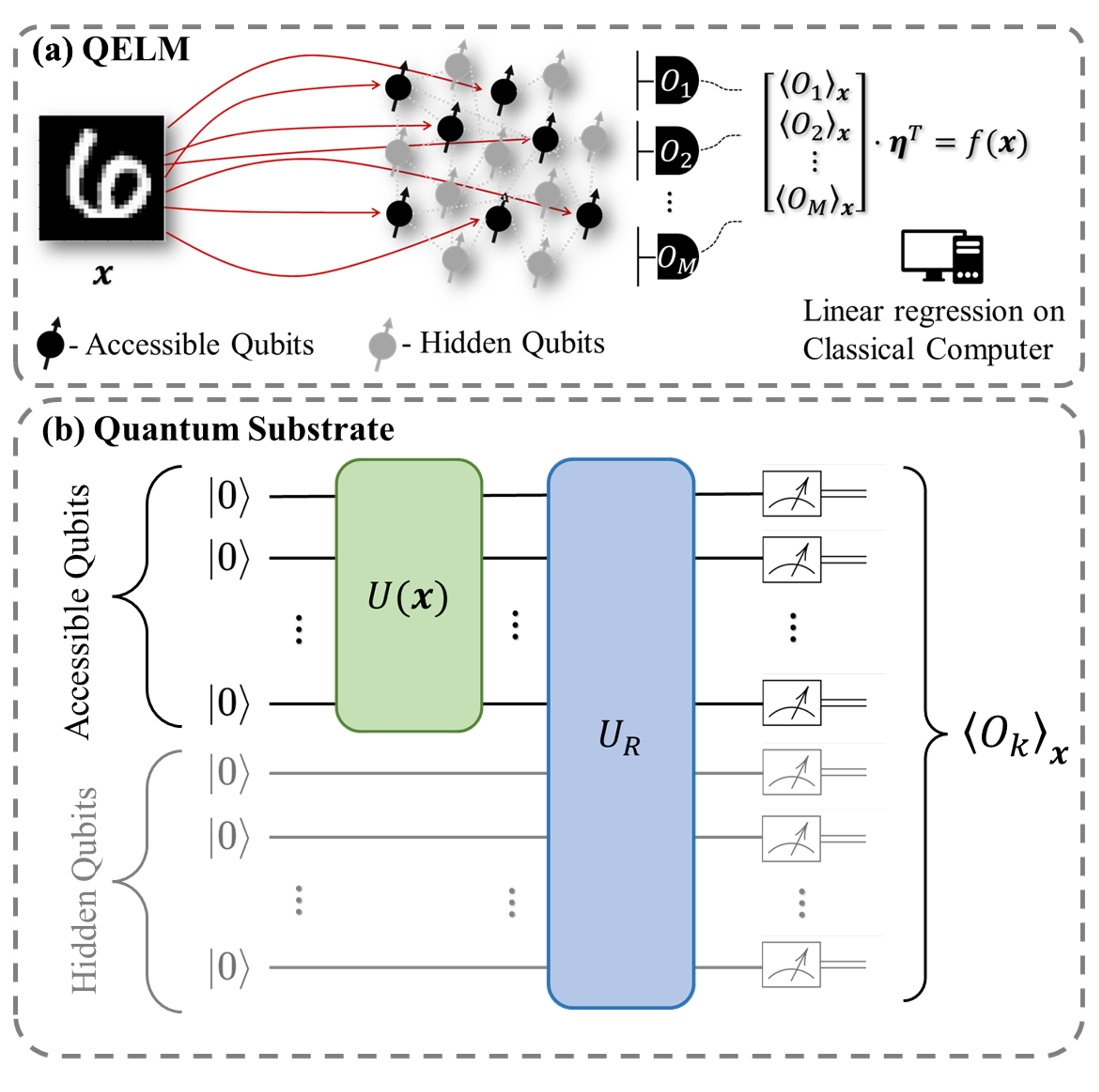}
    
    \caption{\textbf{Framework of a QELM.} A QELM encodes classical data onto accessible qubits (red arrows in \textbf{panel (a)} or the encoding unitary $U(\xv)$ in \textbf{panel (b)}). The accessible and hidden qubits then undergo a unitary (reservoir) evolution $U_R$, and a set of {Hermitian} observables {$\{O_k\}_{k=1}^{M}$} is measured. The estimates of the observables are classically post-processed to predict the output {$f(\xv)$} via linear regression. The reservoir unitary is fixed and only the linear regression weights $\bm{\eta}$ are classically optimized.}
    \label{fig:Framework}
\end{figure}

With the advent of near-term quantum devices, Quantum Extreme Learning Machines (QELMs) have emerged as a compelling alternative to traditional ELMs~\cite{mujal2021opportunities, innocenti2023potential}. This growing interest stems not only from the capability of QELMs to directly process quantum data~\cite{innocenti2023potential, Ghosh_PRL_CQQ_2019, Ghosh_NPJQI_QQC_2019,suprano2023experimental}, but also from the fact that QELMs can leverage the complex dynamics of a quantum system evolving in an exponentially large Hilbert space -- a quantum reservoir -- to construct an intricate feature map of classical data inputs. Employing a quantum reservoir enables QELMs and more generally quantum reservoir computing to achieve predictive power with significantly lower resource requirements, compared to extensive hidden layers required in the classical counterparts ~\cite{mujal2021opportunities, fujii2017harnessing, nakajima_PRApp_spatialmultiplexing_2019}. Notably, despite involving measurement outcomes from the quantum reservoir, training QELMs remains a convex optimization problem based on linear regression. This guarantees trainability even on noisy quantum hardware, unlike gradient-based {Variational Quantum Algorithms (VQAs)} which can suffer from noise-induced barren plateaus~\cite{wang2020noise}. These advantages suggest that QELMs hold promise for harnessing the potential of near-term quantum devices in various quantum machine learning tasks. 

However, for any Quantum Machine Learning (QML) model learning on classical data, it is important to consider the expressivity limitations arising from the data encoding strategy~\cite{schuld2021effect}. In many cases, the predictions of a typical QML model can be expressed in terms of a Fourier series~\cite{schuld2021effect}. Hence, the quantity of achievable Fourier frequencies and the controllability of {the trainable weights over} the Fourier coefficients become natural measures of a QML model's expressivity and predictive power. As a rule of thumb, the less expressive a QML model is, the more efficiently it can be classically simulated. Hence, the Fourier decomposition provides a way to assess its classical simulability, in particular whether it is possible to construct a classical surrogate of the quantum model, and hence the lack of (or potential for) quantum advantage~\cite{schreiber2022classical,landman2022classically,sweke2023potential}.

In parallel, there is growing awareness of the dangers posed by exponential concentration, where quantum expectation values (e.g. of a loss or a kernel) exponentially concentrate around an input-independent value with growing problem size~\cite{mcclean2018barren,holmes2021connecting,cerezo2020cost,sharma2020trainability,wang2020noise,cerezo2020impact,holmes2020barren,marrero2020entanglement,uvarov2020barren,arrasmith2020effect,arrasmith2021equivalence,thanasilp2022exponential, rudolph2023trainability, ragone2023unified,fontana2023theadjoint,garcia2023deep, cerezo2023does}. The main sources of exponential concentration are ansatz Haar-expressivity~\cite{mcclean2018barren}, noise~\cite{wang2020noise}, entanglement~\cite{marrero2020entanglement} and global measurements~\cite{cerezo2020cost}. Originally identified as a problem for quantum neural networks (QNNs)~\cite{mcclean2018barren}, it has recently been shown that exponential concentration, analogously, acts as a barrier to the scalability of quantum kernel methods~\cite{thanasilp2022exponential, gan2023unified}. While for QNNs exponential concentration inhibits the training of the model, for quantum kernel methods the trainability is guaranteed but it is the generalization capabilities of the model that suffer. 

This work aims to explore the potential and limitations of QELM through the lenses of Fourier-expressivity, exponential concentration, and classical simulability. We focus on the framework of a general QELM that learns from classical data and performs regression or classification tasks from the measurement outcomes of a quantum reservoir. Under this framework, we first show that the prediction of a QELM can be expressed as a Fourier series. Similar to a QNN~\cite{schuld2021effect}, the achievable Fourier frequencies of a QELM are also fully determined by the encoding scheme, i.e. the eigenvalues of the Hermitian generators of encoding unitary. The Fourier coefficients are input-independent and controlled by both the dynamics of the QELM and the measurement. Based on that, we investigate the classical simulability for typical encoding strategies. In particular, the encoding using Pauli generators leads to a prediction composed of linearly many Fourier modes in the system size, which can be efficiently surrogated by a classical computer~\cite{schreiber2022classical,landman2022classically,sweke2023potential}. Hence, the use of an exponential encoding, which results in exponentially many Fourier frequencies, is a prerequisite for quantum advantage. We further link the Fourier-expressivity of QELM to the independence of the Fourier coefficients, which is upper bounded by three factors: the number of Fourier frequencies determined by the encoding strategy, the number of observables, and the measurement locality.

Then, we further study the effect of exponential concentration on QELM. In particular, QELM implements a set of observables, whose outcomes are then used to train the final linear regression. We show that the Haar-expressivity of both the encoding and reservoir unitaries, entanglement, global measurement, and noise can independently induce the exponential concentration of the observables' expectation values about an input-independent value. This, in turn, necessitates exponentially many shots to precisely estimate the observable's outcomes and reliably reconstruct the  input-dependent prediction. Crucially, these four sources of concentration limit the scalability of a QELM, since the model's predictions will become increasingly agnostic to the input data as the system size grows. {In particular, our analytical results strongly discourage the use of the reservoirs drawn from 2-designs and their approximations~\cite{vetrano2024state,domingo2023optimal,ahmed2024optimal,rodrigo2021dynamicalptqrc,fujii2017harnessing}, which lead to the concentration and in turn the unscalability of QELMs.} Our key contributions and related analytical work are tabulated in Table \ref{table:KernelsVsQML}.

\begin{table*}
\resizebox{\width}{!}{%
\begin{tabular}{c|c|c|c|}
\cline{2-4}
\multicolumn{1}{l|}{} & Expressivity & Controllability & Exponential concentration \\ \hline
\multicolumn{1}{|c|}{Classical data} & \begin{tabular}[c]{@{}c@{}} Sec.~\ref{fourier_decomposition_analysis}: {Eq.~\eqref{Claim1Eq}} \\ Ref. {\cite{goto2021universal,gonon2023universal}}\end{tabular} & Sec.~\ref{fourier_decomposition_analysis}: {Eqs.~\eqref{Eq:Control1}, \eqref{Eq:Control2}} & \begin{tabular}[c]{@{}c@{}}Sec.~\ref{trainability_analysis}:\\Encoding - Eqs.~\eqref{Eq:Conc_EncodingExpress}, \eqref{eq:noisy_concentration}\\Reservoir - Eqs.~\eqref{Eq:Conc_Res_Express}, \eqref{Eq:Conc_Ent}\\ Measurement - Eq.~\eqref{Eq:Conc_Gl_Meas}\end{tabular} \\ \hline
\multicolumn{1}{|c|}{Quantum data} & Ref. {\cite{innocenti2023potential}} & Ref. {\cite{innocenti2023potential}} & \begin{tabular}[c]{@{}c@{}}Sec.~\ref{trainability_analysis}: \\Reservoir - Eqs.~\eqref{Eq:Conc_Res_Express}, \eqref{Eq:Conc_Ent}\\ Measurement - Eq.~\eqref{Eq:Conc_Gl_Meas}\end{tabular} \\ \hline
\end{tabular}%
}
{\caption{\textbf{Summary of our main analytical results and related analytical work:}
The controllability generally refers to the scope of QELM's predictive capabilities that can be adjusted by controlling the trainable weights. For classical data, we show that the output prediction is a linear combination of Fourier series; hence, the controllability of trainable weights here means the controllability {over} Fourier coefficients.
}}
\label{table:KernelsVsQML}
\end{table*}

\section{Background}

Quantum extreme learning machines (QELMs) can be used to solve both regression and classification tasks. Their appeal can be attributed to the ability of QELMs to encode input data in exponentially large many-body quantum states induced by the dynamics of a quantum reservoir. This high-dimensional input feature map facilitates efficient classification and linear regression, similar to the concept underlying kernel methods. 

Recent studies have demonstrated the potential of QELMs across diverse quantum machine learning tasks, encompassing binary classification~\cite{fujii2021quantum}, supervised learning on benchmark datasets~\cite{wang2022variational,sainz2022quantum,sakurai2022quantum}, input recognition and parity check~\cite{negoro2021toward}, and quantum state reconstruction~\cite{innocenti2023potential,suprano2023experimental}.  Notably, QELMs can process both  quantum~\cite{innocenti2023potential,suprano2023experimental,Ghosh_NPJQI_QQC_2019,Ghosh_PRL_CQQ_2019} and classical data~\cite{goto2021universal,gonon2023universal,fujii2021quantum,wang2022variational,sainz2022quantum,negoro2021toward,sakurai2022quantum}. However, theoretical investigation on their potential and limitations when processing classical data remains relatively unexplored. This motivates our current study concerning the consequences of the interplay between classical data encoding and exponential concentration phenomena in QELMs. 

We consider the following framework of QELMs. Given a training set $\{ (\xv^{(l)}, \vec{y}^{(l)}) \}_{l=1}^{D} \subset \mathbb{R}^{d_x} \times \mathbb{R}^{d_y}$ consisting of $D$ input-output classical vector pairs, the objective of a QELM is to learn a function $\vec{f}$ such that $\vec{f}(\xv) \approx \vec{y} $. For brevity, we focus our analysis on a scalar output ($d_y = 1$), as the generalization to the vector output case is more cumbersome and existing proposals are mostly based on a scalar output case. 

As mentioned earlier, at its core, a QELM harnesses a quantum reservoir, e.g. quantum spins~\cite{innocenti2023potential}, crystal model~\cite{sakurai2022quantum}, photonic~\cite{suprano2023experimental,nerenberg2024photon} and NMR~\cite{negoro2021toward} platform, as a means of implementing a rich feature map to the input data. These reservoir states that encode the inputs are then read out via measurements, yielding classical vectors of observables that are subsequently used to train the model classically via linear regression to match the outputs (labels). More concretely, as shown in Fig.~\ref{fig:Framework}, a QELM consists of the following components:
\begin{enumerate}
    \item \textbf{Encoding of classical data:} An input vector $\xv \in \XC =\{\xv^{(l)}\}_{l=1}^{D}$ is encoded into a quantum state via a parametrized unitary $U(\xv)$ applied on the space of $n_{a}$ accessible qubits. Let $\rho_0$ denote the initial state of all accessible qubits, then the state after encoding is
    \begin{equation}
        \rho(\xv)=U(\xv)\rho_{0} U^\dag(\xv) \,.
    \end{equation}
    
    \item \textbf{Reservoir evolution:} As well as the accessible qubits, a QELM is equipped with a reservoir composed of $n_{h}$ hidden qubits. We suppose the hidden qubits are initialized in the $\ket{0}$ state. The second step of a QELM is to apply the reservoir dynamics to the composite system of accessible and hidden qubits. Without loss of generality, we consider the reservoir evolution described by some unitary $U_R$. The state of the reservoir after the evolution is
    \begin{equation}\label{Eq:ReservoirEvo}
        \rhot(\xv)=U_R\big(\rho(\xv)\otimes|0\rangle\langle 0|\big)U_R^\dag \,.
    \end{equation}
    
    Indeed, by virtue of Stinespring's dilation theorem, supposing that the reservoir undergoes a {Completely-Positive Trace Preserving (CPTP)} channel evolution would only correspond to some unitary evolution on a larger hidden space. Thus, we will restrict ourselves only to unitary dynamics.
    
    \item \textbf{Readout:} Next, the reservoir state after the evolution is read out by  measurements on a set of observables $\{O_{1}, O_{2}, \dots, O_{M}\}$, whose theoretical expectation value is
    \begin{equation}\label{obs_def}
        \expval{O_{k}}_{\xv}=\operatorname{Tr}[O_{k} \rhot(\xv)] \,.
    \end{equation}
    These theoretical expectation values yield the measurement readout (classical) vector. 

    \item \textbf{Linear regression:} Finally, the resulting measurement readout vectors are classically trained via linear regression to match the corresponding prediction labels. That is, given a set of trainable weights $\bm{\eta}=[\eta_1, \eta_2, ... ,\eta_M]^T$, a QELM makes the following prediction
    \begin{equation}\label{Eq:sysout}
        f_{\bm{\eta}}(\xv)=\sum_{k=1}^M \eta_k \expval{O_{k}}_{\xv} \,.
    \end{equation}
    The weight vector $\bm{\eta}$ is typically trained by minimizing the empirical loss 
    \begin{equation}
        \mathcal{L}(\bm{\eta}) = \frac{1}{D} \sum_{k=1}^{D}  \left( f_{\bm{\eta}}\big(\xv^{(l)}\big) - y^{(l)} \right)^2 \,.
    \end{equation}
\end{enumerate}

\section{Fourier decomposition analysis}\label{fourier_decomposition_analysis}

\subsection{General analysis}
To analyse the Fourier-expressivity of a quantum model it is helpful to consider its Fourier decomposition. For simplicity we will here assume that the inputs are scalars, i.e., $\vec{x} \rightarrow x$, but in Appendix~\ref{AppMultiFourier} we discuss the generalization to vector inputs.

Ref.~\cite{schuld2021effect} showed that the output of a general variational quantum circuit that encodes classical data via parameterized unitaries can be expressed as a Fourier series of the form 
\begin{equation}
    f_{\bm{\theta}} (x)=\sum_{\omega\in\Omega}c_{\omega} e^{i\omega x} \,. 
\end{equation}
The set of frequencies $\Omega$ is determined by the encoding strategy and the coefficients $c_\omega$ are determined by the corresponding {Variational Quantum Circuit (VQC)} parameters. Here we show that the prediction of a QELM can also be expressed as a Fourier series.

Let us suppose that the accessible qubits are initialized in the state \\ 
\begin{equation}
    \rho_{0}= \sum_{i,j} \alpha_{ij}|i\rangle\langle j|\,.
\end{equation}
Here we use $\ket{i}$ to denote the state of the effective qudit corresponding to the accessible qubits, i.e. we denote the basis $\{\ket{00\cdots00},\ket{00\cdots01},\dots, \ket{11\cdots11}\}$ as $\{\ket{0},\ket{1}, \dots,  \ket{2^{n_{a}}-1}\}$. {We consider here the ``time-evolution'' encoding, i.e.} the encoding unitary has the form
\begin{equation}\label{eq:encoding_def}
    U(x)= e^{iHx}\,,
\end{equation}
where the generator $H$ is a Hermitian observable. {This type of encoding strategy has been widely used in the literature of quantum machine learning with classical data, as discussed in ref.~\cite{schuld2021effect} and the references therein}. {Particularly, in the context of QELMs and QRC model~\cite{fujii2017harnessing,mujal2021opportunities, wang2022variational,varsamopoulos2024quantum,de2024harnessing}, the commonly used amplitude encoding on a single qubit can be analyzed under the framework of ``time-evolution'' encoding with classical pre-processing.}

{This class of encoding strategies not only encompasses the cases that the classical inputs are directly encoded in evolution time of a general Hamiltonian, but also can be easily modified to those implicitly containing a step of classical pre-processing of data $\xv\mapsto\phi_{\rm pre}(\xv)$ where $\phi_{\rm pre}(\cdot)$ is a pre-processing map. In particular, we have the Fourier series with respect to $\phi_{\rm pre}(\xv)$ (instead of $\xv$). This can lead to a significant increase in frequencies of the Fourier spectrum as in the exponential encoding we discuss in Sec.~\ref{subsec:encoding}, or change the basis form of the model prediction as in the case of the amplitude encoding using single qubit. For a detailed discussion, we refer the readers to Sec.~V\,A of Ref.~\cite{schuld2021effect}.}

We further assume an appropriate computational basis, such that the generator $H$ is a diagonal Hamiltonian with eigenvalues $\{\lambda_0, \lambda_1, \dots, \lambda_{(2^{n_a}-1)}\}$. It follows that the state of accessible qubits after the encoding is
\begin{align} \label{eq:10}
     \rho(x) &= U(x) \rho_0 U(x)^\dag\\
     &= \sum_{i,j=0}^{{2^{n_a}}-1} e^{i(\lambda_{j}-\lambda_{i})x} \alpha_{ij}|i\rangle\langle j|\,. \label{Eq:StateAfterEnc1d}
\end{align}
By Eq.~\eqref{obs_def} and~\eqref{Eq:ReservoirEvo} the expectation value of a readout observable $O$ takes the form 
\begin{align}\label{Eq:MOutcome}
    \expval{O}_{x}&= \operatorname{Tr}[O U_R\left(\rho(x)\otimes|0\rangle\langle 0|\right)U_R^\dag]\\
    &= \sum_{\omega\in\Omega} a_{\omega}e^{i\omega{x}}\,,
\end{align}
where the Fourier frequencies are given by the differences of the eigenvalues of $H$ 
\begin{equation}\label{Eq:FreqSet}
    \Omega=\{\lambda_j-\lambda_i: i,j = 1, 2, \dots, 2^{n_{a}}\}\,, 
\end{equation}
and the corresponding Fourier coefficients are
\begin{equation}\label{Eq:FParam}
    a_{\omega}=\sum_{i, j \mathrm{\,s. t.\,} \lambda_j - \lambda_i =\omega} \alpha_{ij}\bra{j,0}U_R^\dag{}O U_R \ket{i,0}\,.
\end{equation}
Thus we already see that only the eigenvalues of the Hamiltonian determine the achievable frequencies $\Omega$. Instead, the weighting $a_{\omega}$ of each frequency is determined by the initial state, reservoir dynamics and choice in the observable. 

These observations propagate through to the final function model. More concretely, given a set of observables $\{O_{1}, O_{2}, \dots, O_{M}\}$, we denote the Fourier coefficients resulting from $O_{k}$ as $a_{\omega}^{(k)}$. Then, by Eq.~\eqref{Eq:sysout} the system output can be expressed as a Fourier series of the form
\begin{align}
    f_{\bm{\eta}}{(x)}
    &=\sum_{\omega\in\Omega} b_{\w} e^{i\omega x}\,,
\end{align}
where $b_{\w}=\sum_{k=1}^M \eta_k  a_{\omega}^{(k)}$.

Thus we see that, similar to the Fourier series of a variational quantum circuit, the encoding strategy determines the eigenvalues of the encoding generator, and hence the Fourier frequencies. On the other hand, the Fourier coefficients $b_\omega$ are independent of the input and determined by the dynamics of the reservoir, the trainable weights and the choice of observables. 

In the case of vectorial inputs and a multivariate target function, the expression of the system prediction is easily generalized to a multivariate Fourier series, as shown in Ref.~\cite{schuld2021effect}. For concreteness, we spell this out in Appendix~\ref{AppMultiFourier}. 

{Next, we study the scaling of the number of Fourier frequencies with respect to the number of qubits for different encoding strategies. To make the analysis more tractable,} we {consider the simplest non-trivial scenario of a} {local encodings --} the inputs are encoded on each accessible qubit in parallel. {That is,} the encoding unitary applied on the accessible qubits is separable and takes the tensor product form
\begin{equation}
    U(x)=U_1(x)\otimes U_2(x)\otimes\cdots\otimes U_{n_{a}}(x)\,,
\end{equation}
where $U_k(x)$ denotes the encoding of the $k$'th accessible qubit and is given by
\begin{equation}
    U_k(x)=e^{i H_k x}\,.
\end{equation}
Note that $H_k$ is the generator of the $k$'th accessible qubit, while $H$ in Eq.~\eqref{eq:encoding_def} is the generator of the effective qudit representing all accessible qubits. Let $\lambda^{(k)}_0$ and $\lambda^{(k)}_1$ be the eigenvalues of $H_k$, then, as shown in~\cite{schuld2021effect}, the set of achievable Fourier frequencies can be written as
\begin{align}
    \begin{split}
        \Omega&=\Bigl\{(\lambda^{(1)}_{i_1}+\lambda^{(2)}_{i_2}+\cdots+\lambda^{(n_{a})}_{i_{n_{a}}})-\\
        &(\lambda^{(1)}_{j_1}+\lambda^{(2)}_{j_2}+\cdots+\lambda^{(n_{a})}_{j_{n_{a}}}): i_k,j_k=0,1\Bigr\}
    \end{split}\\
    \begin{split}\label{Eq:Omega}
        &=\Bigl\{(\lambda^{(1)}_{i_1}-\lambda^{(1)}_{j_1})+(\lambda^{(2)}_{i_2}-\lambda^{(2)}_{j_2})+\\
        &\cdots + (\lambda^{(n_{a})}_{i_{n_{a}}}-\lambda^{(n_{a})}_{j_{n_{a}}}): i_k,j_k=0,1\Bigr\}\,,
    \end{split}
\end{align}
where $i_k$ and $j_k$ are the indices for the $k$'th accessible qubit. The set of indices for all accessible qubits $(i_1,i_2,\dots,i_{n_{a}})$ and $(j_1,j_2,\dots,j_{n_{a}})$ correspond to $i$ and $j$ in Eq.~\eqref{Eq:FreqSet} respectively (i.e. $i$ corresponds to $i_1i_2 \cdots i_{n_{a}}$ in binary).

The expression for $\Omega$ in Eq.~\eqref{Eq:Omega} implies that for a fixed number of accessible qubits $n_{a}$ the spacing of eigenvalues of $H_k$ determines the quantity of Fourier frequencies. If the differences between the eigenvalues $(\lambda^{(k)}_{i_k}-\lambda^{(k)}_{j_k})$ are identical for all $k$'s, then the number of achievable frequencies will be significantly smaller, compared to the case when each $H_k$ provides distinguishable differences of eigenvalues. The maximum number of possible distinct non-negative frequencies is $1+ (4^{n_{a}}-2^{n_{a}})/2$. However, often the eigenvalues are distributed such that the number of distinct frequencies is actually lower than this bound.

In the following, we analyse the frequencies generated from the proposed encoding strategies and discuss the controllability of {the trainable weights over} the Fourier coefficients. We start by considering two typical encoding schemes and discuss the quantity of their frequencies, which are further linked to classical simulability.  {Lastly, we note that in general one could relax the local assumption and follow the similar procedure which potentially leads to more complicated analysis to determine the frequency scaling with respect to the number of qubits.} 

\subsection{Encoding strategies}\label{subsec:encoding}
\begin{figure}
    \centering
    \includegraphics[width=0.495\textwidth]{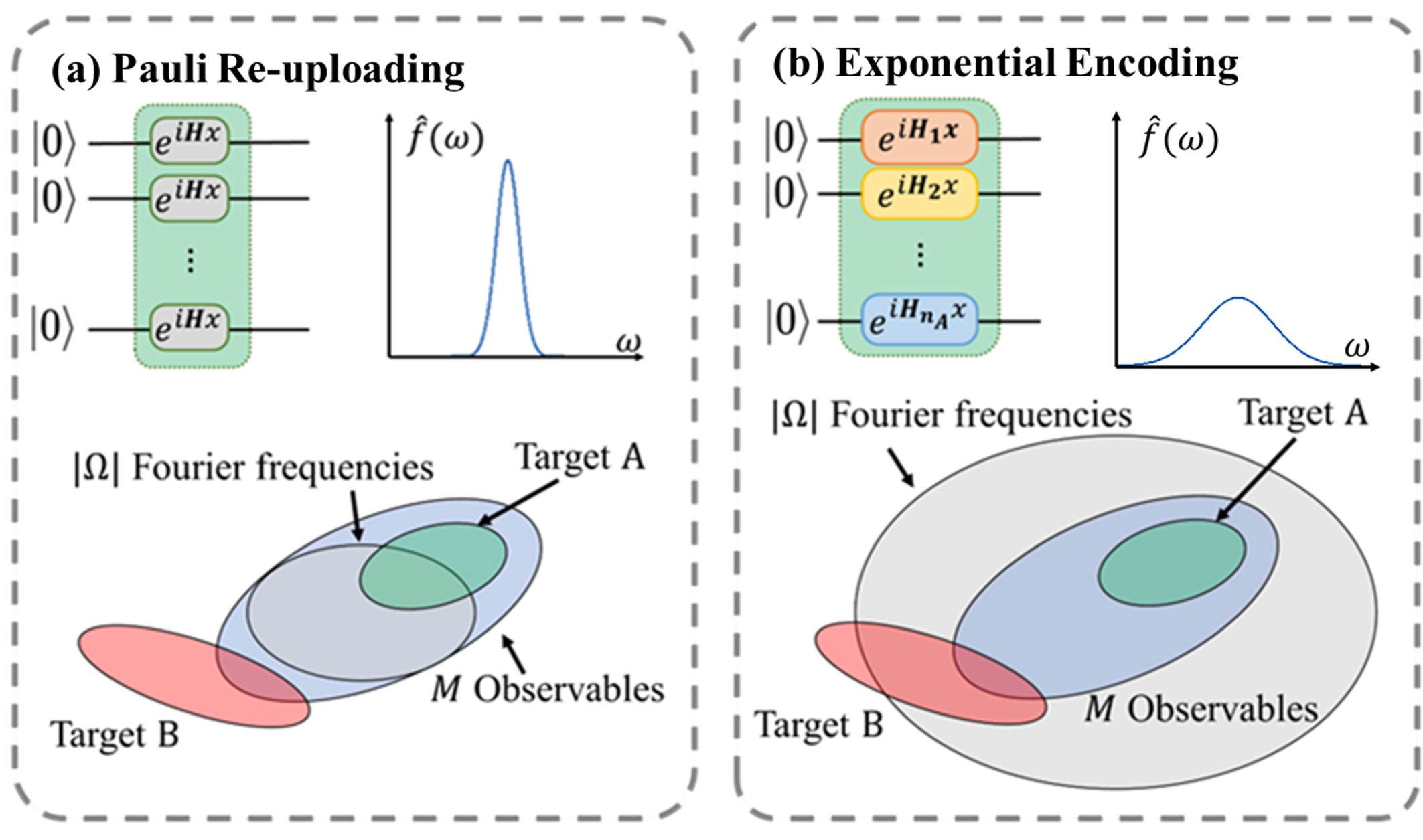}
    \caption{\textbf{Encoding strategies.} The encoding strategy determines the achievable Fourier frequencies of the prediction. Pauli re-uploading and the exponential encoding {(as defined in Sec.~\ref{subsec:encoding})} lead to polynomially and exponentially many frequencies in panels \textbf{(a)} and \textbf{(b)} respectively. Hence, {as shown in the plots of the Fourier transform $\hat{f}(\omega$) of the output $f(\xv)$ against the Fourier frequency $\omega$,} the prediction of a QELM using Pauli encoding has a more concentrated Fourier spectrum, which is more efficiently simulated classically. The exponential encoding, which corresponds to the partial control regime $M<|\Omega|$, allows for a wider range of target functions compared to the Pauli re-uploading, where $M>|\Omega|$ and might offer a quantum advantage.}\label{fig:Encoding_comp}
\end{figure}
\paragraph*{Pauli re-uploading.}
A widely used strategy of encoding is Pauli re-uploading~\cite{sakurai2022quantum,wang2022variational,mujal2021opportunities,sainz2022quantum}, which applies an identical single-qubit Pauli rotation gate on each of the accessible qubits in parallel, as shown in panel (a) of Fig.~\ref{fig:Encoding_comp}. Taking $H_k=\sigma_z/2$ we have
\begin{equation}
    U_k(x)=e^{i\sigma_z x/2},\ \ \forall\ k=1,2,\dots, n_{a} \,.
\end{equation}
As all the generators $\sigma_z/2$ have eigenvalues $\pm1/2$ from Eq.~\eqref{Eq:Omega} we obtain the following achievable Fourier frequencies
\begin{equation}
    \Omega = \{ -n_{a},-n_{a}+1,\dots,0,\dots, n_{a}-1,n_{a}\}\, .
\end{equation}
This is identical to the frequencies of a QNN using Pauli re-uploading~\cite{schuld2021effect,caro2021encodingdependent}. The output function contains all the integer frequencies up to $n_{a}$ and only $(n_{a}+1)$ non-negative frequencies are achievable.

\paragraph*{Exponential Encoding.} 
In order to increase the quantity of frequencies, the exponential encoding has been proposed~\cite{shin2023exponential}. Instead of using the same generator for all accessible qubits, the exponential encoding employs different generators with exponentially scaled spacing of eigenvalues, as illustrated in panel (b) of Fig.~\ref{fig:Encoding_comp}. For the $k$'th accessible qubit, the generator is
\begin{equation}
    H_k=\frac{1}{2}\beta_k \sigma_z\,,
\end{equation}
where $\beta_k = 3^{k-1}$ and $k=1,2,\dots, n_{a}$. By Eq.~\eqref{Eq:Omega}, we obtain the following achievable frequencies:
\begin{align}\label{OmegaExp}
    \Omega =& \biggr\{-\frac{3^{n_{a}}-1}{2}, -\frac{3^{n_{a}}-1}{2}+1, \dots, 1,\\ &0, 1, \dots , \frac{3^{n_{a}}-1}{2}-1, \frac{3^{n_{a}}-1}{2} \biggr\}\,.
\end{align}
This set consists of all the integer frequencies up to $(3^{n_{a}}-1)/2$, totalling $1+(3^{n_{a}}-1)/2$ non-negative frequencies, which is exponential in the number of accessible qubits. 

Next, we demonstrate the predictive power of the two encoding schemes on an artificial classical data set. To generate the dataset we first define a function {$f(x) = \sum_{k=0}^{(3^6-1)/2} a_k \cos({kx}) + b_k\sin({kx})$} where the {$a_k,b_k$}'s are sampled uniformly in $[-1,1]$. We then consider the dataset $\{x_i,f(x_i)\}_{i=1}^{5000}$ where $x_i$'s are placed equidistantly in {$[0,2\pi]$}. {30 percent of this dataset is then randomly set aside as the test data and the model is trained on the rest.}

Two {QELMs} with $n_{a}= 6, n_{h} = 0 $ are trained, one with exponential encoding and another one with Pauli encoding. Both models have the same reservoir, namely a Random Rotation reservoir with $10$ layers of random single-qubit rotations followed by CNOT gates, and in both cases we measure the same subset of randomly chosen Pauli {strings} at the end {without any shot noise and considering infinite statistics}. The training is then performed using an Ordinary Linear Regression model. {We repeat this procedure for 20 random target functions to obtain robust results.}

We expect the exponential encoding to be able to perfectly reconstruct $f$ given enough training data and observables because its frequency spectrum $\Omega=\{0,\cdots, \frac{3^{n_{a}}-1}{2}\}$ completely covers that of $f$. On the other hand, we expect Pauli encoding to fail in approximating $f$ because it can only cover frequencies from $0$ to $6$. 
In Fig.~\ref{fig:comparison} {the mean square error (MSE)} of the two models is shown over the number of observables $M$. We see that even with an exponential number of observables, the Pauli encoding cannot perform well.

Instead, the exponential encoding reaches {achieves an error of zero, i.e., perfectly fits the target function} when we use {more observables than the number of} frequencies in $f${, which is indicated by a green vertical line in Fig.~\ref{fig:comparison}}. This simple example already demonstrates that the exponential encoding, given its superior expressivity, is able to outperform a more naive encoding. Crucially, however, the exponential encoding only leads to better results when there are enough observables. This is discussed in more detail in the following section.

\begin{figure}
    \centering
    
    \includegraphics[scale=0.6]{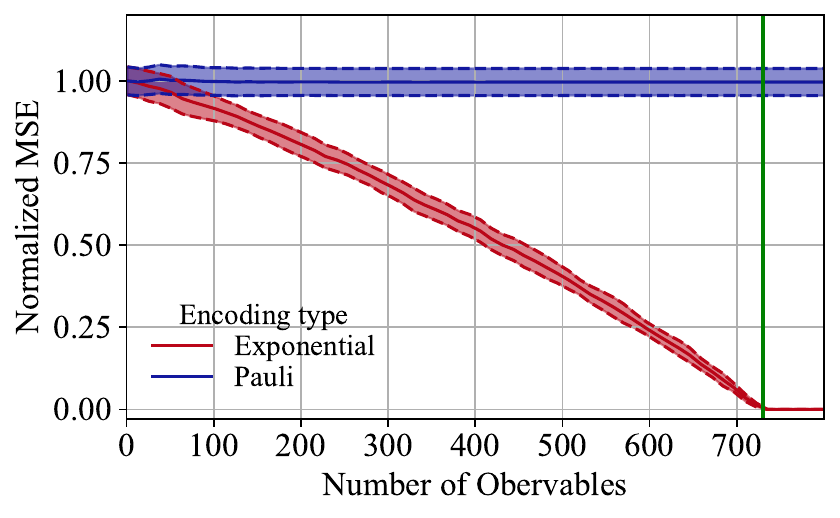}
    
    \caption{\textbf{Comparison between two encoding schemes.} The {mean} performance of two QELMs with exponential and Pauli encoding respectively as a function of the number of observables, {averaged over 20 randomly selected target functions.} {The Mean Squared Error (MSE) evaluated on test data is then normalized to the range $[0,1]$. The dashed lines show the standard deviation of the normalized MSE. The vertical green line shows the number of frequencies expressible by exponential encoding ($3^6 +1= 730$). This line separates two controllability regimes, which are defined and discussed in Sec.~\ref{Sec:Control}.}}
    \label{fig:comparison}
\end{figure}

\subsection{Controllability {over} Fourier coefficients and Expressivity}\label{Sec:Control}

In the previous section we discussed the range of possible frequencies for two different encoding strategies. In particular, we suggested that the exponential encoding strategy could be used to access exponentially many frequencies. However, this does not mean the QELM's prediction can always completely explore the space spanned by those Fourier modes. More specifically, in some cases the Fourier coefficients of the prediction {$b_{\w}$, which are linear combinations of the Fourier coefficients $a_{\w}$ of observables, }{are not linearly independent}. This means the trainable weights $\bm{\eta}$ cannot fully control all the Fourier modes. Here, {we introduce the concept of controllability (of trainable weights over Fourier coefficients of prediction), which represents the scope of QELM's predictive capabilities that can be adjusted by controlling the trainable weights}. We investigate this controllability and discuss the predictive power of QELM in terms of Fourier-expressivity defined below. {We remark that the controllability is strongly related to the Fourier-expressivity (defined below), however, while the Fourier-expressivity is formally defined at the level of prediction as a classical function of inputs, the concept controllability focuses on the training process of QELMs and measures the effect of optimising the parameters for linear regression.}

{To explicitly see the interplay between these two concepts in the context of QELMs with the time-evolution encoding,} {r}ecall Eq.~\eqref{Eq:sysout} and \eqref{Eq:MOutcome}, the prediction of QELM is given by
\begin{align}
    f_{\bm{\eta}}{(x)}&=\sum_{k=1}^M \eta_k \expval{O_{k}}_x \label{Eq:25}\\
    &=\sum_{\omega\in\Omega} b_{\w} e^{i\omega x}\,.\label{Eq:AutoF}
\end{align}
{Here, the Fourier-expressivity can be seen as the number of independent Fourier basis elements, i.e. the degree of freedom of $\vec{b}$, while the controllability can be seen as whether changing $\vec{\eta}$ can access any functions spanned by the Fourier basis.} 

{More formally,} {t}o assess the predictive power, we introduce the dimension of the prediction's function space as a measure of expressivity. This is defined as the cardinality of the smallest set of {any real} functions that form a basis for the model's prediction space.
\begin{definition}[Fourier-expressivity of QELM]\label{def:fourier-expressivity}
Given a QELM as defined above with model prediction $f_{\bm{\eta}}$, let $\mathcal{B}$ be a finite set of {any} real {basis} functions, such that for any trainable weight vector $\bm{\eta}$, $f_{\bm{\eta}}$ can be written as a linear combination of the elements of $\mathcal{B}$, i.e.
\begin{equation}\label{eq:Bdef}
    \{f_{\bm{\eta}}(\cdot): \bm{\eta} \in \mathbb{R}^M\}\subseteq span(\mathcal{B})\,.
\end{equation}
We define the Fourier-expressivity of $f_{\bm{\eta}}$ as the cardinality of the smallest possible $\mathcal{B}$, i.e., 
\begin{equation}
    \mathcal{F}\left[f_{\bm{\eta}}\right]:=\operatorname{min}|\mathcal{B}|\, ,
\end{equation}
where the minimization is over all $\mathcal{B}$ satisfying Eq.~\eqref{eq:Bdef}.
\end{definition}
On the one hand, the output $f_{\bm{\eta}}$ is a linear combination of $M$ outcomes{, as shown in Eq.~\eqref{Eq:25}}, which implies that the number of observables $M$ upper bounds the Fourier-expressivity $\mathcal{F}$. On the other hand, the prediction can be written as a Fourier series of $|\Omega|$ frequencies. Since the terms of different frequencies are orthogonal, $\mathcal{F}$ is also upper bounded by $|\Omega|$. The locality of the measurements also limit the Fourier-expressivity. Assuming $n_{o}$ qubits are measured, then any observable can be written as a linear combination of $4^{n_{o}}$ Pauli strings. This implies that no more information can be extracted than that obtained from measuring $4^{n_{o}}$ observables. We further remark that the number of frequencies $|\Omega|$ cannot be more than the number of possible distinct differences of eigenvalues which is upper bounded by $4^{n_{a}}$. 
These arguments lead us to the following Theorem, which is formally proved in App.~\ref{App:UpperBound}.

\begin{theorem}[Upper bound of QELM's Fourier-expressivity]\label{Observa1}
Consider a QELM, as defined above, with model prediction $f_{\bm{\eta}}$. Let $M$ be the number of observables, $\Omega$ be the set of achievable frequencies, and $n_{o}$ be the number of measured qubits, then
    \begin{align}\label{Claim1Eq}
        \mathcal{F}\left[f_{\bm{\eta}}\right] &\leq\operatorname{min}\{M, |\Omega|, 4^{n_{o}}\} \,,
    \end{align}
where $|\Omega|\leq 4^{n_{a}}$.
\end{theorem}

As the upper bound $4^{n_{o}}$ scales exponentially in number of measured qubits {$n_{o}$}, while the number of observables $M$ is in practice polynomially large in the system size {(i.e. the number of total qubits $n>n_{o}$)}, in what follows we will assume that {$n$ is polynomially large in $n_{o}$, and hence} $M$ is always less than $4^{n_{o}}$. {In other words, we assume that in practice the number of observables measured is always less than $4^{n_{o}}$, such that the information obtained from the measurement is never over-completed (see App.~\ref{App:UpperBound} for more details).} In this case, the upper bound reduces to 
\begin{equation}
        \mathcal{F}\left[f_{\bm{\eta}}\right] \leq\operatorname{min}\{M, |\Omega|\} \label{Eq:RankIneq} \, . 
\end{equation}

We are now left with two questions: \textit{first, when does the inequality in Eq.~\eqref{Eq:RankIneq} saturate? Second, what can we expect if $M\geq|\Omega|$ and if $M<|\Omega|$?}
For the first question, we claim that in practice this upper bound is always tight. We observe this numerically in Section~\ref{SecRoleR} for all the practical reservoirs, including integrable reservoirs, chaotic reservoirs, and {Haar random} reservoirs.
Next, we address the second question by comparing the following two regimess:\\
\paragraph*{Full control regime - $M\geqslant{}|\Omega|$.}
As long as we choose {a} non-trivial {set of} observables{, e.g. a set of random observables,} such that the functions $\expval{O_{k}}_x$ {of input $x$} with $ 1\leq k\leq M $ are {linearly} independent of each other, {i.e. any observable's expectation cannot be written as a linear combination of the orthers}, we have full control over the Fourier coefficients $b_\omega$. 

{We remark that whether those expectation values are linearly independent also depends on the reservoir. For reservoirs with rich dynamics, as discussed in Sec.~\ref{SecRoleR}, one could choose a set of orthogonal observables, e.g. a set of Pauli observable, to obtain the full control over Fourier coefficients. However, in general a set of orthogonal observables cannot guarantee the full control, since their projections onto the subspace of reservoir states after the reservoir evolution are not always linearly independent.} 

In {the full control} regime, we have
\be\label{Eq:Control1}
    \mathcal{F}\left[f_{\bm{\eta}}\right]=|\Omega|
\ee
and
\be
    \{f_{\bm{\eta}}(x): \bm{\eta} \in \mathbb{R}^M\}=\operatorname{span}(\{e^{i\omega x}:\omega\in \Omega\}) \,,
\ee
meaning that the model can learn any target function whose Fourier {frequencies} are within $\Omega$. 

{In the toy example studied in the previous section, the full control regime corresponds to the right hand side of the green line in  Fig.~\ref{fig:comparison}. As shown by the red curve, if the encoding strategy provides sufficient Fourier frequencies, the number of observables is larger than the number of frequencies, and the reservoir has rich dynamics, such that the expectation values are linearly independent, then the 0 test error can always be achieved.}
    
However, in practice, the number of {observables} implemented can only scale at most polynomially, i.e. $M \in \OC(\mathrm{poly}(n))$. {Otherwise, one will need exponentially many measurement shots to obtain precise enough results even from a single measurement involving exponentially many observables. On the other hand, the training of exponentially many weight parameters will be inefficient}. It follows that the number of frequencies scales at most polynomially, i.e. $|\Omega| \in \OC(\mathrm{poly}(n))$. We recall that the prediction of a QELM using Pauli re-uploading encoding scheme results in a polynomial number of frequencies. As we discuss in Section~\ref{SecSurro}, such a QELM can be classically surrogated, and hence does not yield quantum advantage. \\

\paragraph*{Partial control regime - $M<|\Omega|$.}

In this regime, the number of observables or the number of weights is less than the number of Fourier frequencies. Hence, the Fourier coefficients $b_\omega$ are not independent, and we can only partially control the model prediction. Suppose the $\expval{O_{k}}$'s are independent of each other, then we have
\begin{equation}\label{Eq:Control2}
    \mathcal{F}\left[f_{\bm{\eta}}\right]=M 
\end{equation}
and
\begin{equation}
    \{f_{\bm{\eta}}(\cdot): \bm{\eta} \in \mathbb{R}^M\}\subset\operatorname{span}(\{e^{i\omega x}:x\in \Omega\}) \, .
\end{equation}
That is, some Fourier series from the spectrum $\Omega$ cannot be predicted by the QELM as the degrees of freedom are limited by the number of observables. {As shown in Fig.~\ref{fig:comparison}, if the number of observables is less than the number of frequencies, than the test error is always non-zero for randomly generated target functions.}

This regime includes the QELMs using exponential encoding strategies such that $|\Omega| \in \OC(\mathrm{exp}(n))$, that are potentially classically non-surrogatable (as discussed later in Section~\ref{SecSurro}) and so might offer a quantum advantage. 

{Lastly, we note that even though dubbed ``Fourier''-expressivity the formal definition (i.e. Definition~\ref{def:fourier-expressivity}) is essentially defined independent of the choice of function basis. In this work, we focus on the Fourier basis to analyse the expressivity and the controllability since this is a natural choice as a result of the ``time-evolution'' embedding. In general, other decompositions using different function basis types could be considered especially for the data embeddings beyond the ``time-evolution'' approach. For example, one could potentially consider $\{ \rho(\xv) \}_{\xv \in \XC}$ is spanned by a set of observables $ \{ \tilde{O}_i \}_{i=1}^{D_o}$ i.e. for all $\xv\in \XC$, $\rho(\xv) = \sum_{i=1}^{D_o} a_i(\xv) \tilde{O}_i$ for some real coefficients $\{ a_i(\xv) \}_i$. Then, one could argue that the natural set of observables to be picked for QELMs is also $\{ \tilde{O}_i \}_i$, which lead to the model prediction of the form $f_{\vec{\eta}}(\xv) = \sum_{i=1}^{D_o} \eta_i \Tr[\tilde{O}_i \rho(\xv)] = \sum_{i,j} \eta_i a_j(\xv) \Tr[\tilde{O}_i \tilde{O}_j]$. However, further investigation of such decomposition regarding the expressivity and controllability is beyond scope of the work.}

\subsection{Role of reservoir}\label{SecRoleR}
\begin{figure*}
    \centering
    
    \includegraphics[width=\textwidth]{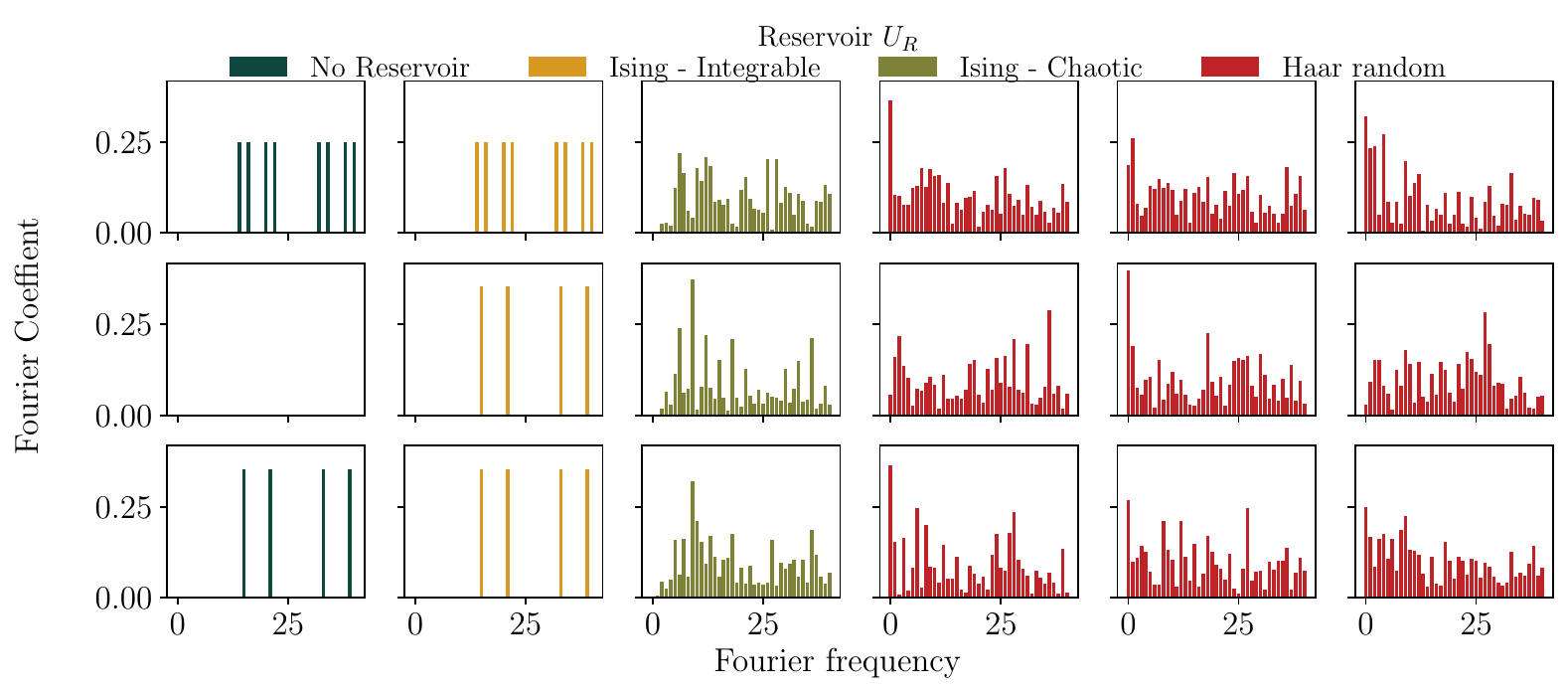}
    
    \caption{\textbf{Exemplary Fourier spectra of a QELM with $n_{a}=4$ and $n_{h}=4$.} The Fourier coefficients of three observables, $O_1=X^{\otimes 4}\otimes \IC$, $O_2=X\otimes Y \otimes Y\otimes Z \otimes \IC$ and $O_3=O_{rand}\otimes \IC$ (where $O_{rand}$ is a random Hermitian in the accessible space), resulting from six $reservoirs$: no reservoir, integrable Ising model ($J=-1,\ B_x=0,\ B_z=1$), chaotic Ising model ($J=-1,\ B_x=0.7,\ B_z=1.5$) and three different Haar random reservoirs. Each row (from top to bottom) corresponds to an observable (from $O_1$ to $O_3$ respectively), while each column corresponds to a reservoir unitary indicated on the top. {We remark that the motivation of showing plots for three different Haar random reservoirs is to avoid special cases.}}
    \label{fig:Spectrum}
\end{figure*}
In the regime of partial control, i.e. the regime where a quantum advantage might be possible, the number of observables $M$ is much smaller than the number of frequencies $|\Omega|$. Hence, the dimension of prediction's function space is $M$ and each readout $\expval{O_{k}}$ can be viewed a linearly independent basis function. In the previous section, we showed that exponentially many frequencies can be achieved. However, it might be the case that despite a wide range of frequencies being enabled by the encoding, the Fourier coefficients are sparse and close to zero. This may limit the Fourier-expressivity of the model and make it prone to classical simulation. In this part, we look into the Fourier spectra of these basis functions and study the richness of their Fourier modes --the proportion of non-zero Fourier coefficients. We find that the dynamics of the reservoir significantly affects the Fourier spectra and the richness. 

For concreteness, we consider a practically efficient measurement strategy, namely measuring Pauli strings, and assume that only the accessible qubits will be measured. 
We then compare four types of reservoir settings: no reservoir, the integrable Ising model, the chaotic Ising model, and a typical Haar random unitary.
In particular, we consider the 1D Ising model,
\begin{equation}\label{Eq:HIsing}
    H_{\operatorname{Ising}}=J\sum_{i=1}^{n-1}Z_i{}Z_{i+1}+B_z\sum_i^{n}Z_i+B_x\sum_i^n{}X_i\,,
\end{equation}
{where $X_i$ and $Z_i$ are Pauli X and Z operators for the \mbox{$i$-th qubit} respectively} and, following Ref.~\cite{geller2022quantum}, {we} use the parameters 
\begin{itemize}
    \item Integrable regime: $J=-1,\ B_x=0,\ B_z=1$
    \item Chaotic regime: $J=-1,\ B_x=0.7,\ B_z=1.5$\,.
\end{itemize}

In Fig.~\ref{fig:Spectrum} we provide examples of the Fourier spectrum induced by these reservoirs for three different Pauli observables. We observe that the integrable Ising model leads to a spectrum that only weakly deviates from using no reservoir. In contrast, the chaotic Ising model and {Haar random} reservoirs lead to more complex, anti-concentrated Fourier spectra basis functions. In Appendix~\ref{stats}, we analytically compute the expected spectrum for a Haar random reservoir and use this to explain the anti-concentration of the frequencies observed in Fig.~\ref{fig:Spectrum}. As the chaotic Ising model and a typical Haar random unitary are both models of scramblers it is intuitive that their effect on the observed frequencies, as observed here, is similar~\cite{roberts2017chaos, holmes2021barren, geller2022quantum}.

To quantify how many of the achievable frequencies --determined by encoding strategy-- have a non-zero coefficient, we plot the richness of Fourier modes with respect to the number of accessible qubits for different reservoirs. Given a reservoir, the richness is defined as the proportion of non-zero Fourier coefficients averaged over all the Pauli strings in accessible space, i.e.
\begin{equation}
    R:=\frac{1}{4^{n_{a}}}\sum_{k=1}^{4^{n_{a}}}{|\{\omega\in\Omega: a_\omega^{(k)}\neq0\}|}\,,
\end{equation}
where the Fourier coefficients $a_\omega^{(k)}$ correspond to the $k$'th element of $\big\{ \{I,X,Y,Z\}^{\otimes n_{a}}\otimes\IC \big\}$.

As shown in Fig.~\ref{fig:Richness}, without a reservoir and for a reservoir governed by the integrable Ising model, the richness $R$ of the output model decreases exponentially in the number of accessible qubits $n_{a}$. On the other hand, the richness of the models using {Haar random} and chaotic Ising reservoirs saturate at a constant value close to $1$. Notably, the chaotic Ising model leads to same richness $R$ as a Haar random reservoir. 

\begin{figure}
    \centering
    
    \includegraphics[width=0.45\textwidth]{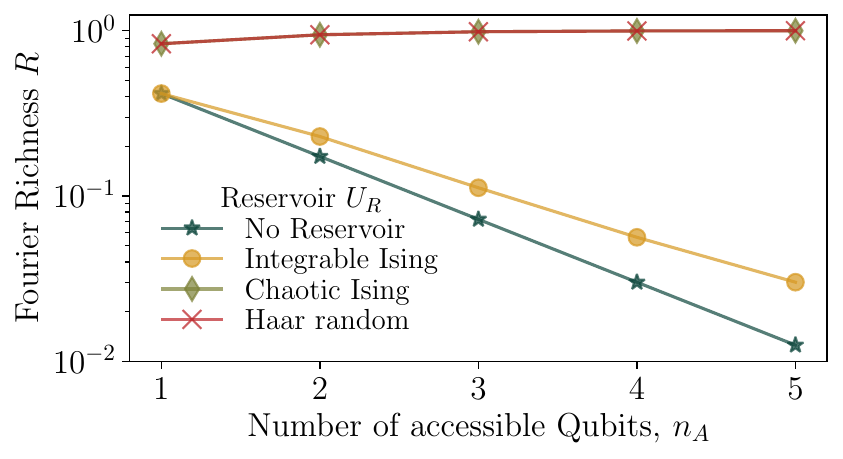}
    
    \caption{\textbf{Richness of Fourier modes.} The richness of Fourier modes, i.e. the proportion of non-zero Fourier coefficients averaged over all the Pauli observables in the accessible space $\{I,\ X,\ Y,\ Z\}^{\otimes n_{a}}\otimes \IC$, are plotted on a logarithmic scale against the number of accessible qubits $n_{a}$ (the number of hidden qubits is fixed to $n_{h}=4$). } 
    \label{fig:Richness}
\end{figure}

\subsection{Classical surrogates and the potential for a quantum advantage}\label{SecSurro}

So far we have focused on the Fourier-expressivity and controllability of QELMs. While a firm understanding of these aspects is crucial, for QELMs to have a practical usefulness one also need to consider their potential advantage over classical computers. In particular, studying the conditions under which QELMs can be dequantized gives us a practical guideline of where to look for a quantum advantage.

One popular approach to dequantize quantum machine learning models is via a classical surrogate. Under this approach, a purely classical model (known as a surrogate) is built to mimic a respective QML model. Intuitively, this can be achieved by exploiting the known Fourier structure of model predictions generated by these QML models. We now briefly discuss two methods for building classical surrogates that were originally developed for quantum neural networks and quantum kernels and discuss how they can be applied to QELMs.

The first method proposed in Ref.~\cite{schreiber2022classical} is to directly {approximate} {outputs of quantum models using} Fourier series on a classical computer{, for which the quantum computer might be needed to determine the Fourier coefficients beforehand}. This method can only be applied to quantum models whose outputs containing polynomially many Fourier frequencies. {It relies explicitly on the fact that with polynomial frequencies the form of the model prediction (i.e. the Fourier series) can be represented efficiently with classical computers.} To determine the associated Fourier coefficients, {one first determine the frequencies from the encoding scheme and then learn the coefficients corresponding to these frequencies by performing a regression task whose training data consists of random inputs and their respective outputs after fed into the quantum circuit. Then, one obtains a proper approximation of the quantum circuit and access to the quantum computer is no longer needed for further evaluation of new input data.}

Since the model predictions of the QELM take the same Fourier form as the QNNs, this surrogate method is directly applied to the QELM framework with classical data. {That is}, the QELM with a Pauli embedding (and hence only a polynomial number of frequencies) is classically surrogatable. 
{Particularly, given a quantum circuit with Pauli encoding (or, any embedding with polynomial frequencies), we already know that the output has a Fourier representation and we can also find the frequencies given the encoding gates, so the only unknown parameters are the Fourier coefficients.} If the reservoir and measurements are classically simulable, the model becomes surrogatable without any need for the quantum computer. On the other hand, for more complex reservoirs and measurements, a polynomial number of {queries} must first be taken on the quantum computer {to obtain the input-output paris for training and then the coefficients can be optimised by solving a system of linear equations. In this method, the number of data points for which we evaluate the circuit, is at least equal to the number of frequencies in the quantum model.}

In the second method, known as Random Fourier Features (RFF)~\cite{landman2022classically, sweke2023potential}, the key idea is to sample Fourier frequencies weighted with the associated coefficients to build a classical surrogate. This is in contrast to the first method where the whole Fourier series is used by the surrogate. This classical surrogate performs at least as well as the original quantum model on the training dataset, but the surrogate could perform worse on unseen data if the original quantum model has an inductive bias that better aligns with a target function. One strength of the RFF method is that it could even classically surrogate quantum models with exponentially many frequencies if Fourier coefficients concentrate over polynomial regions. However, if the coefficients are well spread throughout the whole spectrum range, it is inefficient to surrogate by RFF. Similarly, the RFF method also applies to the QELMs. 

Thus to achieve a quantum advantage with a QELM one should use an encoding with an exponential number of frequencies and reservoirs/observables that ensure the weights of these frequencies are anti-concentrated. In Fig.~\ref{fig:Spectrum} we show that a chaotic Ising or Haar random reservoir with Pauli observables give rise to such anti-concentration. We further support these numerics with an analytic calculation of variances for the case of the random reservoir {sampled from 2-design} in Appendix~\ref{stats}. These results show that the coefficients are well spread and thus suggest that some unitaries in the Haar random family are not surrogatable via the RFF method.  

However, a QELM being non-surrogatable by RFF, does not mean the model cannot be surrogated by other approaches. As discussed in Appendix~\ref{stats}, for a QELM with the Haar random reservoir there exists a trivial classical surrogate model {for the large number of qubits}, due to {the randomness from a Haar-random reservoir}. {In particular, we show that \textit{guessing zero} regardless of the input data is already a good classical surrogate for this scenario i.e. $f_{\rm Surr}(\vec{x}) = 0$.} 

We also demonstrate in the next section that the highly Haar-expressive reservoir {(i.e. a reservoir with a unitary sampled from an ensemble whose second moment closely resembles that of a Haar distribution, see Section~\ref{sec:expressibility_concentration})} leads to exponential concentration which causes a QELM to generalize poorly. Therefore, a QELM with a Haar random reservoir will be useless. 

More importantly, having no classical surrogate does not automatically guarantee a quantum advantage. Thus, one may ask the following key question: \textit{Is there any room and hope for a quantum advantage with QELMs?}

We provide a contrived example that a QELM can provably achieve an exponential quantum advantage over any classical model, {assuming the widely believed classical hardness of the discrete logarithm problem (DLP).} Our example is heavily based on Ref.~\cite{liu2021rigorous} where the authors prove an advantage of a quantum kernel method to solve a particular classification task. {Here, we consider an original discrete input space $\XC = \{1, 2, ..., p \}$ with a large prime number $p$ and the bits required to represent the prime number $n = \log_2(p)$. Given some integer $g$, it is a widely believed conjecture that $\phi_{\rm log}(\xv) = \log_g(\xv)$ cannot be computed efficiently (in $n$) with classical computers~\cite{odlyzko2000discrete}. Indeed, this is the core argument to show the learning advantage. That is, we consider the classification task where the input data $\xv$ in the original space $\XC_{\rm origin}$ appear to be randomly labelled but, after taking the logarithm function, they are linearly separated in this log space i.e., $\XC_{\rm log} = \{\phi_{\rm log}(\xv) = \log_g(\xv) \; ; \; \forall \xv \in \XC_{\rm origin} \}$. More precisely, given some $s \in \XC_{\rm random}$, the labels $y$ can be expressed as
\begin{align}
    y = \left\{\begin{array}{ll}+1,&\,\text{if}\,{{\rm{log}}}_{g}x\in [s,s+\frac{p-3}{2}],\\ -1,&\,\text{else}\,,\end{array}\right.
\end{align}

Ref.~\cite{liu2021rigorous} rigorously shows that the classical hardness of solving DLP also implies the hardness of solving this learning problem. On the other hand, quantum computers are well known to solve DLP (i.e., computing $\log_g(\xv)$) using Shor's algorithm~\cite{shor1994algorithms}, which is translated to solve this learning task. Particularly, there exist a quantum kernel-based model to solve the learning task with the model predictions of the form~\cite{liu2021rigorous} }
\begin{align}
f(\vec{x}) = \sum_{{i}=1}^{D} \alpha^*_i \Tr[\rho(\vec{x}_i) \rho(\vec{x})] \;,
\end{align}
where $\rho(\xv) = U_{\rm shor}(\vec{x}) \rho_0 U^{\dagger}_{\rm shor} (\vec{x})$ with {$U_{\rm shor}(\xv)$ being the data-embedding constructed using Shor's algorithm as a subroutine,} and $\alpha^*_i$ are the optimal coefficients obtained from solving the support vector machine.

For the condition to translate this into the QELM framework, we can equate both model predictions. That is, we have
\begin{align}
f(\vec{x}) = \sum_{k=1}^{D} \alpha^*_k\Tr[\rho(\vec{x}_k) \rho(\vec{x})] = \sum_{k=1}^M \eta^*_k \Tr [ O_{k} \rho(\vec{x}) ] \;.
\end{align}
One simple choice to achieve this is to take $\alpha^*_k  = \eta^*_k $, $M=D$, and $\rho(\vec{x}_k) =  O_{k} $ for all $ k=1,\cdots, D$. This is a highly contrived example, massaging the work of Ref.~\cite{liu2021rigorous} into the quantum extreme learning formalism, but it nonetheless shows that a quantum advantage can in theory be achieved with a QELM. 

\section{Exponential Concentration in QELMs}\label{trainability_analysis}
In this section, we discuss a fundamental limitation of QELMs. We demonstrate that the prediction of a QELM becomes input-independent when the observables' expectation values are exponentially concentrated in the system size. Similar to the exponential concentration in quantum kernel methods~\cite{thanasilp2022exponential}, this is shown for four different scenarios for QELMs: Haar-expressivity of the model's unitaries (Sec.~\ref{sec:expressibility_concentration}), highly-entangled states (Sec.~\ref{sec:entanglement_concentration}), global observables (Sec.~\ref{sec:global_concentration}) and noisy evolution (Sec.~\ref{sec:noise_concentration}). 

\subsection{Definitions and its consequences}\label{Sec:trainability}
In general, ELMs are guaranteed to be trainable due to the convexity of their loss landscape. This also extends to their quantum counterparts, where training is performed via a linear regression of the outcome observables. However, while in classical systems the exact readout can be obtained, in quantum systems physical quantities can only be estimated through repeated measurements, owing to the statistical nature of such systems. In order to efficiently estimate the expectation values of observables, then, one must be able to approximate its value enough precisely from at most polynomially many measurements.  In simple terms, we would like our QELMs to be able to recognise, through a precise estimation of the output observables, that a distinguishable input was fed into the model. Hence, the expectation values of the observables are expected to be \textit{input-dependent}.

However, the four sources of concentration, which we discuss in the following sections, can result in an arbitrary observable concentrating exponentially with the system size towards an input-independent value, independent of the input. We differentiate between \textit{probabilistic} and \textit{deterministic} exponential concentration.

\begin{definition}[Probabilistic exponential concentration]\label{def_prob_exp_concentration}
Consider a quantity $Q(\vec{\theta})$ that depends on some variable $\vec{\theta}$ which can be estimated from an $n$-qubit quantum computer as an expectation value of some observable. $Q(\vec{\theta})$ is said to probabilistically exponentially concentrate around an input-independent value $\mu$ if
\be\label{eq:def_exp_concentration}
\mathrm{Pr}_{\vec{\theta}}[|Q(\vec{\theta})-\mu|\geq\delta] \leq \dfrac{\beta}{\delta^2}\hspace{2pt},\;\; \beta \in \OC(1/b^{n}) \;,
\ee
for some $b > 1$. That is, the probability that $Q(\vec{\theta})$ deviates from $\mu$ by a small amount $\delta$ is exponentially small for all $\thv$.
\end{definition}
We note that in Eq.~\eqref{eq:def_exp_concentration}, by applying Chebyschev's inequality, $\beta$ can readily be associated to  $\Var_{\vec{\theta}}[Q(\vec{\theta})]$ with $\mu=\Ebb_{\vec{\theta}}[Q(\vec{\theta})]$. As a matter of fact, one often bounds the variance to assess the degree of concentration for a given observable.

\begin{definition}[Deterministic exponential concentration]\label{def_determ_exp_concentration}
Consider a quantity $Q(\vec{\theta})$ that depends on some variable $\vec{\theta}$ which can be estimated from an $n$-qubit quantum computer as an expectation value of some observable. $Q(\vec{\theta})$ is said to deterministically exponentially concentrate around an input-independent value $\mu$ if its distance from it is bounded by an exponentially small value
\be\label{eq:determ_exp_concentration}
|Q(\vec{\theta})-\mu| \leq \beta \hspace{2pt},\;\; \beta \in \OC(1/b^n) \;,
\ee
for some $b >1$. That is, $Q(\vec{\theta})$ does not deviate from $\mu$ for more than $\beta$ for all $\theta$.
\end{definition}

Definitions~\ref{def_prob_exp_concentration},~\ref{def_determ_exp_concentration} are rather general and can be applied to many instances. In the context of QELM, we will consider $\vec{\theta}=\xv$ or $\thv = U_R$ and $Q(\thv)=\expval{O}_{\xv}$, where $O$ can be any one of the observables out of the set used for training. In this case, the probability in Eq.~\eqref{eq:def_exp_concentration}, as well as variance and mean of $\expval{O}_{\xv}$, would be taken over the set of all inputs $\XC$ or over an ensemble from which $U_R$ is drawn.
In practice, each observable can only be estimated by performing $N_{\text{meas}}$ measurements. This leads to a finite-shot error $\epsilon \in \OC(1/\sqrt{N_{\text{meas}}})$. Now, if the observable is found to be exponentially concentrated, then Eq.~\eqref{eq:def_exp_concentration} indicates that in order to faithfully recognise $\expval{O}_{\xv}$ from $\mu$, one would need a resolution $\alpha \in \OC(1/b^n)$. This implies that the error generated by the finite sampling scheme should be less than the resolution, yielding $N_{\text{meas}} \geq b^{n/2}$. Such a dependence on the system size $n$ makes the observable estimation highly inefficient.

In the regime of exponential concentration, training the QELM will still be possible, as the model does not require training on the quantum device and is instead trained through classical convex optimization. However, by employing only a polynomial number of shots to approximate the observables, the model will not learn on input-dependent observables and so it will not be able to recognise the actual input from a completely random one. Hence, we remark that the model's trainability will not suffer from exponential concentration, but its generalization capabilities will. If the model is trained on input data that cannot be distinguished from random data, then the prediction on new data will in turn be random. For a more formal discussion of these ideas see Appendix~\ref{appendix_prob_theory_refresher}.

\subsection{Haar-expressivity-induced concentration}\label{sec:expressibility_concentration}
Loosely speaking, one can define the Haar-expressivity of an ensemble of unitaries of dimension $d$ as a measure of the extent to which the ensemble covers the unitary group $\UC(d)$.
Given a choice of unitary in the encoding strategy of QELM, we can straightforwardly define the unitary ensemble spanned by the encoding unitary

\be\label{eq:encoding_ensemble}
  \Ubb_{\xv} = \{U(\xv):\xv \in \XC\}\,.
\ee
To quantitatively measure the Haar-expressivity of $\Ubb_{\xv}$, one can define the second moment of the Haar distribution in dimension $d_{a}=2^{n_{a}}$ associated to the accessible space
\be
    \VC_{\text{Haar}}(\cdot) = \int_{\UC(d_{a})}\,d\mu(V) V^{\otimes 2}(\cdot)(V\ad)^{\otimes 2}\,,
\ee
{where unitary $V$ is sampled from Haar measure on the unitary group $\UC(d_{a})$ of dimension $d_a$.}

A $2$-design is an ensemble for which its distribution agrees with that of the Haar measure up to the second moment.
Hence, the following super-operator defines the distance between $\Ubb_{\xv}$ and a $2$-design ensemble
\be\label{expressibility_superop}
    \AC_{\Ubb_{\xv}}(\cdot) := \VC_{\text{Haar}}(\cdot) - \int_{\Ubb_{\xv}} \,dU(\xv) U(\xv)^{\otimes 2}(\cdot)(U(\xv)\ad)^{\otimes 2}\,. 
\ee
The Haar-expressivity of the ensemble $\Ubb_{\xv}$ is by how close it is to a $2$-design. So taking, for instance, the diamond norm of Eq.~\eqref{expressibility_superop} is a good measure of Haar-expressivity~\cite{holmes2021connecting}
\be
\varepsilon_{\diamond}^{\xv} := \norm{\AC_{\Ubb_{\xv}}}_{\diamond}\,,
\ee
where the superscript underlines the fact that in general the Haar-expressivity will depend on the input ensemble $\XC$. Alternatively, one could also use the $1$-norm $\varepsilon_1^{\xv}:=\norm{\AC_{\Ubb_{\xv}}}_1$, however for most results we will rely on the diamond norm.

One can show that the encoding unitary can potentially generate exponential concentration with respect to the number of accessible qubits $n_{a}$. Let us formalize this in the following Theorem.
\begin{theorem}[Encoding Haar-expressivity-induced concentration]\label{expressibility_thm1}
    Consider the expectation value of an arbitrary observable as defined in Eq.~\eqref{obs_def} and data-dependent unitary ensemble introduced in Eq.~\eqref{eq:encoding_ensemble}. Then we have that
    \be\label{Eq:Conc_EncodingExpress}
    \mathrm{Pr}_{\xv}[|\expval{O}_{\xv} - \Ebb_{\xv}[\expval{O}_{\xv}]| \geq \delta ] \leq \dfrac{G(\varepsilon_{\diamond}^{\xv})}{\delta^2}\,,
    \ee
    where $\Ebb_{\xv}[\cdot]$ is the expectation taken over $\Ubb_{\xv}$ and the term on the RHS is:
    \be
    G(\varepsilon_{\diamond}^{\xv}) = \dfrac{\big(\Tr[\widetilde{O}_{\Lambda}]^2+\Tr[\widetilde{O}_{\Lambda}^2]\big)}{2^{n_{a}}(2^{n_{a}}+1)} + \varepsilon_{\diamond}^{\xv}\norm{\Lambda(O)}_{\infty}^2\,,
    \ee  
    where $\widetilde{O}_\Lambda = \Tr[(\IC\otimes\ketbra{0}{0})\Lambda(O)]$ with $\Lambda(\cdot)=U_{R}^\dag (\cdot)U_{R}$.
\end{theorem}
A proof of this theorem is provided in Appendix~\ref{appendix_exp_concentration}. Note that, generally, for Pauli observables acting on the accessible space we have $\Tr[O^2] = 2^{n_{a}}$ and so the first term in the definition of $G(\varepsilon_{\diamond}^{\xv})$ is still exponentially small in the number of accessible qubits. Theorem~\ref{expressibility_thm1} indicates that if the encoding ensemble is exponentially close to a 2-design such that $\varepsilon_{\diamond}^{\xv}$ is exponentially small, then the observable will exponentially concentrate towards its expectation value.

We provide also numerics showcasing the phenomenon. We consider a $L$-layered ansatz 
\be\label{numerics_ansatz}
U(\xv) = \prod_{l=1}^LU_l(\xv)W_l
\ee
where $U_l(\xv)=U_{l,1}(\xv)\otimes\dots\otimes U_{l,n_{a}}(\xv)$ are separable unitaries made up of single-qubit rotations, and $W_l$ are a series of entangling gates between neighboring qubits. In Fig.~\ref{fig:encoding_expressibility_numerics}, we show that exponential concentration appears as the ansatz depth $L$ increases, and therefore approaches a $2$-design.
\begin{figure}
    \centering
    \includegraphics[scale=0.6]{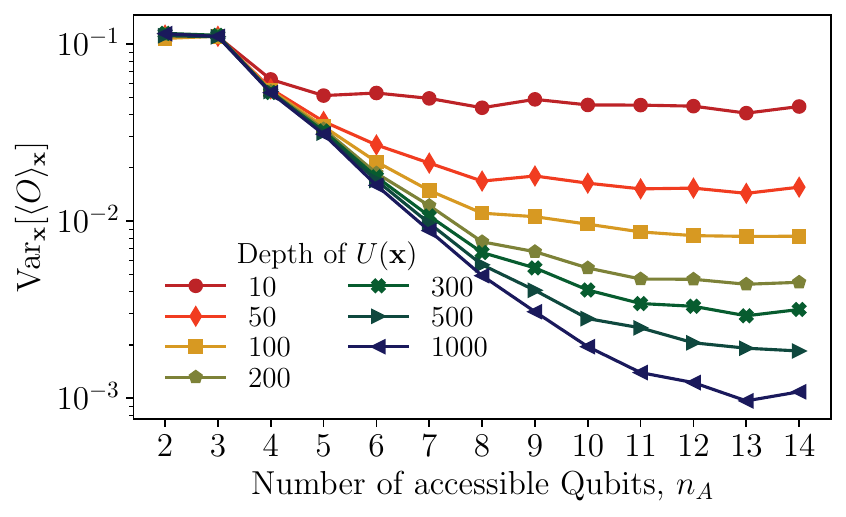}
    \caption{\textbf{Encoding Haar-expressivity-induced concentration.} Variance of the observable $Z_{1}Z_{2}$ over a set of inputs uniformly sampled from $[-\pi,\pi]$, as a function of the number of accessible qubits $n_{a}$ and for different depths of the encoding unitary as defined in Eq.~\eqref{numerics_ansatz}.}
    \label{fig:encoding_expressibility_numerics}
\end{figure}

Similarly to the encoding strategy, we can characterize the exponential concentration stemming from the reservoir unitary evolution. As a matter of fact, in Section~\ref{SecRoleR} we showed that more complex reservoir dynamics such as the chaotic Ising model and a Haar random unitary can provide a richer Fourier spectra, potentially providing quantum advantage for QELM. Now, we would like to investigate if the scrambling degree of a reservoir unitary poses a limitation in terms of exponential concentration.

Suppose that the reservoir unitary $U_{R}$ is drawn from an ensemble $\Ubb_{R}$. Then, 
similarly to Eq.~\eqref{expressibility_superop}, we can define its Haar-expressivity super-operator as the distance between the Haar measure over the unitary ensemble of dimension $d$ and the average over $\Ubb_{R}$:

\be
    \AC_{\Ubb_{R}}(\cdot) := \VC_{\text{Haar}}(\cdot) - \int_{\Ubb_{R}} \,dU_{R} U_{R}^{\otimes 2}(\cdot)(U_{R}\ad)^{\otimes 2} \,. 
\ee
Then the diamond norm $\varepsilon_{\diamond}^{R}:=\norm{\AC_{\Ubb_{R}}}_{\diamond}$ measures the Haar-expressivity of the reservoir. Depending on whether the ensemble $\Ubb_{R}$ forms a 2-design (such that $\varepsilon_{\diamond}^{R} = 0$), there may be a tendency for observables to concentrate around their average. Let us formalize this in a Theorem.

\begin{theorem}[Reservoir Haar-expressivity-induced concentration]\label{expressibility_thm2}
    Consider a reservoir evolution $U_R \in \Ubb_R$. Consider the expectation value of an arbitrary Hermitian observable as defined in Eq.~\eqref{obs_def}. Then we have that
    \be\label{Eq:Conc_Res_Express}
    \mathrm{Pr}_{U_R}[|\expval{O}_{\xv} - \Tr[O]/d| \geq \delta ] \leq \dfrac{G(\varepsilon_{\diamond}^{R})}{\delta^2}\,,
    \ee
    where 
    \be
    G(\varepsilon_{\diamond}) = \dfrac{\big(\Tr[O]^2+\Tr[O^2]\big)}{2^n(2^{n}+1)} + \varepsilon_{\diamond}^{R}\norm{O}_{\infty} \,.
    \ee
    {We remark that this concentration is independent of the choice of encoding.}
\end{theorem}
Compared to Theorem~\ref{expressibility_thm1}, this implies a stronger concentration given by the inverse exponential dependence on the total space $d=2^n$. Crucially, Theorem~\ref{expressibility_thm2} tells us that if the ensemble $\Ubb_{R}$ forms a $2$-design, the probability of picking a $U_R$ for which the expectation value (for any given input) differs from $\mu$ by more than $\delta$ is exponentially suppressed. Hence, we will need exponentially many shots to recognise the observable from $\mu$. In Fig.~\ref{fig:reservoir_expressibility_numerics}, numerics support this result by showing exponential concentration when the unitaries are deep enough. Also in the case of the reservoir, unitaries similar to the one in Eq.~\eqref{numerics_ansatz} are considered, with randomly chosen parameters instead of data inputs.

While these two theorems are given in terms of some classical input data, in general they can be extended to quantum data. Suppose that the goal is to learn some unknown quantum process~\cite{innocenti2023potential,Ghosh_PRL_CQQ_2019,Ghosh_NPJQI_QQC_2019,suprano2023experimental}, and that to do so we are given a set of states $\{\rho_{k}\}_{k=1}^{D}$, along with some set of expectation values associated to each state $\{\vec{y}^{(k)}\}_{k=1}^{D}$. We then perform a QELM learning task by letting the input states evolve with the hidden space through some unitary $U_{R}$, and perform measurements which we use as inputs of a simple linear regression. While in this case the input data cannot be associated with classical data, it is normally always true that  these states must have undergone some unitary evolution drawn from a unitary ensemble. Then, even if we do not know the unitary ensemble, Theorem \ref{expressibility_thm1} holds as long as the underlying unitary ensemble approaches a $2$-design. Furthermore, Theorem \ref{expressibility_thm2} made no assumptions regarding the input data, which implies that its validity easily extends also in this scenario.
\begin{figure}
    \centering
    \includegraphics[scale=0.6]{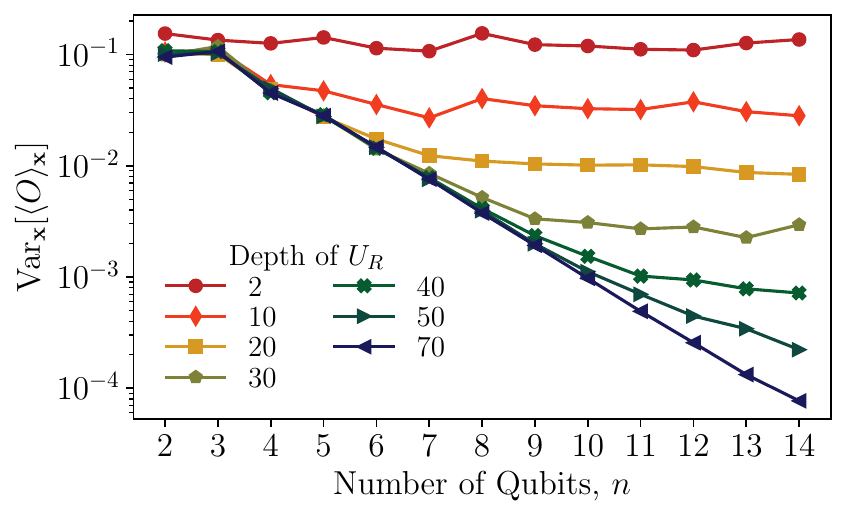}
    \caption{\textbf{Reservoir Haar-expressivity-induced concentration.} Variance of the observable $Z_{1}Z_{2}$ over a set of inputs uniformly sampled from $[-\pi,\pi]$, as a function of the number of total qubits $n=n_{a}+n_{h}$ and for different depths of the reservoir unitary similar to the definition in Eq.~\eqref{numerics_ansatz}.}
    \label{fig:reservoir_expressibility_numerics}
\end{figure}
\subsection{Entanglement-induced concentration}\label{sec:entanglement_concentration}
In this section we explore entanglement as another source of exponential concentration and extend the results from QNNs~\cite{marrero2020entanglement} to QELMs. In particular, we find that if the subspace onto which the observable acts non-trivially is highly entangled to the rest of the system, then we expect a deterministic concentration when computing the expectation value. Intuitively, tracing out qubits of highly entangled states leads to reduced states which are close to maximally mixed. Let us formalize this in the following Theorem.
\begin{theorem}[Entanglement-induced concentration]\label{entanglement_thm}
    Suppose an observable that acts non-trivially on a subspace $\HC_{k}$ of the entire Hilbert space $\HC$, so that $O = O_{k} \otimes \IC_{\bar{k}}$. Then, the concentration of its expectation value around the input-independent point $\mu = \Tr[O]/2^n$ will be bounded by
    \be\label{Eq:Conc_Ent}
    \left| \expval{O}_{\xv} - \mu\right| \leq \norm{O_k}_\infty\sqrt{2\ln{2}}S\biggr(\rhot_k(\xv)\biggr\lVert \dfrac{\IC_k}{2^k}\biggr)^{1/2} \,,
    \ee
    where $S(\cdot\lVert\cdot)$ is the relative entropy and $\rhot_k(\xv) = \Tr_{\bar{k}}(\rhot(\xv))$ represents the final reduced state on subspace $\HC_k$.
\end{theorem}
A proof of Theorem \ref{entanglement_thm} is provided in Appendix \ref{appendix_exp_concentration}. Crucially, for states that obey a volume-law scaling~\cite{bianchi2022volume}, i.e. $S(\rhot_k(\xv)\lVert \IC_k/2^{k}) \in \OC(2^{-n})$, then the difference between the observable and $\Tr[O]/2^n$ will always be exponentially small in the number of qubits. On the other hand, for states that obey an area-law scaling~\cite{eisert2010colloquium}, i.e. $S(\rhot_k(\xv)\lVert \IC_k/2^{k}) \in \OC(1)$, the bound in Theorem \ref{entanglement_thm} becomes loose and exponential concentration may be avoided. However, we remind the reader that even in the case where we have area-law scaling, this does not preclude concentration. Other sources of exponential concentration might still play a role and therefore they ought to be considered.

\subsection{Global-measurement-induced concentration}\label{sec:global_concentration}
We now consider the effect of global measurements. Indeed, previous work~\cite{cerezo2020cost, thanasilp2022exponential} has already shown that a global observable, which acts non-trivially on all $n$ qubits, can lead to an exponential concentration of its expectation value. In what follows, to distinguish this phenomenon from Haar-expressivity-induced and entanglement-induced concentration, we consider a separable initial state as well as encoding and reservoir unitaries in a tensor product form. This setting ensures no entanglement and low Haar-expressivity for both unitaries. By considering a projective measurement, we formalize this concept in the following theorem:
\begin{theorem}[Global measurement-induced concentration]\label{global_meas_thm}
Suppose an observable $O = \ketbra{m}{m}$, i.e. a projective measurement onto state $\ket{m}=\ket{m_1\dots m_n}$. Consider an initial separable state $\rho_0=\bigotimes_{k=1}^n \rho_0^{(k)}$. Suppose that the encoding unitary creates no entanglement, so that: $U(\xv)=\bigotimes_{k=1}^{n_{a}} U_k(x_k)$ where $x_k$ is an input component of $\xv$, and all are uniformly sampled from $[-\pi, \pi]$. Similarly, assume furthermore that the reservoir has the form $U_{R} =\bigotimes_{k=1}^n V_k$. Then we have
\be\label{Eq:Conc_Gl_Meas}
\mathrm{Pr}_{\xv}[|\expval{O}_{\xv} - \Ebb_{\xv}[\expval{O}_{\xv}]\geq \delta] \leq \dfrac{\alpha\prod_{k=1}^{n_{a}} G_k(\varepsilon_{\Ubb_{x_k}})}{\delta^2}\,,
\ee
where $\varepsilon_{\Ubb_{x_k}}= \norm{\AC_{\Ubb_{x_k}}\biggr(\rho_0^{(k)^{\otimes 2}}\biggr)}_1$ is the Haar-expressivity measure of the local unitary $U_k(x_k)$ and $\alpha= \prod_{j=n_{a}+1}^{n}\left|\braket{0|V_j}{m_j}\right|^4$. The term $G_k(\varepsilon_{\Ubb_{x_k}})$ is given by

\be
G_k(\varepsilon_{\Ubb_{x_k}}) = \biggr(\frac{1}{3}+\varepsilon_{\Ubb_{x_k}}\biggr(\varepsilon_{\Ubb_{x_k}}+\sqrt{\frac{4}{3}}\biggr)\biggr)^{1/2}\,.
\ee
\end{theorem}
A proof of the Theorem is given in Appendix \ref{appendix_exp_concentration}. This result suggests that using global measurements, even in the case of simple encoding and/or reservoir unitaries, is not advised as it can lead to exponential concentration of the observable. In particular, Theorem \ref{global_meas_thm} points out that for global measurements, it is the Haar-expressivity of single-qubit unitaries that dictates the exponential concentration. As shown in the plot provided in Fig.~\ref{fig:global_meas_numerics}, low-depth single-qubit unitaries already cause exponential concentration. 

\begin{figure}
    \centering
    \includegraphics[scale=0.6]{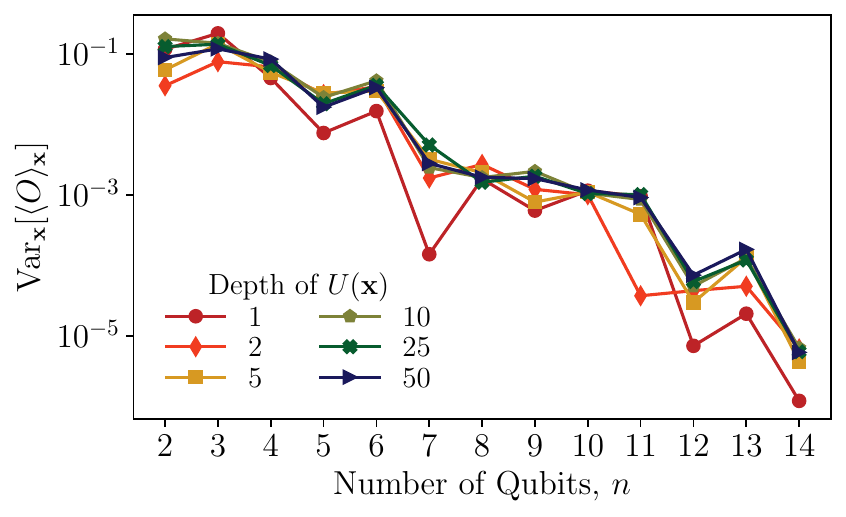}
    \caption{\textbf{Global measurement-induced concentration} Variance of a global observable as a function of the system size $n$, varying the depth of the encoding unitary. The variance is taken over an ensemble of inputs uniformly sampled from $[-\pi,\pi]$.}
    \label{fig:global_meas_numerics}
\end{figure}

\subsection{Noise-induced concentration}\label{sec:noise_concentration}
Noise is the last source of exponential concentration which can hinder the performance of QELM. As of today, {Noisy Intermediate-Scale Quantum (NISQ)} devices are in fact prone to making errors caused by different phenomena, be it decoherence, dephasing or other. These are one of the major obstacles yet to be overcome in order to achieve full control of a quantum device. On top of this, Ref.~\cite{wang2020noise} has already shown the existence of noise-induced {barren plateaus}. This is not limited to gradient estimation, but also cost estimation in general~\cite{thanasilp2022exponential}. In this section, we show that Pauli noise, a quite general noise model, leads to exponential concentration in a QELM model.

\begin{figure}[htbp]
    \centering
    \includegraphics[scale=0.6]{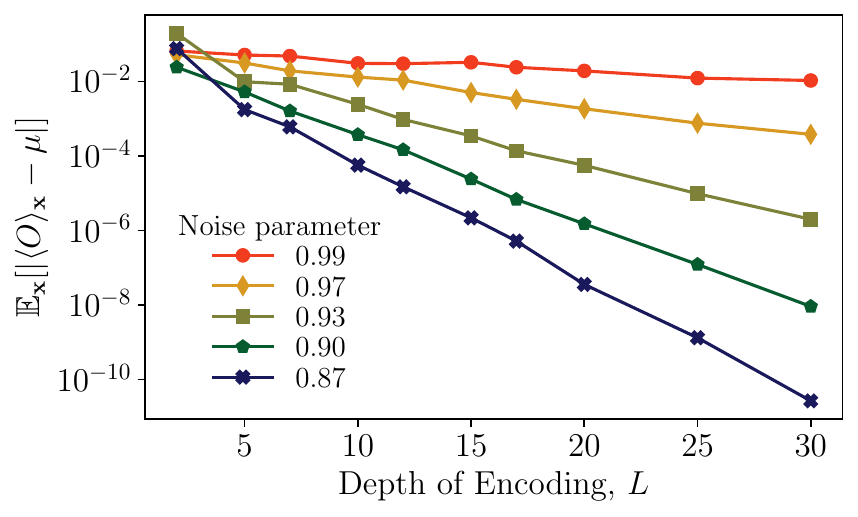}
    \caption{\textbf{Noise-induced concentration} Distance of $\expval{Z_1}_{\xv}$ from $\mu = 0$ as a function of the number of encoding layers, and for different noise parameters $p$. Here we consider $n_{a} = 7,\ n_{h}=1$. The distance is averaged over inputs uniformly sampled from $[-\pi,\pi]$. Exponential concentration is seen in all cases, with increasing slope as the noise in the system increases.}
    \label{fig:noisy_numerics}
\end{figure}
Consider a $L$-layered encoding, subject to Pauli noise. Then the embedded state will be
\be\label{noisy_encoding}
\rho(\xv) = \NC \circ \UC_{L}(\xv_L) \circ \NC \circ...\circ \NC \circ \UC(\xv_1)\circ \NC(\rho_0) \,,
\ee
where $\UC(\xv_i)(\cdot) = U(\xv_i)(\cdot)U(\xv_i)\ad$ and $\{\xv_i\}_{i=1}^L$ are some functions of $\xv$. Moreover, each noise term $\NC = \NC_1 \otimes...\otimes \NC_{n_{a}}$ acts on each single qubit in the following way
\be
\NC_j(\sigma)=q_\sigma \sigma,\hspace{1cm} \forall j= 1,...,n\hspace{1cm}\sigma \in\{X,Y,Z\} \,.
\ee
Finally let us define the characteristic noise parameter $q= \max \{|q_X|, |q_Y|, |q_Z|\}$. Then, the following theorem holds true.
\begin{theorem}[Noise-induced concentration]\label{noisy_encoding_thm}
    Consider the $L$-layered encoding as defined in  Eq.~\eqref{noisy_encoding} with $q < 1$. Then, the concentration around an input-independent point of the expectation value of an observable as defined in Eq.~\eqref{obs_def} can be bounded as
    \be\label{eq:noisy_concentration}
    \biggr|\expval{O}_{\xv} - \dfrac{\Tr[\widetilde{O}_{\Lambda}]}{2^{n_{a}}}\biggr| \leq \norm{\Lambda(O)}_{\infty}\biggr(\dfrac{1}{b}q^{b(L+1)}S_2\biggr(\rho_0\biggr\lVert\frac{\IC}{2^{n_{a}}}\biggr)\biggr)^{\frac{1}{2}} \,,
    \ee
    where $\widetilde{O}_{\Lambda} = \Tr[\Lambda(O)(\IC\otimes \ketbra{0}{0})]$, $b = 1/(2\ln{2})$ and $S_2(\cdot\lVert\cdot)$ denotes the sandwiched $2$-Rényi relative entropy.
\end{theorem}
The inequality in Eq.~\eqref{eq:noisy_concentration} points out that if $L \in \OC(n_{a})$, then the deviation of the observable from an input-independent value will be exponentially vanishing with respect to the dimension of the accessible space. Numerics in Fig.~\ref{fig:noisy_numerics} demonstrate the characteristic exponential dependence on the depth of the encoding for different noise parameters.
\section{Discussion}
We have conducted an analytical and numerical study on the interplay between the classical data encoding and exponential concentration phenomena on the expressivity and prediction capabilities of QELMs. The quantum substrate of a QELM framework here consists of three components: a classical data encoder, a quantum reservoir, and a measurement readout. The unitary encoding strategy determines achievable Fourier frequencies. The quantum reservoir acts as a feature map induced by many-body quantum dynamics. 

The measurement readout consists of several observables, each of which selects a function from the class of classical mappings defined by the reservoir. The net result of the reservoir and measurement steps is a set of basis functions. The model is then trained by fitting a linear combination of these functions to the training data.

Based on this framework, we started by discussing the performance of a QELM from two perspectives: its Fourier-expressivity and whether the model could be replaced by a classical surrogate. In particular, we found that the Fourier-expressivity of a QELM is upper bounded by the number of frequencies resulting from the encoding, the number of observables, and the globality of observables. Hence, the encoding and measurement jointly determine the Fourier-expressivity. We recall that a QELM model which yields a Fourier-concentrated prediction can be efficiently simulated on a classical computer~\cite{schreiber2022classical,landman2022classically,sweke2023potential} and so using an encoding strategy that results in exponentially many achievable frequencies is a necessary condition to achieve quantum advantage. On this basis we discourage the use of Pauli encodings.

Our study highlights that the reservoir dynamics has a significant effect on the Fourier spectra of a QELM model. We compared the Fourier spectra resulting from four reservoir settings: no reservoir, integrable Ising reservoir, chaotic Ising reservoir, and Haar random reservoir and defined the richness of Fourier modes as the proportion of non-zero Fourier coefficients averaged over Pauli strings. We found that the richness decreases exponentially with the system size for the first two reservoir settings, and saturates for the last two reservoirs. As adding exponentially many Fourier modes is classically inefficient, selecting a quantum reservoir with rich dynamics may lead to QELMs that can outperform classical surrogates. 

However, there is a balance to be struck in this regard. While chaotic Ising and Haar random reservoirs (i.e. scrambling reservoirs) lead to a rich set of basis functions, such reservoirs are also prone to exponential concentration. This washes out the effect of the input data, hindering the model's generalization power and thus limiting the scalability of such QELMs. Crucially, this implies that there exists a trade-off between Fourier-expressivity and generalizability: The highly expressive encoding unitaries and global measurements enhance the Fourier-expressivity of the prediction, but lead to exponential concentration, making the model insensitive to input data. More generally, we identified four sources of exponential concentration: the Haar-expressivity of both encoding and reservoir unitaries, entanglement, global measurement, and noise.

Our work also provides insights for understanding the limitations and advantages of using quantum computational substrate in the context of (classical) ELMs. Compared with a back-propagation trained feed-forward network, one major disadvantage of ELMs is the need for a very large fan-out to achieve a comparable performance. For instance, to learn the MNIST-database of handwritten digits with dimension $784$, it requires a fan-out of 20 and 15680 hidden neurons ~\cite{de2015comparison}. The number of hidden neurons somewhat corresponds to the number of observables in QELMs, which cannot be efficiently increased by using a quantum computational substrate. Hence, for such tasks the number of {measurement shots} needs to be of the same magnitude as the number of data features and this limitation may still restrict the performance of QELMs. Nevertheless, in this work we showed that the quantum computational substrate can provide additional activation functions which are potentially inefficient for a classical computer, in particular Fourier series consisting of exponentially many modes. Consequently, QELMs create opportunities to solve learning tasks, for which common activation functions are not suited.

We remark that the role of hidden qubits has not been discussed in this work. More specifically, the Fourier-expressivity is independent of number of hidden qubits $n_{h}$. Further work is hence required to ascertain the effect of $n_{h}$ on the QELM's prediction. In particular, it would be valuable to explore if a larger hidden space would lead to some general improvements of QELM, or it would just amount to a different inductive bias, such that $n_{h}$ should be purely viewed as a hyperparameter. It would also be interesting to compare performance of other reservoir types and of reservoirs with different $n_{h}$ on a given task and investigate a strategy of optimizing $n_{h}$.

Given the close relationship between QELMs and Quantum Reservoir Computing (QRC), which employs a quantum reservoir for time-series prediction, our analysis can be extended to examine the expressivity and the exponential concentration in QRC. This understanding is key to designing QRC that forecast dynamical systems based on classical data encoding and measurement readout \cite{fujii2017harnessing, nakajima_PRApp_spatialmultiplexing_2019, Kutvonen2020optimizing,kubota2022quantum,fry2023optimizing, wudarski2023hybrid}. As the measurement readout can introduce issues of exponential concentration, the QRC prediction on unseen time-series data might also become input-agnostic when the size of a quantum reservoir grows large. Our results indicate that adopting QRC that outputs quantum data \cite{Ghosh_NPJQI_QQC_2019, Ghosh_PRL_CQQ_2019}, or its variant such as Quantum Next Generation Reservoir Computing that forecasts future quantum states \cite{sornsaeng2023quantum} {and the QRC with continuous input-series~\cite{senanian2024microwave}}, may circumvent the scalability issues imposed by exponential concentration by working directly with structured quantum data. However, we leave the detailed analysis of quantum reservoir systems for time-series prediction for future work.

\section{Acknowledgements}
TC acknowledges valuable discussions with Rodrigo Martínez-Peña and the funding support from the NSRF via the Program Management Unit for Human Resources \& Institutional Development, Research and Innovation [grant number B39G670016], as well as from Thailand Science Research and Innovation Fund Chulalongkorn University (IND66230005). ST and ZH acknowledge support from the Sandoz Family Foundation-Monique de Meuron program for Academic Promotion. WX acknowledges support from the NCCR SPIN, funded by the Swiss National Science Foundation.

\bibliography{quantum.bib,otherbib.bib}

\clearpage
\onecolumngrid
\setcounter{theorem}{0}
\setcounter{proposition}{0}
\setcounter{corollary}{0}

\appendix
\vspace{0.5in}
\begin{center}
	{\Large \bf Appendix} 
\end{center}

\section{Fourier decomposition analysis}\label{AppFourierDecomp}

\subsection{Fourier decomposition for vector input} \label{AppMultiFourier}
In this appendix, we generalize the case of scalar input addressed in the main text and show that the prediction of a QELM with vector input $\xv\in\mathbb{R}^{d_x}$ can be expressed as a multivariate Fourier series. Let the initial state of the accessible qubits before encoding be $\rho_0 = \sum_{i,j} \alpha_{ij}\ket{i}\bra{j}$. We consider an encoding unitary of the form\\
\begin{equation}
    U(\xv)=e^{iH_1 x_1}\otimes{}e^{iH_2 x_2}\otimes\cdots\otimes{}e^{iH_{d_x}x_{d_x}}\,,
\end{equation}
where the generator $H_l$ of dimension $\zeta_l$ corresponding to the $l$-th component of $\xv$ has eigenvalues $\{\lambda^{(l)}_1, \lambda^{(l)}_2, \dots, \lambda^{(l)}_{\zeta_l}\}$ for $l=1, 2,\dots, d_x$.
Then, the state of accessible qubits after encoding is\\
\begin{align}
     \rho(\xv) &= U(\xv) \rho_0 U(\xv)^\dag\\
     &= \sum_{i,j} \left(\prod_{l=1}^{d_x} e^{(\lambda^{(l)}_{\nu{}(i;l)}-\lambda^{(l)}_{\nu{}(j;l)})x_l}\right) \alpha_{ij}\ket{i}\bra{j}\\
     &= \sum_{i,j} e^{i(\bm{\lambda}_{i}-\bm{\lambda}_{j})\cdot\xv} \alpha_{ij}\ket{i}\bra{j}\,.
\end{align}
In the second line, the function $\nu{}(i;l)$ gives the eigenstate index of the subspace, within which the input component $x_l$ is encoded, when the qudit representing all accessible qubits is in state $\ket{i}$. In the third line, the vector of eigenvalues is defined as $\bm{\lambda}_i:=\left(\lambda^{(1)}_{\nu(i;1)},\ \lambda^{(2)}_{\nu(i;2)},\cdots,\ \lambda^{(d_x)}_{\nu(i;d_x)}\right)^\intercal$.

Compared with Eq.~\eqref{Eq:StateAfterEnc1d} for scalar input cases, we observe that, the state after encoding a vector input only differs in the Fourier basis. Instead of $e^{i(\lambda_{j}-\lambda_{i})x}$ for scalar input data, we have here $e^{i(\bm{\lambda}_{j}-\bm{\lambda}_{i})\cdot\xv}$. Analogously, by Eq.~\eqref{Eq:ReservoirEvo} and \eqref{obs_def}, we obtain the expectation value of readout observable
\begin{equation}
    \expval{O}_{x}= \sum_{\bm{\omega}\in\Omega} a_{\bm{\omega}}e^{i\bm{\omega}\cdot\xv}\,,
\end{equation}
where the set of vectorial Fourier frequencies are given by
\begin{equation}
    \Omega=\{\bm{\lambda}_j-\bm{\lambda}_i: i,j = 1, 2,\dots ,2^{n_{a}}\}\,, 
\end{equation}
and the corresponding Fourier coefficients are
\begin{equation}
    a_{\bm{\omega}}=\sum_{i, j \mathrm{\,s. t.\,} \bm{\lambda}_j-\bm{\lambda}_i =\bm{\omega}} \alpha_{ij}\bra{j,0}U_R^\dag{}O U_R \ket{i,0}\,.
\end{equation}
Consequently, by Eq.~\eqref{Eq:sysout}, the prediction of a QELM with a vector input can be expressed as a multivariate Fourier series
\begin{align}
    f_{\bm{\eta}}
    &=\sum_{\bm{\omega}\in\Omega} b_{\bm{\w}} e^{i\bm{\omega}\cdot \xv}\,,
\end{align}
where $b_{\bm{\w}}=\sum_{k=1}^M \eta_k  a_{\bm{\w}}^{(k)}$.

\subsection{Upper bound of Fourier-expressivity}\label{App:UpperBound}
In this Appendix, we rigorously prove the upper bound of Fourier-expressivity given by Theorem~\ref{Observa1} in main text.
\begin{theorem}[Upper bound of QELM's Fourier-expressivity]
Consider a QELM as defined above with model prediction $f_{\bm{\eta}}$. Let $M$ be the number of observables, $\Omega$ be the set of achievable frequencies, and $n_{o}$ be the number of measured qubits, then
    \begin{align}
        \mathcal{F}\left[f_{\bm{\eta}}\right] &\leq\operatorname{min}\{M, |\Omega|, 4^{n_{o}}\} \,,
    \end{align}
where $|\Omega|\leq 4^{n_{a}}$.
\end{theorem}
\begin{proof}
We recall (Eq.~\eqref{Eq:MOutcome} and \eqref{Eq:AutoF}) the measurement outcome can be written as 
\begin{equation}\label{A10}
    \expval{O_{k}}_x = \sum_{\omega\in\Omega} a^{(k)}_{\omega}e^{i\omega{x}}\,
\end{equation}
and the prediction of the QELM is
\begin{align}
    f_{\bm{\eta}}&=\sum_{k=1}^M \eta_k \expval{O_{k}}_x\label{A11}\\
    &=\sum_{\omega\in\Omega} b_{\w} e^{i\omega x}\,.
\end{align} 
As both consist of a linear combination of terms, it is convenient to use a matrix representation. Recall that we write the vector of trainable weights as $\bm{\eta}=[\eta_1, \cdots \eta_M]^\intercal$ and the prediction's Fourier coefficients as $\bm{b}=[b_{\omega_1}, \cdots b_{\omega_{|\Omega|}}]^\intercal$. We further define $\bm{A}\in\mathbb{C}^{|\Omega|\cross{}M}$ to be the matrix consisting of the Fourier coefficients of each of the $M$ observables as its columns, i.e.,
\be
    \bm{A} = 
    \begin{pmatrix}
        a_{\w_1}^{(1)} & a_{\w_1}^{(2)} & \cdots & a_{\w_1}^{(M)} \\
        a_{\w_2}^{(1)} & a_{\w_2}^{(2)} & \cdots & a_{\w_2}^{(M)} \\
        \vdots & \vdots & \ddots & \vdots \\
        a_{\w_{|\Omega|}}^{(1)} & a_{\w_{|\Omega|}}^{(2)} & \cdots & a_{\w_{|\Omega|}}^{(M)}
    \end{pmatrix} \,.
\ee
From Eq.~\eqref{A10} and \eqref{A11}, we can write
\be\label{Eq:AB}
    \bm{b}=\bm{A}\cdot \bm{\eta}  \,.
\ee
We observe from Eq.~\eqref{Eq:AutoF} that the prediction is a linear combination of Fourier single mode functions which are orthogonal. Hence, the Fourier-expressivity $\mathcal{F}$ is exactly the degrees of freedom of the coefficients $\{b_{\w_1},\cdots b_{\w_{|\Omega|}}\}$, which are components of $\bm{b}$. Moreover, in Eq.~\eqref{Eq:AB} each component of $\bm{b}$ is a linear combination of column vectors of matrix $\bm{A}$ with trainable weights $\vec{\eta}$. Hence, the degrees of freedom of $\bm{b}$'s components equals the rank of $\bm{A}$ and we obtain
\be\label{Eq:F=rank}
    \mathcal{F}\left[f_{\bm{\eta}}\right]=\operatorname{rank}(A) \, .
\ee

Since the elements of $\bm{A}$ are controlled by both the reservoir unitary and the measurement, we next decompose $\bm{A}$ into a product of two matrices, which represent the dynamics of reservoir and the measurement, respectively. Recall that any Hermitian observable can be expanded in Pauli basis. We assume $n_{o}$ qubits are measured ($n_{o}\leq n_{a}+n_{h}$), and let $\{{P_1}, {P_2}, \dots, {P}_{4^{n_{o}}}\}$ denote the Pauli basis,
\begin{equation}
    \Bigl\{{P_r}=\left(\bigotimes_{i=1}^{n_{o}}\sigma_i\right)\otimes\IC: \sigma_i\in\{\sigma_x,\sigma_y,\sigma_z,I\}\Bigr\} \,,
\end{equation}
whose elements form a complete operator basis. Therefore, any Hermitian observable can be written as a linear combination of these Pauli strings, i.e.
\begin{equation}\label{Eq:ICDecompose}
    O_{k}=\sum_{r=1}^{4^{n_{o}}}\gamma_r^{(k)}{P_r} \,,
\end{equation}
where $\gamma_r^{(k)}\in\mathbb{R}$. We denote the Fourier coefficients associated to ${P_r}$ as 
\begin{equation}
    p^{(r)}_{\omega}:=\sum_{i, j \mathrm{\,s. t.\,} \lambda_j - \lambda_i =\omega} \alpha_{ij}\bra{j,0}U_R^\dag{}{P_r} U_R \ket{i,0}\,.
\end{equation}
From Eq.~\eqref{Eq:FParam} we observe that the Fourier coefficients $a_{\omega}^{(k)}$ are linear in $O_{k}$. Thus, by substituting Eq.~\eqref{Eq:ICDecompose} into Eq.~\eqref{Eq:FParam}, we obtain
\begin{equation}
    a_{\omega}^{(k)}=\sum_{r=1}^{4^{n_{o}}}\gamma_r^{(k)} p^{(r)}_{\omega}\,.
\end{equation}\\
We now write this equation into matrix form again
\begin{equation}\label{Eq:APG}
    \bm{A}=\bm{P}\,\bm{\Gamma},
\end{equation}
where matrices $\bm{P}\in \mathbb{R}^{|\Omega|\cross 4^{n_{o}}}$ and $\bm{\Gamma}\in \mathbb{R}^{4^{n_{o}}\cross M} $ are respectively defined as
\begin{equation}
    \bm{P} := 
    \begin{pmatrix}
        p^{(1)}_{\omega_1} & p^{(2)}_{\omega_1} & \cdots & p^{(4^{n_{o}})}_{\omega_1} \\
        p^{(1)}_{\omega_2} & p^{(2)}_{\omega_2} & \cdots & p^{(4^{n_{o}})}_{\omega_2} \\
        \vdots & \vdots & \ddots & \vdots \\
        p^{(1)}_{\omega_{|\Omega|}} & p^{(2)}_{\omega_{|\Omega|}} & \cdots & p^{(4^{n_{o}})}_{\omega_{|\Omega|}}
    \end{pmatrix}
     \,
\end{equation}
and
\begin{equation}
    \bm{\Gamma} = 
    \begin{pmatrix}
        \gamma_{1}^{(1)} & \gamma_{1}^{(2)} & \cdots & \gamma_{1}^{(M)} \\
        \gamma_{2}^{(1)} & \gamma_{2}^{(2)} & \cdots & \gamma_{2}^{(M)} \\
        \vdots & \vdots & \ddots & \vdots \\
        \gamma_{4^{n_{o}}}^{(1)} & \gamma_{4^{n_{o}}}^{(2)} & \cdots & \gamma_{4^{n_{o}}}^{(M)}
    \end{pmatrix}
    \,.
\end{equation}
Finally, from Eq.~\eqref{Eq:F=rank} and \eqref{Eq:APG}, we obtain the following upper bound of the Fourier-expressivity in terms of the rank of matrix $\bm{A}$.
\begin{align}
        \mathcal{F}\left[f_{\bm{\eta}}\right]&=\operatorname{rank}(\bm{A})\\
        &\leqslant \operatorname{min}\{\operatorname{rank}(\bm{P}),\operatorname{rank}(\bm{\Gamma})\}\\
        &= \operatorname{min}\{M, |\Omega|, 4^{n_{o}}\} \,.
\end{align}
We remark that $|\Omega|<4^{n_{a}}$ since, by Eq.~\eqref{Eq:FreqSet}, the achievable frequencies are differences of eigenvalues. For $n_{a}$ accessible qubits, the encoding unitary has $2^{n_{a}}$ eigenvalues which provide at most $4^{n_{a}}$ distinct differences of the eigenvalues.

\end{proof}

\section{QELM with Haar random reservoirs}\label{stats}

In this appendix we look into details of Fourier coefficients and analytically show that there are QELMs that can not be surrogated via RFF method. In particular, we show that Haar reservoirs lead to anti-concentrated Fourier spectrum which contradicts a necessary condition to perform RFF. Despite the model being not RFF surrogatable, the QELM with Haar random reservoirs suffer from exponential concentration. As a consequence, we further show that there exists a trivial surrogate model with data independent output.

\subsection{Moments of Fourier Coefficients}

\subsubsection{Statistics of Fourier summands}
First, we analyse the statistics of the Fourier coefficients of a QELM with a Haar random reservoir. We assume Pauli string observables in the composite space of all the qubits. We recall that they satisfy the following properties
\begin{equation}
    \Tr[O] = 0\ \operatorname{and}\ \Tr[O^2] = 2^n \;,
\end{equation}
where $O$ is one of the Pauli string from the observable set. By computing mean, variance and covariance of the Fourier coefficients, we obtain the following Proposition.
\begin{proposition}[Statistics of Fourier Summands]\label{Prop:StatFS}
For all summand of Fourier coefficients defined in Eq.~\eqref{Eq:FParam}, denoted as $a_{uv}:=\alpha_{uv}\bra{u,0}U_R^\dag{}O U_R \ket{v,0}$ where $O$ is a Pauli string, we have\\
\begin{equation}
    \mathbb{E}_{U_R}[a_{uv}] = 0 \;,
\end{equation}
and
\begin{equation} 
 \Var_{U_R}(a_{uv}) =   
\begin{dcases}\label{Eq:B3}
    |\alpha_{uv}|^2\dfrac{d}{d^2-1} &\quad\operatorname{for}\ u\neq v \\
    |\alpha_{uv}|^2\dfrac{1}{d+1} &\quad\operatorname{for}\  u= v \;,\\
\end{dcases}  
\end{equation}
where for both expectation values the reservoirs are sampled from \textbf{Haar measure} and $d=2^n$ corresponds to the system size. Moreover, the covariance of summands is given by
\begin{equation}\label{Eq:B4}
    \Cov_{U_R}(a_{uv}, a_{jk})=
    \begin{dcases}
    -\alpha_{uv}^* \alpha_{jk}\dfrac{1}{d^2-1} &\quad\operatorname{for}\ u = v \operatorname{and} j=k\\
    0 &\quad\operatorname{for\ all\ distinct}\ u,v,j,k\;. 
    \end{dcases}
\end{equation}
\end{proposition}
\begin{proof}
We start with the expectation value of any summand, by its linearity we obtain
\begin{align}
    \mathbb{E}_{U_R}[a_{uv}]&=\alpha_{uv}\int_{\mathbb{U}(d)}d\mu(U)\bra{u,0} U_R^\dag O U_R\ket{v,0}\\
    &=\alpha_{uv}\bra{u,0}\left(\int_{\mathbb{U}(d)}d\mu(U) U_R^\dag O U_R\right)\ket{v,0}\\ 
    &= \alpha_{uv}\bra{u,0} \frac{\Tr[O]}{d}\IC \ket{v,0}\\
    &= \alpha_{uv}\frac{\Tr[O]}{d} \delta_{uv}\,,
\end{align}
where in the third line we use standard Haar integral identities~\cite{cerezo2021cost}. Recall that Pauli strings are traceless, and so
\begin{equation}
    \mathbb{E}_{U_R}[a_{uv}] = 0\,.
\end{equation}
Next, we analyse the correlation among summands. By the properties of the Haar integral, we obtain
\begin{align}\label{coeffs cov}
    \mathbb{E}_{U_R}[a^*_{uv}a_{jk}] &=\alpha^*_{uv}\alpha_{jk} \bra{v,0}\otimes\bra{j,0} \int_{\mathbb{U}(d)}d\mu(U) U_R^{\dag\otimes2} O^{\otimes2} U_R^{\otimes2} \ket{u,0}\otimes\ket{k,0}\\
    & = \alpha^*_{uv}\alpha_{jk}\frac{1}{d^2-1}\left(\bra{v,0}\otimes\bra{j,0}\left( \left(\Tr^2[O]-\frac{1}{d} \Tr[O^2]\right)\IC + \left(\Tr[O^2]-\frac{1}{d} \Tr^2[O])\right)\text{SWAP}\right) \ket{u,0}\otimes\ket{k,0} \right)\\
    & = \alpha^*_{uv}\alpha_{jk}\frac{1}{d^2-1}\left(\left(\Tr^2[O]-\frac{1}{d} \Tr[O^2]\right)\delta_{uv} \delta_{jk} + \left(\Tr[O^2]-\frac{1}{d} \Tr^2[O]\right)\delta_{uj}\delta_{vk}\right) \;,
\end{align}
where $\text{SWAP}$ is a swap operator between two copies.

Let $u=j$ and $v=k$, we obtain: 
\begin{equation}
\Var_{U_R}(a_{uv}) = \mathbb{E}_{U_R}\left[\abs{a_{uv}}^2\right] =    
\begin{dcases}
    |\alpha_{uv}|^2\frac{1}{d^2-1}\left(\Tr[O^2]-\frac{1}{d}\Tr^2[O]\right) &\quad\operatorname{for}\ u\neq v\\
   |\alpha_{uv}|^2\frac{1}{d(d+1)}\biggr(\Tr[O^2]+\Tr^2[O]\biggr) &\quad\operatorname{for}\ u= v\\
\end{dcases}
\end{equation}
By the properties of Pauli strings $\Tr[O^2]=d$ and $\Tr[O]=0$, we obtain
\begin{equation}
  \Var_{U_R}(a_{uv}) =   
\begin{dcases}
    |\alpha_{uv}|^2 \frac{d}{d^2-1} &\quad\operatorname{for}\  u\neq v\\
    |\alpha_{uv}|^2 \frac{1}{d+1} &\quad\operatorname{for}\  u= v\\
\end{dcases}  
\end{equation}
Finally, by Eq.~\eqref{coeffs cov} we also obtain the desired properties of covariance of the summands
\begin{equation} \label{eq:cov}
    \Cov_{U_R}(a_{uv}, a_{jk})=
    \begin{dcases}
    -\alpha_{uv}^* \alpha_{jk}\dfrac{1}{d^2-1} &\quad\operatorname{for}\ u = v \operatorname{and}\ j=k \ \operatorname{and}\ u \neq j\\
    0 &\quad\operatorname{for\ distinct}\ u,v,j,k 
    \end{dcases}
\end{equation}
Note that two cases ``$u=v=j=k$'' and ``$u=j,\ v=k,\ \operatorname{and}\ u\neq v$'' are not considered in Eq. \eqref{eq:cov}. However, in both cases this quantity become variances of the summands, which are obtained in Eq. \eqref{Eq:B3}. 
\end{proof}
\subsubsection{Degeneracy of Fourier coefficients}
Recall that different pairs of eigen states might provide the same difference of their corresponding eigen values
\be\label{Eq:FCSummandAgain}
a_\omega = \sum_{u,v \text{ s.t. } \lambda_u- \lambda_v=\omega } a_{uv} \,.
\ee
Hence, the number of potential summands is relevant for the statistics of Fourier coefficients. This motivate us to introduce the degeneracy of Fourier frequencies:\\
\begin{definition}
    For an given encoding strategy we define the degeneracy of frequency $\omega$ as the number summands that contribute to frequency $\omega$,
\be 
g_\omega  \coloneqq \sum_{i,j=1}^{|\Omega|} \mathbb{1}\{ \lambda_i - \lambda_j = \omega\}  
\ee 
where $\mathbb{1}$ is the indicator function. The degeneracies for all frequencies are denoted as the degeneracy vector $\textbf{g}\in\mathbb{R}^{|\Omega|}$.
\end{definition}
\begin{lemma}[Degeneracy of Pauli re-uploading scheme]\label{prop:2}
    Considering a Pauli re-uploading scheme on $L$ qubits we have
    \be 
    g^{\text{Pauli}}_\omega = \binom{2L}{L-\omega}
    \ee
\end{lemma}
\begin{proof}
    By Eq. \eqref{Eq:Omega} we know that the set of achievable frequencies for Pauli encoding can be expressed as
    \begin{align} 
        \Omega = &\{ (\lambda_1^{(1)} - \lambda_2^{(1)}) + \cdots + (\lambda_1^{(L)} - \lambda_2^{(L)}) : \lambda_{1}^{(i)}, \lambda_{2}^{(i)} = \pm\frac{1}{2}\}  \nonumber \\ 
        = &  \{ (\lambda_1^{(1)} + \lambda_2^{(1)}) + \cdots + (\lambda_1^{(L)} + \lambda_2^{(L)}) : \lambda_{1}^{(i)}, \lambda_{2}^{(i)} = \pm\frac{1}{2} \} \nonumber \\
        = & \{ \lambda_1 + \cdots + \lambda_{2L}: \lambda_{j} = \pm\frac{1}{2} \} 
    \end{align}
    In the second step we use the symmetry in possible values of $\lambda_{j}^{i}$ and the third step is just a change of notation.

    Next, we choose $2L$ copies of $\lambda$ from $\{-1/2, +1/2\}$. We let $L+\omega$ of these are $+1/2$ and the rest are $-1/2$ and obtain the frequency $\frac{1}{2}(L+\omega)-\frac{1}{2}(L-\omega)= \omega$. Thus the number of combinations that correspond to frequency $\omega$ is $\binom{2L}{L-\omega}$. 
\end{proof}
\begin{lemma}[Degeneracy of exponential encoding] \label{prop:3}
      Let $\textbf{g}^{\text{Exp}}{(L)} \in \mathbb{N}_0^{3^L}$ be the degeneracy vector for all frequencies of an $L$ qubit exponential encoding with components $\textbf{g}^{\text{Exp}}_i{(L)} = g_{i-\frac{3^L+1}{2}}$. Then, vector $\textbf{g}^{\text{Exp}}{(L+1)}$ containing the degeneracies of $(L+1)$ qubit exponential encoding is given by: 
      \be 
      \textbf{g}^{\text{Exp}}{(L+1)} = [\textbf{g}^{\text{Exp}}{(L)}, \ 2\cdot \textbf{g}^{\text{Exp}}{(L)}, \ \textbf{g}^{\text{Exp}}{(L)}]
      \ee 
      where $[...]$ denotes concatenation. 
\end{lemma}
 \begin{proof}
Again recalling from Eq. \eqref{Eq:Omega} we have: 
\begin{align}
    \Omega_{L+1} = &\{ (\lambda_1^{(1)} - \lambda_2^{(1)}) + \cdots + (\lambda_1^{(L+1)} - \lambda_2^{(L+1)}) : \lambda_{1}^{(i)}, \lambda_{2}^{(i)} = \pm \frac{3^{i-1}}{2}\}  \nonumber \\ 
    = & \Omega_L + \{\lambda_1^{(L+1)} - \lambda_2^{(L+1)} : \lambda_{1}^{(L+1)}, \lambda_{2}^{(L+1)} = \pm \frac{3^{L}}{2}\} \nonumber \\ 
    = &  \{-\frac{3^L-1}{2},-\frac{3^L-1}{2}+1, \cdots , \frac{3^L-1}{2} \}   + \{-3^L, 0, 3^L\} \nonumber \\ 
    = & S_1 \cup S_2 \cup S_3 
\end{align}
where in the third line we use the frequency range of exponential encoding given by Eq. \eqref{OmegaExp} and $S_1 = \Omega_L \mathbb{+} \{-3^L\}$, $S_2 = \Omega_L + \{0\} = \Omega_L$, $S_3 = \Omega_L + \{3^L\}$. Here, the plus sign between sets denotes the Minkowski addition.

It is easy to see that $S_1,\ S_2,\ S_3$ are disjoint. If $\omega \in S_1$ the number of ways we can get $\omega$ is equal to the degeneracy of $\omega+3^L$ in $\Omega_L$ times the degeneracy of $3^L$ which is one. A similar argument can be made for $\omega \in S_3$.  For  $\omega \in S_2$ the degeneracy is equal to the the degeneracy of $\omega$ in $\Omega_L$ times the degeneracy of $0$ which is two. By arranging the vectors of degeneracies together in order we obtain $[\textbf{g}^{\text{Exp}}{(L)}, \ 2\cdot \textbf{g}^{\text{Exp}}{(L)}, \ \textbf{g}^{\text{Exp}}{(L)}]$. The ref. \cite{barthe2023gradients} discussed the degeneracies and distribution of the coefficients in detail and provided an alternative proof for degeneracies using convolutions.
\end{proof}
\begin{corollary}\label{col1}
     Within an exponential encoding scheme with L qubits the frequency 0 has the largest degeneracy $2^L$. Furthermore, sum of all degeneracies is $4^L$. 
 \end{corollary}
 \begin{proof}
A proof by induction is trivial using Lemma \ref{prop:3} and the fact that $\textbf{g}^{Exp}(1) = [1,2,1]$.
 \end{proof}
 \subsubsection{Statistics of Fourier coefficients}
 We further assume that  $\rho_0 = (\ketbra{+}{+})^{\otimes{}n_{a}}$ such that $\alpha_{ij} = \frac{1}{2^{n_{a}}} = \frac{1}{{d_{a}}} \ \forall i,j$. From now, we follow this convention. \\
 \begin{customthm}{7}[Variance of Fourier Coefficients]
 Given a QELM with Fourier coefficients $a_{\omega}$, their variance over reservoirs sampled from \textbf{Haar measure} is given by
 \be
\Var(a_\omega) = g_\omega \cdot \frac{d}{(d^2-1)d_{a}^2}\,,\ \ \text{for}\ \ \omega \neq 0 
 \ee
 where $g_\omega$ is the degeneracy of that frequency given in Lemma \ref{prop:2} or \ref{prop:3} depending on the chosen encoding scheme.\\ 
 And for $\omega = 0$ the variance is given by
 \be 
\Var(a_0^{\text{Exp}}) =\frac{1}{(d+1)d_{a}} - \frac{d_{a}-1}{(d^2-1)d_{a}} 
 \ee
and
\be 
\Var(a_0^{\text{Pauli}}) = d_{a} \cdot \frac{1}{d_{a}^2(d+1)} - (d_{a}^2-d_{a})\cdot \frac{1}{d_{a}^2(d^2-1)} +\biggl[\binom{2n_{a}}{n_{a}}-d_{a}\biggr]\cdot \frac{d}{d_{a}^2(d^2-1)} 
\ee 
for exponential encoding and Pauli re-uploading, respectively. Moreover, all distinguished Fourier coefficients have zero covariance and hence are uncorrelated. 
 \end{customthm}
\begin{proof}
Recall that
\be\label{Eq:FCSummandAgain2}
a_\omega = \sum_{u,v \text{ s.t. } \lambda_u- \lambda_v=\omega } a_{uv} = \sum_{i=1}^{g_\omega} a_{u_iv_i} \,,
\ee
where $\forall i \in \{0,1,\cdots,g_\omega \}: \lambda_{u_i}- \lambda_{v_i}=\omega$.

 First, we consider exponential encoding. By Eq.~\eqref{Eq:Omega}, the eigenvalues have exponentially increasing spacing, therefore $\lambda_u-\lambda_v=0$ if and only if $u=v$. By Eq.~\eqref{Eq:FCSummandAgain2}, we obtain 
 \begin{equation}
     a_0= \sum_{u=1}^{d_{a}} a_{uu}
 \end{equation}
 and by Proposition \ref{Prop:StatFS}, we obtain its variance
 \begin{align}  
 \label{AG:VarExp0}
     \Var(a_0) = &\sum_{u=1}^{d_{a}} \Var(a_{uu}) + \sum_{u,v = 1 , u\neq v}^{d_{a}} \Cov(a_{uu}, a_{vv}) \nonumber\\
     = &\sum_u  \frac{1}{(d+1)d_{a}^2} - \sum_{u\neq v} \frac{1}{(d^2-1)d_{a}^2} =\frac{1}{(d+1)d_{a}} - \frac{1}{(d^2-1)d_{a}^2}.(d_{a}^2-d_{a})  \nonumber \\ 
     = &  \frac{1}{(d+1)d_{a}} - \frac{d_{a}-1}{(d^2-1)d_{a}} \leq g_0^\text{Exp} \cdot \frac{d}{(d^2-1)d_{a}^2} \;.
 \end{align}
 We remark that, if $n_{h} = 0$ such that $d= d_{a}$, then for the exponential encoding we have $\mathbb{E}[|a_0|^2] = 0$. This means that for any reservoir and any observable, $a_0 = 0$. Thus the prediction after performing the linear regression does not contain an input-independent offset value corresponding to Fourier frequency $\omega=0$. In this case, it is necessary to include an additional parameter $\eta_0$ during training.

 For frequencies other than zero, since we know from Proposition \ref{Prop:StatFS} that the covariance of summands are zero, we have: 
 \be \label{AG:VarExpw}
\Var(a_\omega) = g_\omega \cdot \frac{d}{(d^2-1)d_{a}^2} 
 \ee
 where $g_\omega$ is the degeneracy given by Lemma~\ref{prop:3}. Eq. \eqref{AG:VarExpw} also holds for non-zero frequencies of Pauli encoding if we use Lemma~\ref{prop:2} to obtain the degeneracy. 

For QELMs with Pauli re-uploading, the degeneracy of frequency 0 is larger than exponential encoding. This means that there are $u,v$ such that $u \neq v$ and $\lambda_u-\lambda_v = 0$, and hence $a_{uv}$ is a summand contributing to $a_0$. We denote these sets of $u,v$ as $u_i, v_i$. Using Lemma~\ref{prop:2} and Eq. \eqref{Eq:FCSummandAgain2} we have: 
\begin{equation}\label{eq:pauli0}
a_0 =  \sum_{u=1}^{d_{a}} a_{uu} + \sum_{j=1}^{\binom{2n_{a}}{n_{a}}-d_{a}} a_{u_j,v_j}\,.
 \end{equation}
 where $u_j\neq v_j$ for all $j$. 
 Note that the first term is identical to the exponential encoding case and the second term consists of summands that according to Proposition \ref{Prop:StatFS} are linearly uncorrelated to themselves and the elements of the first term. Thus we obtain:  
 \begin{align}
     \Var(a_0) & = d_{a} \cdot \frac{1}{d_{a}^2(d+1)} - (d_{a}^2-d_{a})\cdot \frac{1}{d_{a}^2(d^2-1)} +\biggl[\binom{2n_{a}}{n_{a}}-d_{a}\biggr] \frac{d}{d_{a}^2(d^2-1)} \nonumber \\ 
     &\leq \binom{2n_{a}}{n_{a}} \cdot \frac{d}{d_{a}^2(d^2-1)} = g_0^\text{Pauli} \frac{d}{d_{a}^2(d^2-1)}
 \end{align}
\end{proof}
 As illustrated in Fig.~\ref{fig:VarEncoding}, the analytical results on the variances for both encodings agree with the simulations. Furthermore, Fig.~\ref{fig:MeanAbsCoef} shows the mean absolute value of the coefficients and standard deviations. Using Chebyshev's bound, We can show that the spectrum for Haar random reservoirs is rich (non-sparse) with high probability, which allows room for a potential quantum advantage, as discussed in Section~\ref{SecRoleR} and \ref{SecSurro}. Finally, the Fourier coefficients are all linearly uncorrelated since only summands of form $a_{ii}$, $a_{jj}$ have non zero covariances and they only contribute to coefficient $a_0$. Fig.~\ref{fig:Cov} shows the covariances between Fourier coefficients.
\begin{figure}
\centering
\begin{minipage}{0.49\textwidth}
  \centering
  \includegraphics[width=\textwidth]{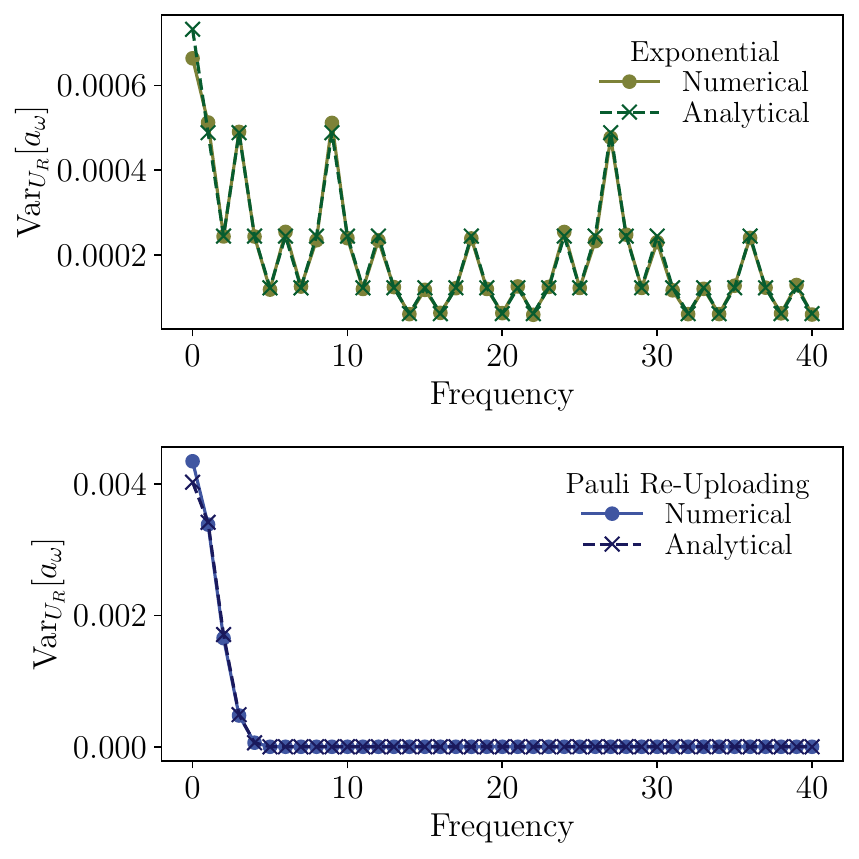}
  \label{fig:test1}
\end{minipage}
\hfill
\begin{minipage}{0.49\textwidth}
  \centering
\includegraphics[width=\textwidth]{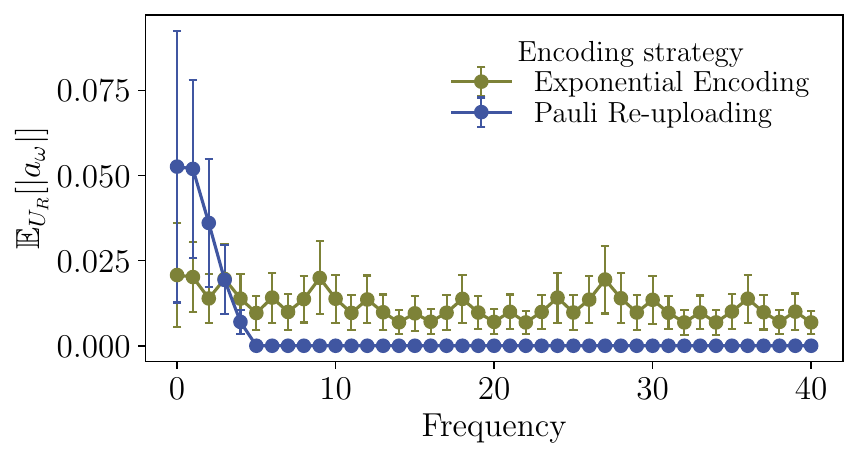}
   
  \label{fig:test2}
    \end{minipage}
    \caption{\textbf{Left: Variance of Fourier coefficients for different encodings.} We 
plot the variance of the Fourier coefficients $a_{uv}$ over the Haar random distribution 
with $n_{a} = 4$ encoding qubits and $n_{h}=2$ hidden qubits for an exponential encoding (up) and the Pauli re-uploading (down). The observable is a $Z$ measurement 
on the first encoding qubit. \textbf{Right: Mean absolute value of Fourier coefficients.} Here we plot the mean and standard deviation of the Fourier components over 1000 Haar random reservoir unitaries. The simulation is done with $n_{a} = 4$ encoding qubits and $n_{h} = 2$ hidden qubits and we consider measuring the $Z$ Pauli observable.}
\label{fig:VarEncoding}
\label{fig:MeanAbsCoef}
\end{figure}
\begin{figure}
    \centering   
    \includegraphics[width=0.45\textwidth]{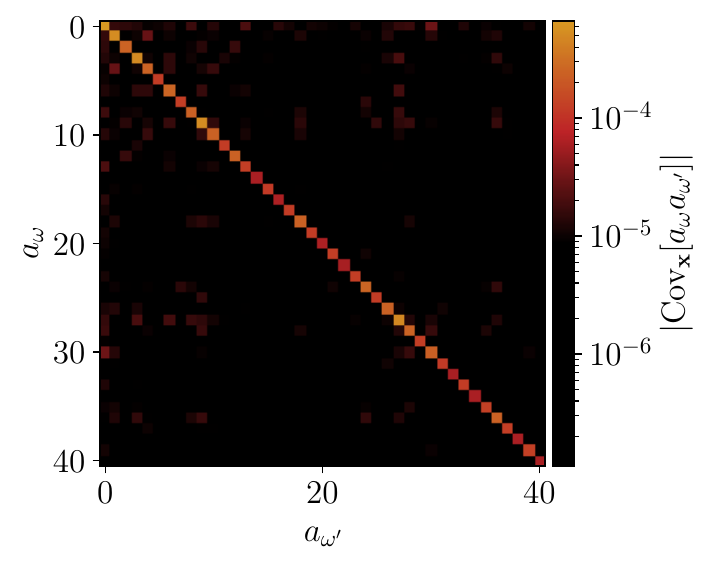}
    
    \caption{\textbf{Covariance matrix of Fourier coefficients.} The absolute value of the covariance matrix of the Fourier coefficients for the exponential encoding. The settings for numerics are the same as outlined in caption of Fig.~\ref{fig:MeanAbsCoef}. 
    }
    \label{fig:Cov}
\end{figure}

We remark that the recent work~\cite{mhiri2024gradients} discussed the statistics of Fourier coefficients of PQCs in detail and connect it to the expressivity of model prediction in generic scenarios.
\subsection{Classical surrogates of QELM models}
\subsubsection{Random Fourier Features method}
Random Fourier Features (RFF) method samples a number of frequencies from the frequencies existing in the model prediction $f_{\rm QELM}(x)$ according to distribution $p: \Omega \rightarrow \mathbb{R}$. Then, the function $f_{\rm RFF}(x)$ consisting of these frequencies are used to approximate the target $f_{\rm QELM}(x)$. 

In Ref.~\cite{sweke2023potential}, three conditions are stated for the RFF-dequantization of a PQC (i.e. such that the average error $ \mathbb{E}_{\vec{x}}\left[|f_{\rm QELM}(x)-f_{\rm RFF}(x)|^2\right]$ is low enough):
\begin{itemize}
    \item We need to be able to efficiently sample from the distribution $p(\omega)$ of frequencies with $\omega \in \Omega$.
    \item The distribution needs to be \textbf{concentrated} meaning that $p^{-1}_{max} \in \OC(\mathrm{poly}(n),\mathrm{poly}(d_x))$, where $d_x$ is the dimension of input data and $p_{\rm max}$ is the largest sampling probability i.e., $p_{\rm max}= \max_{\omega} p(\omega)$. In other words, the frequency spectrum of QELM's prediction should be concentrated.
    \item The frequency spectrum of the QELM output function is \textbf{well aligned} with the sampling distribution.
\end{itemize} 

Here we show that a distribution of Fourier frequencies $v : \Omega \rightarrow \mathbb{R}$ such that $v(\omega) \propto \mathbb{E}[|a_\omega|^2]$ i.e. it is proportional to the variances for exponential encoding, is anti-concentrated and hence not suitable for RFF-dequantization. In our case, we note that $\Ebb[|a_\omega|^2] = \Var[a_{\omega}]$ since $\Ebb[|a_{\omega}|]= 0$ from Proposition~\ref{Prop:StatFS}. Hence, the distribution can be expressed as $v(\omega) = c\cdot\Var[a_{\omega}]$ with $c\in\mathbb{R}$ being the proportional constant. Importantly, the largest redundancies happen at $\omega = 0$ which leads to $v_{\rm max} = v(0)$. Using Corollary \ref{col1} and Eq. \eqref{AG:VarExp0}, we obtain 
\begin{align}
    v_{\rm max} \leq \frac{c d}{(d^2-1)d_a}\;.
\end{align}
We observe that $v_{max}$ decreases exponentially in both the system size $n_{a}$ and the dimension of data $d_x$ meaning that the predition of a QELM with Haar reservoir has an anti-concentrated spectrum. More formally, we have the lower bound of $v_{\rm max}^{-1}$ as
\begin{align}
    v^{-1}_{\rm max}\geq\frac{(d^2-1)d_a}{c d} \;,
\end{align}
which contradicts the second requirement for the RFF surrogate.

Since the Haar distribution is a continuous distribution, there are instances of unitaries for which $|a_\omega|^2$ resembles the mean behaviour in Fig.~\ref{fig:VarEncoding} resulting in anti-concentrated spectrum. This means that at least for Haar random reservoirs with exponential encoding there are examples near the mean behaviour for which RFF dequantization is not efficient. There might be families of unitaries for which the three conditions above hold and thus an efficient RFF dequantization is possible. Moreover, for the instances where dequantization is impossible, exponential concentration might limit trainability of QELM. Overall, there seems to be a trade-off between the variance of output function and the concentration of Fourier coefficients.

\subsubsection{Trivial surrogate of QELM with Haar random reservoirs}
Importantly, the circuit being not efficiently surrogatable via RFF does not entail that it cannot be surrogated with other methods. One trivial surrogate can be constructed if the QELM suffers from exponential concentration (see Section~\ref{trainability_analysis} in the main text). In this scenario, the circuit becomes insensitive to inputs and outputs a constant independent of the input then a good surrogate of the circuit is simply that constant output. In particular, for the large number of qubits, we prove in this sub-section that guessing zero is already a good classical surrogate. 

Formally, denote $f_\text{Surr}(\bm{x})$ as the classical surrogate and $R := \mathbb{E}_{\bm{x}}[|f_{\bm{\eta}}(\bm{x}) -f_\text{Surr}(\bm{x})|]$ as the risk over the input space for a given reservoir $U_R$. In what follows, we show that by choosing the trivial ``guessing zero'' surrogate $f_\text{Surr}(\bm{x}) = 0$, the risk $R$ is an arbitrary small constant with the probability exponentially close to $1$ over the choice of reservoir $U_R$ for a large number of qubits $n$. That is, with $f_{\rm Surr}(\vec{x}) = 0$, we have
\begin{align}
    \Pr_{U_R}\left[ R \leq \epsilon \right] \geq 1 - \frac{\beta}{\epsilon^2} \;,
\end{align}
for some $\beta \in \OC(b^{-n}) $ with a constant $b > 1$ and some arbitrary small constant $\epsilon$. 

To prove this, we first consider the second moment of $R$ over the choice of the Haar random reservoir
\begin{align}
    \Ebb_{U_R}[R^2]  & =  \Ebb_{U_R} \left( \Ebb_{\vec{x}}\left[ \left| f_{\bm{\eta}}(\bm{x}) - f_\text{Surr}(\textbf{x}) \right| \right] \right)^2 \\
    & \leq \Ebb_{U_R} \Ebb_{\vec{x}} \left( f_{\bm{\eta}}(\bm{x}) - f_\text{Surr}(\textbf{x}) \right)^2 \\
    & =  \Ebb_{\vec{x}} \Ebb_{U_R} \left( f_{\bm{\eta}}(\bm{x}) - f_\text{Surr}(\textbf{x}) \right)^2 \\
    & =  \Ebb_{\vec{x}} \Ebb_{U_R} \left( f_{\bm{\eta}}(\bm{x}) \right)^2 \\
    &= \Ebb_{\bm{x},U_R}\left[\left(\sum_{i=1}^M \eta_i \expval{O_i}_{\bm{x}}\right)^2\right] \\
    & \leq \Ebb_{\bm{x},U_R}\left[\left(\sum_{i=1}^M \eta_i^2 \right) \left(\sum_{i=1}^M \expval{O_i}_{\bm{x}}^2 \right) \right] \\
    &\leq M B^2 \Ebb_{\bm{x},U_R}\left[\sum_{i=1}^M 
    \expval{O_i}_{\bm{x}}^2\right]\label{Eq:CauchyStd1}
\end{align}
where the first inequality is due to Jensen's inequality, in the fourth line we use explicitly that our surrogate is simply guessing zero $f_{\rm Surr}(\vec{x}) = 0$, the second inequality is by Cauchy-Schwarz inequality. Before proceeding we remark that since the trainable weights $\eta_i$'s depends on the choice of reservoir and the training inputs, they can not be simply taken out of the expectation directly. To circumvent this, in the third inequality we assume that all trainable weights $\eta_i$ are bounded by $B$ i.e., $|\eta_i | \leq B$ for $\forall i$. This assumption is a justified assumption since in almost all of regression tasks regularisation terms are included in to the loss function to ensure that $\eta_i$'s are not very large to avoid over-fitting.

Now, to further proceed, we will use explicitly the Haar-random reservoir assumption to do the average. That is, we have
\begin{align}
   \Ebb_{U_R} \expval{O_i}_{\bm{x}}^2 & = \Tr\biggr[\int_{\mathbb{U}(d)}d\mu(U_R) U(\bm{x})^{\dag\otimes2}U_R^{\dag\otimes2} O_i^{\otimes2} U_R^{\otimes2} U(\bm{x})^{\otimes2}\rho_0 ^{\otimes2}\biggl] \\
   & =   \frac{\Tr[O_i]^2\Tr[\rho_0]^2 + \Tr[O_i^2]\Tr[\rho_0^2]}{d^2-1} - \frac{\Tr[O_i^2]\Tr[\rho_0]^2 + \Tr[O_i]^2 \Tr[\rho_0^2]}{d(d^2-1)} \;, \label{eq:Haar-int}
\end{align}
where $d\mu(U_R)$ is a Haar measure over the unitary group $\UC(d)$ with $d = 2^n$ and in the second equality we perform Haar integration for the second moment over the group. We remark that the result of the integration does not depend on the input $x$ and the data-embedding $U(\vec{x})$. 

By substituting Eq.~\eqref{eq:Haar-int} into Eq.~\eqref{Eq:CauchyStd1}, we have that the risk averaged over the choice of Haar-random reservoir scales as
\begin{align}
    \Ebb_{U_R}[R^2]  & = MB^2 \sum_{i=1}^M \left[ \frac{\Tr[O_i]^2\Tr[\rho_0]^2 + \Tr[O_i^2]\Tr[\rho_0^2]}{d^2-1} - \frac{\Tr[O_i^2]\Tr[\rho_0]^2 + \Tr[O_i]^2 \Tr[\rho_0^2]}{d(d^2-1)}\right] \\
    & \in \OC\left(\frac{M^2 B^2 {\rm max}(\| O_i \|_{\infty})}{d} \right) \;,
\end{align}
where $\|\cdot\|_\infty$ is an infinity norm.

Next, since $R$ is non-negative, we realize that $ \Pr_{U_R}\left[ R \geq \epsilon \right] =  \Pr_{U_R}\left[ R^2 \geq \epsilon^2 \right]$. By invoking the Markov's inequality on $R^2$, we have
\begin{align}
    \Pr_{U_R}\left[ R \geq \epsilon \right] \leq \frac{\Ebb_{U_R} [R^2]}{\epsilon^2}  \;.
\end{align}

Lastly, by using the fact that $\Pr_{U_R}\left[ R \geq \epsilon \right] + \Pr_{U_R}\left[ R \leq \epsilon \right] = 1$, we can rearrange to obtain
\begin{align}
     \Pr_{U_R}\left[ R \leq \epsilon \right] \geq 1 - \frac{\beta}{\epsilon^2} \;,
\end{align}
with $\beta =\Ebb_{U_R} [R^2]$, which completes the proof of our claim.

\section{Practical consequences of exponential concentration on QELM}\label{appendix_prob_theory_refresher}
In this section, we investigate how exponential concentration of expectation values over input data $\xv$ and different instances of reservoir dynamics deteriorates the QELM performance. In particular, we analytically show that, when estimating these observables with a polynomial number of measurement shots, the trained QELM becomes insensitive to input data which leads to data-independent model predictions on unseen input data and in turn poor generalization.

\subsection{Refresher on some key concepts}
\subsubsection{Exponential concentration}
We begin with recalling the definition of probabilistic and deterministic exponential concentration discussed in the main text. 
We differentiate between \textit{probabilistic} and \textit{deterministic} exponential concentration:

\begin{definition}[Probabilistic exponential concentration]
Consider a quantity $Q(\vec{\theta})$ that depends on some variable $\vec{\theta}$ which can be estimated from an $n$-qubit quantum computer as an expectation value of some observable. $Q(\vec{\theta})$ is said to probabilistically exponentially concentrate around an input-independent value $\mu$ if
\be\label{eq:appx_def_exp_concentration}
\mathrm{Pr}_{\vec{\theta}}[|Q(\vec{\theta})-\mu|\geq\delta] \leq \dfrac{\beta}{\delta^2}\hspace{2pt},\;\; \beta \in \OC(1/b^{n}) \;,
\ee
for some $b > 1$. That is, the probability that $Q(\vec{\theta})$ deviates from $\mu$ by a small amount $\delta$ is exponentially small for all $\thv$.
\end{definition}

We remark that to satisfy Eq.~\eqref{eq:appx_def_exp_concentration} for probabilistically exponential concentration it is sufficient to show that the variance over the variables $\Var_{\thv}[Q(\vec{\theta})]$ is exponentially small with the number of qubits. This is due to Chebyshev's inequality which states that
\begin{align}
    {\rm Pr}_{\vec{\theta}} \left[\left|Q(\vec{\theta})-\Ebb_{\vec{\theta}}[Q(\vec{\theta})] \right|\geq\delta \right] \leq \dfrac{\Var_{\vec{\theta}}[Q(\vec{\theta})]}{\delta^2} \;,
\end{align}
where $\Ebb_{\thv}[Q(\vec{\theta})]$ is an average of $Q(\thv)$ over $\thv$. That is, we have $\mu = \Ebb_{\vec{\theta}}[Q(\vec{\theta})]$ and $\beta = \Var_{\thv}[Q(\vec{\theta})]$ in the presence of exponential concentration. We now recall the definition of deterministic exponential concentration:

\begin{definition}[Deterministic exponential concentration]
Consider a quantity $Q(\vec{\theta})$ that depends on some variable $\vec{\theta}$ which can be estimated from an $n$-qubit quantum computer as an expectation value of some observable. $Q(\vec{\theta})$ is said to deterministically exponentially concentrate around an input-independent value $\mu$ if its distance from it is bounded by an exponentially small value
\be
|Q(\vec{\theta})-\mu| \leq \beta \hspace{2pt},\;\; \beta \in \OC(1/b^n) \;,
\ee
for some $b >1$. That is, $Q(\vec{\theta})$ does not deviate from $\mu$ for more than $\beta$ for all $\theta$.
\end{definition}

In the context of QELM, we focus on two scenarios regarding the variables $\thv$. The first is when we have exponential concentration over the input data i.e., $\thv = \xv$ induced by the data encoding part while the other one is the concentration induced by reservoir dynamics i.e., $\thv = U_{R}$. 

\subsubsection{Estimating QELM predictions in practice}

We recall that a QELM model prediction with weights $\vec{\eta} = [\eta_1,...,\eta_M]^T$ is of the form
\begin{align} \label{eq:appx-model-prediction}
    f_{\vec{\eta}}(\xv) = \sum_{k=1}^M \eta_{k} \langle O_{k} \rangle_{\xv} \;,
\end{align}
with 
\begin{align}
    \langle O_{k} \rangle_{\xv,U_R} = \Tr[ O_{k} U_R (\rho(\vec{x})\otimes \ketbra{0}{0}) U^\dagger_R] \;,
\end{align}
where $O_{k}$ is an observable from a set $\{ O_{1} , ..., O_{M} \}$, $U_R$ is a reservoir unitary, $\rho(\vec{x})$ is an encoded quantum state associated with an input $\xv$ (or, simply an input-data state in the case of quantum data) and $\ketbra{0}{0}$ is the initial state in the hidden space.

In practice, the exact expectation values of these observables are inaccessible. Instead, their statistical estimates are obtained with measurement shots from quantum computers. More precisely, consider an expectation value of an observable $O = O_{k}$ with respect to a quantum state $\rho =  U_R (\rho(\vec{x})\otimes\ketbra{0}{0}) U^\dagger_R$. The observable can be decomposed as $O = \sum_{i} o_i |o_i \rangle \langle o_i |$ where $o_i$ and $|o_i\rangle$ are eigenvalues and associated eigenstates. After $N$ measurements, we can estimate the expectation value using the empirical mean of the outcome measurements as
\begin{align}
    \widehat{O} = \frac{1}{N} \sum_{m = 1}^N \lambda_m \;, 
\end{align}
where $\lambda_m$ is the $m^{\rm th}$ measurement outcome which can be treated as a random variable with a probability $\Tr[\rho |o_i\rangle\langle o_i |]$ to take a value $o_i$. Once we have gathered these statistical estimates for all the observables, the model prediction can be computed by classical post-processing of these estimates with the trainable weights.

\subsubsection{Hypothesis testing}
Here, we give some background overview of hypothesis testing which is an essential tool for showing the effect of exponential concentration on QELM. For more details about the topic, we refer the readers to Ref.~\cite{tsybakov2009introduction}.

\begin{lemma}\label{lemma:one-sample}
Consider two probability distributions $\PC$ and $\QC$ over some finite outcomes $\SC$. Suppose we are given a single sample $S$ drawn from either $\PC$ or $\QC$ (with an equal probability). We have the following hypotheses
\begin{itemize}
    \item Null hypothesis: $S$ is drawn from $\PC$
    \item Alternative hypothesis: $S$ is drawn from $\QC$
\end{itemize}
The success probability of making the right decision is
\begin{align}
    {\rm Pr}[``{\rm right \; decision}"] = \frac{1}{2} + \frac{\|\PC - \QC \|_1 }{4} \;
\end{align}
where $\norm{\PC-\QC}_1 = \sum_{s\in\SC}|p(s)-q(s)|$  is the distance between the two distributions expressed as a $1$-norm.
\end{lemma}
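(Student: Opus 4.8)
The plan is to identify the Bayes-optimal decision rule and evaluate its success probability directly; the lemma is exactly the statement that the Bayes error of binary hypothesis testing with equal priors is governed by the total variation (i.e.\ $1$-norm) distance between the two distributions.

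First I would note that, since $\SC$ is finite and the two hypotheses carry equal prior probability $1/2$, it suffices to consider decision rules of the form $\phi\colon\SC\to[0,1]$, where $\phi(s)$ is the probability of declaring the null hypothesis ($S\sim\PC$) upon observing $s$; deterministic rules are those with $\phi$ taking values in $\{0,1\}$. The probability of a correct decision is
\begin{align}
    p_{\mathrm{succ}}(\phi)
    &= \tfrac12\sum_{s\in\SC}\phi(s)\,p(s) + \tfrac12\sum_{s\in\SC}\bigl(1-\phi(s)\bigr)q(s) \nonumber\\
    &= \tfrac12 + \tfrac12\sum_{s\in\SC}\phi(s)\bigl(p(s)-q(s)\bigr)\,,
\end{align}
where I used $\sum_{s}q(s)=1$. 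This expression is affine in the vector $\bigl(\phi(s)\bigr)_{s\in\SC}$, so its maximum over $[0,1]^{|\SC|}$ is attained at an extreme point, namely the deterministic likelihood-ratio test that sets $\phi(s)=1$ iff $p(s)\geq q(s)$. Ties (points with $p(s)=q(s)$) contribute nothing to the sum and may be broken arbitrarily.

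Second, writing $A=\{s\in\SC : p(s)\geq q(s)\}$, the optimal success probability becomes $\tfrac12 + \tfrac12\sum_{s\in A}\bigl(p(s)-q(s)\bigr)$. It then remains to establish the elementary identity $\sum_{s\in A}\bigl(p(s)-q(s)\bigr)=\tfrac12\norm{\PC-\QC}_1$. Since both $\PC$ and $\QC$ are normalized, $\sum_{s\in\SC}\bigl(p(s)-q(s)\bigr)=0$; splitting this sum over $A$ and $\SC\setminus A$ gives $\sum_{s\in A}\bigl(p(s)-q(s)\bigr)=\sum_{s\in\SC\setminus A}\bigl(q(s)-p(s)\bigr)\geq 0$, and the sum of these two equal non-negative quantities is precisely $\sum_{s\in\SC}|p(s)-q(s)|=\norm{\PC-\QC}_1$. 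Substituting back yields $p_{\mathrm{succ}}=\tfrac12+\norm{\PC-\QC}_1/4$, as claimed.

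There is no serious obstacle here — this is a textbook fact — and the only point deserving a line of care is the reduction from arbitrary (randomized) decision rules to the deterministic likelihood-ratio test, which I handle via the affine-in-$\phi$ observation above; the remainder is the two-line manipulation using that $\PC$ and $\QC$ are probability distributions. If one instead prefers a rule-free argument, the same bound follows by noting that for any measurable event $A$ the error probability is $\tfrac12\bigl(\PC(\SC\setminus A)+\QC(A)\bigr)\geq \tfrac12\bigl(1-\norm{\PC-\QC}_1/2\bigr)$ with equality for $A$ the likelihood-ratio region, but the rule-based derivation is cleaner to present.
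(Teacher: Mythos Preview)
Your proof is correct and follows essentially the same approach as the paper: identify the likelihood-ratio region $A=\{s:p(s)\geq q(s)\}$ as optimal, compute the resulting success probability, and reduce it to $\tfrac12+\norm{\PC-\QC}_1/4$ via the normalization identity $\sum_{s\in A}(p(s)-q(s))=\sum_{s\notin A}(q(s)-p(s))$. Your version is slightly more careful in that you explicitly justify optimality of the deterministic rule via the affine-in-$\phi$ argument, whereas the paper simply asserts the optimal decision rule; otherwise the arguments coincide.
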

\begin{proof}
Define $\AC\subseteq \SC$ as a subset where $p(s) > q(s)$. The optimal decision we can make is to say that $s$ is drawn from $\PC$ if $s \in \AC$, otherwise it is drawn from $\QC$. Then, the probability of making the right decision is
    \begin{align}
        \Pr[``{\rm right \; decision}"]  &= \Pr[S \in \AC| S \sim \PC]\Pr[S\sim\PC]+\Pr[S \notin \AC| S \sim \QC]\Pr[S\sim\QC] \\
    &= \dfrac{1}{2}\biggr(\Pr[S \in \AC| S \sim \PC] + \Pr[S \notin \AC| S \sim \QC]\biggr) \\
        &= \dfrac{1}{2}\biggr(\sum_{s\in\AC} p(s) + \sum_{s\notin\AC}q(s)\biggr)\label{eq_A9}\,\;,
    \end{align}
where we use $\Pr[S\sim\PC] = \Pr[S\sim\QC] = \frac{1}{2}$ in the second line. 

Now consider the $1$-norm distance
    \begin{align}
        \norm{\PC-\QC}_1 = \sum_{s\in\SC}|p(s)-q(s)| = \sum_{s\in\AC}(p(s)-q(s))+\sum_{s\notin\AC}(q(s)-p(s))\,.    
    \end{align}
    Finally, let us expand the following quantity (using the fact that the probabilities sum to $1$)
    \begin{align}
        \dfrac{1}{2}(2+\norm{\PC-\QC}_1) &= \dfrac{1}{2}\biggr(\sum_{s\in\SC}p(s)+\sum_{s\in\SC}q(s)+ \sum_{s\in\AC}(p(s)-q(s))+\sum_{s\notin\AC}(q(s)-p(s))\biggr) \\
        &=\sum_{s\in\AC}p(s)+\sum_{s\notin\AC}q(s)\,,
    \end{align}
    which corresponds to the expression in Eq.~\eqref{eq_A9}, thus completing the proof.
\end{proof}

We now extend this hypothesis testing into a scenario where we obtain multiple samples (instead of a single sample). In this case, the following lemma holds
\begin{lemma}\label{lemma:many-sample}
Given a set of $N$ samples which is drawn from either $\PC^N$ or $\QC^N$ (with uniform probability), consider the two following hypotheses
\begin{itemize}
    \item Null hypothesis: samples are drawn from $\PC^N$
    \item Alternative hypothesis: samples are drawn from $\QC^N$
\end{itemize}
The success probability of making a right decision is upper bounded by
\begin{align}
{\rm Pr}[``{\rm right \; decision}"] \leq \frac{1}{2} + \frac{N\|\PC - \QC \|_1 }{4} \;.
\end{align}
\end{lemma}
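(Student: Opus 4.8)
The plan is to reduce the multi-sample problem to the single-sample Lemma~\ref{lemma:one-sample}, applied now to the product distributions $\PC^N$ and $\QC^N$ over the outcome set $\SC^N$. Indeed, obtaining $N$ i.i.d.\ samples from $\PC$ (resp.\ $\QC$) is the same as obtaining a single sample from $\PC^N$ (resp.\ $\QC^N$), so Lemma~\ref{lemma:one-sample} tells us that the \emph{optimal} decision rule succeeds with probability exactly $\tfrac{1}{2} + \tfrac{1}{4}\norm{\PC^N-\QC^N}_1$, and no decision rule can do better than the optimum. Hence it suffices to prove the sub-multiplicativity bound $\norm{\PC^N-\QC^N}_1 \leq N\,\norm{\PC-\QC}_1$, which then yields ${\rm Pr}[``{\rm right \; decision}"] \leq \tfrac{1}{2} + \tfrac{N}{4}\norm{\PC-\QC}_1$.

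To establish the sub-multiplicativity, I would use a hybrid (telescoping) argument. One writes the difference of product measures as
\begin{equation}
\PC^{N} - \QC^{N} = \sum_{j=1}^{N} \PC^{\otimes (j-1)} \otimes (\PC - \QC) \otimes \QC^{\otimes (N-j)}\,,
\end{equation}
which holds because the partial sums telescope (the $j$-th and $(j{+}1)$-th terms share the hybrid $\PC^{\otimes j}\otimes\QC^{\otimes(N-j)}$ with opposite signs). Taking the $1$-norm, applying the triangle inequality over the $N$ terms, and using that the $1$-norm is multiplicative under tensor products — so that
\begin{equation}
\norm{\PC^{\otimes (j-1)} \otimes (\PC-\QC) \otimes \QC^{\otimes (N-j)}}_1 = \norm{\PC}_1^{\,j-1}\,\norm{\PC-\QC}_1\,\norm{\QC}_1^{\,N-j} = \norm{\PC-\QC}_1\,,
\end{equation}
since $\PC$ and $\QC$ are normalized probability distributions — each term contributes $\norm{\PC-\QC}_1$, giving $\norm{\PC^N-\QC^N}_1 \leq N\,\norm{\PC-\QC}_1$.

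The main (and really only) subtlety is the multiplicativity of the total-variation/$1$-norm distance under tensoring with a probability measure, i.e.\ $\norm{\mu\otimes\nu}_1 = \norm{\mu}_1\norm{\nu}_1$; this follows immediately from $\sum_{x,y}|\mu(x)\nu(y)| = \big(\sum_x|\mu(x)|\big)\big(\sum_y|\nu(y)|\big)$, but it is the ingredient that converts the telescoping sum into a bound linear in $N$ rather than something worse. Everything else is a direct invocation of Lemma~\ref{lemma:one-sample} together with the observation that the stated success probability there is the optimal one and hence an upper bound for every decision procedure.
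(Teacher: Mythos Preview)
Your proposal is correct and follows essentially the same approach as the paper: reduce to Lemma~\ref{lemma:one-sample} applied to the product distributions $\PC^N,\QC^N$, then invoke the bound $\norm{\PC^N-\QC^N}_1 \leq N\norm{\PC-\QC}_1$. The paper simply cites this sub-multiplicativity as a ``useful identity'' without proof, whereas you supply the standard hybrid/telescoping argument for it, so your write-up is in fact more complete.
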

\begin{proof} We can consider the whole sample set as an effective single sample drawn from either $\PC^N$ or $\QC^N$. By invoking Lemma~\ref{lemma:one-sample}, we have
\begin{align}
    {\rm Pr}[``{\rm right \; decision}"] & = \frac{1}{2} + \frac{\|\PC^N - \QC^N \|_1 }{4} \; \\
    & \leq  \frac{1}{2} + \frac{N \|\PC - \QC \|_1 }{4} \;,
\end{align}
where the last line is due to this useful identity $\|\PC^N - \QC^N \|_1 \leq N \| \PC - \QC \|_1$.
\end{proof}

\subsection{Statistical indistinguishability of QELM model predictions}

We are now ready to tackle a practical consequence of exponential concentration on QELM performance. In what follows, we provide analytical results indicating that 
the QELM model predictions become independent of unseen input data when data-dependent expectation values $\{\langle O_{k} \rangle_{\xv, U_R}\}_{k=1}^N$ are estimated with the polynomial number of measurement shots in the presence of exponential concentration. 

To begin with, we assume that the expectation values exponentially concentrate towards an input-independent point either over the input data $\xv$ or the reservoir dynamics $U_R$. That is, we have
\begin{align}\label{eq:appx-exp-con-qelm}
    \forall k\;, \;\; \Var_{\thv} \left[ \langle O_{k} \rangle_{\xv, U_R} \right] \leq \beta \;, \;\; \beta \in \OC(1/b^n)
\end{align}
where we have $\thv = \xv$ for exponential concentration over input data and $\thv = U_R$ for exponential concentration over a choice of reservoirs
\footnote{We only consider the effect of probabilistically exponential concentration as the same result can be directly applied to the deterministic case.}.
We note that sources that lead to exponential concentration are investigated in details in Appendix~\ref{appendix_exp_concentration}. We further consider the scenario where each observable in the set has half of the eigenvectors corresponding to $+1$ eigenvalue and the other half corresponding to $-1$ eigenvalue. In addition, the input-independent point $\mu$ is assumed to be zero that is
\begin{align}
    \forall k\;, \;\; \Ebb_{\thv} \left[ \langle O_{k} \rangle_{\xv, U_R} \right]  = 0 \;.
\end{align}
An example of this is a set of Pauli operators.

In the presence of exponential concentration, when estimating a given data-dependent expectation value from quantum computers, the probabilities of obtaining $\pm 1$ outcomes are both exponentially close to $1/2$. To see this, let us decompose the expectation value of $O_{k}$ as
\begin{align}
    \langle O_{k} \rangle_{\xv, U_R} = p_+^{(k)}(\xv, U_R) - p_-^{(k)}(\xv, U_R) \; ,
\end{align}
where $ p_\pm^{(k)}(\xv, U_R)$ are probabilities of $\pm 1$ outcomes. Together with the normalization condition $p_+^{(k)}(\xv, U_R) + p_-^{(k)}(\xv, U_R) = 1$, we have
\begin{align} \label{eq:appx-exp-con-qelm-avg-prob}
    \Ebb_{\thv} \left[ p_+^{(k)}(\xv, U_R)  \right] = \Ebb_{\thv} \left[ p_-^{(k)}(\xv, U_R)  \right] = \frac{1}{2} \;,
\end{align}
for the means of the probabilities, as well as
\begin{align}
    \Var_{\thv} \left[ \langle O_{k} \rangle_{\xv, U_R} \right] & = \Var_{\thv} \left[ p_+^{(k)}(\xv, U_R) \right] + \Var_{\thv} \left[ p_-^{(k)}(\xv, U_R) \right] - 2 \Cov_{\thv}\left[ p_+^{(k)}(\xv, U_R) , p_-^{(k)}(\xv, U_R) \right] \;, \\
    0 & = \Var_{\thv} \left[ p_+^{(k)}(\xv, U_R) \right] + \Var_{\thv} \left[ p_-^{(k)}(\xv, U_R) \right] + 2 \Cov_{\thv}\left[ p_+^{(k)}(\xv, U_R) , p_-^{(k)}(\xv, U_R) \right] \;,
\end{align}
which leads to
\begin{align}\label{eq:appx-exp-con-qelm-var-prob}
    \Var_{\thv} \left[ p_+^{(k)}(\xv, U_R) \right], \Var_{\thv} \left[ p_-^{(k)}(\xv, U_R) \right] \in \OC(1/b^n) \;,
\end{align}
for the variances of the probabilities. Therefore, we have the exponential concentration of outcome probabilities towards $1/2$. 

We are now ready to state the main formal result which addresses the practical consequence of exponential concentration. 

\begin{proposition}\label{prop:appx-stat-indis}
Assume exponential concentration of data-dependent expectation values as defined in Eq.~\eqref{eq:appx-exp-con-qelm}. For a given input data $\vec{x}$ and a choice of reservoirs $U_R$, define the probability distribution $\PC_{\vec{x}, U_R} = \{p_+^{(k)}(\xv, U_R), p_-^{(k)}(\xv, U_R) \}$. In addition, define a fixed data-independent distribution $\PC_0 = \{1/2, 1/2\}$. Given a set of $N$ samples/outcomes $\MC$ (with $N \in \OC(\poly(n))$) from either $\PC_0$ or $\PC_{\vec{x}, U_R}$ with an equal probability, consider the following two hypotheses
\begin{itemize}
    \item Null hypothesis: $\MC$ is drawn from $\PC^N_0$
    \item Alternative hypothesis: $\MC$ is drawn from $\PC^N_{\vec{x}, U_R}$
\end{itemize}
The success probability of making a right decision is exponentially close to a random guessing such that
\begin{align}
    {\rm Pr}[``{\rm right \; decision}"] & \leq \frac{1}{2} + \epsilon_n \;,\;\; \epsilon_n \in \OC(1/b'^n) \;,
\end{align}
for some $b'>1$. 
\end{proposition}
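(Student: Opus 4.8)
The plan is to reduce the claim to the many-sample hypothesis-testing bound of Lemma~\ref{lemma:many-sample} and then to show that the total-variation distance between $\PC_0$ and $\PC_{\xv,U_R}$ is controlled, on average over $\thv$, by $\sqrt{\beta}$. Applying Lemma~\ref{lemma:many-sample} to the two-outcome distributions $\PC_0$ and $\PC_{\xv,U_R}$ gives at once
\be
  {\rm Pr}[``{\rm right \; decision}"] \leq \frac{1}{2} + \frac{N\,\norm{\PC_0 - \PC_{\xv,U_R}}_1}{4}\,,
\ee
so everything hinges on bounding the $1$-norm distance and then averaging over the source of concentration.

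The first key step is to identify this distance with a single expectation value. Writing $p_\pm = p_\pm^{(k)}(\xv,U_R)$ and using $p_+ + p_- = 1$, we get $\norm{\PC_0 - \PC_{\xv,U_R}}_1 = |1/2-p_+| + |1/2-p_-| = 2\,|p_+ - 1/2|$; and since $O_{k}$ has only $\pm 1$ eigenvalues, $\langle O_{k}\rangle_{\xv,U_R} = p_+ - p_- = 2p_+ - 1$, so $\norm{\PC_0 - \PC_{\xv,U_R}}_1 = |\langle O_{k}\rangle_{\xv,U_R}|$. The second key step is to feed in the exponential-concentration hypothesis~\eqref{eq:appx-exp-con-qelm}. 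Because $\Ebb_{\thv}[\langle O_{k}\rangle_{\xv,U_R}] = 0$, we have $\Ebb_{\thv}[\langle O_{k}\rangle_{\xv,U_R}^2] = \Var_{\thv}[\langle O_{k}\rangle_{\xv,U_R}] \leq \beta \in \OC(1/b^n)$, and Cauchy--Schwarz (i.e., Jensen's inequality) gives $\Ebb_{\thv}[\,|\langle O_{k}\rangle_{\xv,U_R}|\,] \leq \sqrt{\beta}$. Taking the expectation over $\thv$ of the displayed inequality and inserting $N \in \OC(\poly(n))$ then yields ${\rm Pr}[``{\rm right \; decision}"] \leq \frac{1}{2} + N\sqrt{\beta}/4 = \frac{1}{2} + \epsilon_n$ (the probability now also averaged over $\thv$), and since $\sqrt{\beta} \in \OC(1/(\sqrt b)^n)$ while a polynomial prefactor cannot defeat exponential decay, $\epsilon_n \in \OC(1/b'^n)$ for any $1 < b' < \sqrt b$. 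If a ``for most $\thv$'' rather than an ``averaged over $\thv$'' statement is preferred, Chebyshev's inequality applied directly to $\langle O_{k}\rangle_{\xv,U_R}$ delivers the same conclusion on all but an exponentially small fraction of $\thv$.

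The main thing to get right — more a matter of care than a genuine obstacle — is interpretational, together with some routine bookkeeping. Interpretational: as literally phrased ``for a given $\xv$ and $U_R$'' the number $|\langle O_{k}\rangle_{\xv,U_R}|$ is fixed and need not be small on a (small) exceptional set of inputs or reservoirs, so the conclusion should be understood in expectation over $\thv$ or up to an exponentially small exceptional fraction of $\thv$ — both of which the argument above supplies. Bookkeeping: one must track that the $\poly(n)$ shot budget only degrades the exponential base from $b$ to just under $\sqrt b$, and, if one wants statistical indistinguishability to hold simultaneously for all $M$ readout observables entering the prediction~\eqref{eq:appx-model-prediction}, a union bound over the $M\in\OC(\poly(n))$ observables is needed, which again costs only a polynomial factor and so leaves $\epsilon_n$ exponentially small.
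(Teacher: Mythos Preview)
Your proof is correct and follows essentially the same route as the paper: apply Lemma~\ref{lemma:many-sample}, identify $\norm{\PC_0-\PC_{\xv,U_R}}_1 = |\langle O_{k}\rangle_{\xv,U_R}|$, then average over $\thv$ and use the variance bound. The one noteworthy difference is in how the expected absolute deviation is controlled: the paper writes $\Ebb_{\thv}[|p_+ - 1/2|]$ as an integral, splits the range into $[0,\tfrac12-\sqrt\sigma]$, $[\tfrac12-\sqrt\sigma,\tfrac12+\sqrt\sigma]$, $[\tfrac12+\sqrt\sigma,1]$, and uses Chebyshev on the tails to get $\sqrt\sigma + \sigma$; you instead invoke Jensen's inequality $\Ebb_{\thv}[|\langle O_{k}\rangle|]\leq\sqrt{\Var_{\thv}[\langle O_{k}\rangle]}\leq\sqrt\beta$ directly. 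Your version is the cleaner of the two and yields the same $\OC(1/b^{n/2})$ scaling in one line. Your remark about the interpretational point (the bound holds in expectation over $\thv$, or for all but an exponentially small fraction of $\thv$) is well taken and matches what the paper implicitly does by integrating over the distribution of $p_+$.
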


\begin{proof}
The success probability can be upper bounded as
\begin{align}
     {\rm Pr}[``{\rm right \; decision}"]  & = \int_{0}^1  {\rm Pr}\left(``{\rm right \; decision}" \Big|\, p_+^{(k)}(\xv, U_R) = p \right) {\rm Pr}\left(  p_+^{(k)}(\xv, U_R) = p \right) dp \\
     & \leq \int_{0}^1  \left[ \frac{1}{2} + \frac{N \| \PC_{\vec{x}, U_R} - \PC_0 \|_1}{4} \right] {\rm Pr}\left(  p_+^{(k)}(\xv, U_R) = p \right) dp \\
     & =  \int_{0}^1  \left[ \frac{1}{2} + \frac{N\left( | p - 1/2| + |(1-p)-1/2| \right)}{4}  \right] {\rm Pr}\left(  p_+^{(k)}(\xv, U_R) = p \right) dp \\
     & = \frac{1}{2} +  \frac{N}{2} \int_{0}^1  \left| p - \frac{1}{2}\right| {\rm Pr}\left( p_+^{(k)}(\xv, U_R) = p \right) dp \;,
\end{align}
where the first equality is due to Bayes' theorem which introduces the conditional probability of making the right decision given that $p_+^{(k)}(\xv, U_R) = p$ and then integrating all possible values of $p_+^{(k)}(\xv, U_R)$ to obtain the marginal probability and then the first inequality is by invoking Lemma~\ref{lemma:many-sample}. 

The integral can be interpreted as how far apart is $p$ from $1/2$ on average and intuitively this is exponentially small due to the exponential concentration. For convenience, denote $\sigma = \sqrt{ \Var_{\thv} \left[ \langle O_{k} \rangle_{\xv, U_R} \right]}$ and $\mu = 1/2$. We can further bound the integral as
\begin{align}
     \int_{0}^1  \left| p - \frac{1}{2}\right| {\rm Pr}\left(  p_+^{(k)}(\xv, U_R) = p \right) dp &=  \int_{0}^{1/2 - \sqrt{\sigma}}  \left| p - \frac{1}{2}\right| {\rm Pr}\left(  p_+^{(k)}(\xv, U_R) = p \right) dp \\
     &+ \int_{1/2 - \sqrt{\sigma}}^{1/2 + \sqrt{\sigma}}  \left| p - \frac{1}{2}\right| {\rm Pr}\left(  p_+^{(k)}(\xv, U_R) = p \right) dp   \\
     &+ \int_{1/2 + \sqrt{\sigma}}^1  \left| p - \frac{1}{2}\right| {\rm Pr}\left(  p_+^{(k)}(\xv, U_R) = p \right) dp \\
     \leq &  \int_{1/2 - \sqrt{\sigma}}^{1/2 + \sqrt{\sigma}}  \left| p - \frac{1}{2}\right| {\rm Pr}\left(  p_+^{(k)}(\xv, U_R) = p \right) dp + \sigma \\
     \leq &  \sqrt{\sigma} \int_{1/2 - \sqrt{\sigma}}^{1/2 + \sqrt{\sigma}}   {\rm Pr}\left(  p_+^{(k)}(\xv, U_R) = p \right) dp + \sigma \\
     \leq & \sqrt{\sigma} + \sigma \\
     \in & \OC\left( \frac{1}{b^{n/2}}\right) \;,
\end{align}
where the first inequality is by invoking Chebyshev's inequality i.e., ${\rm Pr}[|p - \mu| > \delta] \leq \sigma^2/\delta^2$ with $\delta = \sqrt{\sigma}$ and $\mu = 1/2$. The second inequality is by taking the maximum value of the integral and the third inequality is by extending the integration range back which leads to $\int_{0}^{1}   {\rm Pr}\left(  p_+^{(k)}(\xv, U_R) = p \right) dp =  1$. The last line follows from exponential concentration in Eq.~\eqref{eq:appx-exp-con-qelm-var-prob}.
\end{proof}

Proposition~\ref{prop:appx-stat-indis} implies that we cannot reliably distinguish between samples obtained from quantum computers and those obtained from a fixed data-independent distribution. In other words, estimates of expectation values have no information about input data. Consequently, QELM model predictions in Eq.~\eqref{eq:appx-model-prediction} which are obtained by classically post-processing these data-independent statistical estimates are clearly insensitive to input data. 

\section{Sources of exponential concentration for QELM}\label{appendix_exp_concentration}
\subsection{Expressibility-induced concentration}
First, recall the definition of the Schatten $p$-norm, whereby $\norm{X}_p = (\Tr[|X|^p])^{1/p}$ and $|X| = \sqrt{X\ad X}$. Particularly, $\norm{X}_1 = \Tr[|X|]$ and $\norm{X}_{\infty} = \sigma_{max}(X)$ where $\sigma_{max}(X)$ is the maximal eigenvalue of $X$.
Now, let us define the diamond norm of an arbitrary superoperator $S_A$ acting on a Hilbert space $\HC_A$
\be
    \norm{S_A}_{\diamond} := \sup_n \sup_{X_{AB}} \dfrac{\norm{(S_A \otimes \IC_{B}^{(n)})(\X_{AB})}_1}{\norm{X_{AB}}_1}\,,
\ee
where $X_{AB} \in \LC(\HC_A\otimes\HC_B)$ and $\IC_B^{(n)}$ is the identity acting on the $n$-dimensional Hilbert space $\HC_B$.

Now, we start by recalling and proving the encoding Haar-expressivity-induced theorem
\begin{theorem}[Encoding Haar-expressivity-induced concentration]
    Consider the expectation value of an arbitrary observable as defined in Eq.~\eqref{obs_def}. Then we have that
    \be
    \mathrm{Pr}_{\xv}[|\expval{O}_{\xv} - \Ebb_{\xv}[\expval{O}_{\xv}]| \geq \delta ] \leq \dfrac{G(\varepsilon_{\diamond}^{\xv})}{\delta^2}\,,
    \ee
    where 
    \be
    G(\varepsilon_{\diamond}) = \dfrac{\big(\Tr[\widetilde{O}_{\Lambda}]^2+\Tr[\widetilde{O}_{\Lambda}^2]\big)}{2^{n_{a}}(2^{n_{a}}+1)} + \varepsilon_{\diamond}^{\xv}\norm{\Lambda(O)}_{\infty}\,.
    \ee  
    where $\widetilde{O}_\Lambda = \Tr[(\IC_{A}\otimes\ketbra{0}{0})\Lambda(O)]$ with $\Lambda(\cdot)=U_{R}^\dag (\cdot)U_{R}$.
\end{theorem}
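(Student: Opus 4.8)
The plan is to reduce everything to a variance bound and then apply Chebyshev's inequality, as in Definition~\ref{def_prob_exp_concentration}. The first step is to express $\expval{O}_{\xv}$ purely in terms of the accessible-space encoding: pushing $U_R$ onto $O$ and using that $\ketbra{0}{0}$ lives only on the hidden register, cyclicity of the trace gives $\expval{O}_{\xv} = \Tr[\Lambda(O)(\rho(\xv)\otimes\ketbra{0}{0})] = \Tr_A\big[\widetilde{O}_\Lambda\, U(\xv)\rho_0 U(\xv)\ad\big]$, where $\widetilde{O}_\Lambda = \bra{0}_H\Lambda(O)\ket{0}_H = \Tr_H[(\IC_A\otimes\ketbra{0}{0})\Lambda(O)]$ is an operator on the $2^{n_A}$-dimensional accessible space. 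Thus the whole problem becomes controlling the fluctuations of $Q(U):=\Tr[\widetilde{O}_\Lambda\, U\rho_0 U\ad]$ as $U$ ranges over the encoding ensemble $\Ubb_{\xv}$.

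Next, since $\Var_{\xv}[\expval{O}_{\xv}]\leq \Ebb_{\xv}[Q^2]$ and $Q(U)^2 = \Tr\big[\widetilde{O}_\Lambda^{\ot 2}\,U^{\ot 2}\rho_0^{\ot 2}(U\ad)^{\ot 2}\big]$, the relevant object is the second-moment super-operator $\int_{\Ubb_{\xv}}dU(\xv)\,U(\xv)^{\ot 2}(\cdot)(U(\xv)\ad)^{\ot 2}$, which by the definition in Eq.~\eqref{expressibility_superop} equals $\VC_{\rm Haar} - \AC_{\Ubb_{\xv}}$. Hence $\Ebb_{\xv}[Q^2] = \Tr[\widetilde{O}_\Lambda^{\ot 2}\,\VC_{\rm Haar}(\rho_0^{\ot 2})] - \Tr[\widetilde{O}_\Lambda^{\ot 2}\,\AC_{\Ubb_{\xv}}(\rho_0^{\ot 2})]$. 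For the Haar piece I would plug in the standard two-design identity $\VC_{\rm Haar}(X) = \tfrac{\Tr[X]-d_A^{-1}\Tr[X\,\mathrm{SWAP}]}{d_A^2-1}\IC + \tfrac{\Tr[X\,\mathrm{SWAP}]-d_A^{-1}\Tr[X]}{d_A^2-1}\mathrm{SWAP}$ with $X=\rho_0^{\ot 2}$, using $\Tr[\rho_0^{\ot 2}]=1$, $\Tr[\rho_0^{\ot 2}\mathrm{SWAP}]=\Tr[\rho_0^2]\leq 1$, $\Tr[\widetilde{O}_\Lambda^{\ot 2}]=\Tr[\widetilde{O}_\Lambda]^2$, $\Tr[\widetilde{O}_\Lambda^{\ot 2}\mathrm{SWAP}]=\Tr[\widetilde{O}_\Lambda^2]$, and the Cauchy--Schwarz bound $\Tr[\widetilde{O}_\Lambda]^2\leq d_A\Tr[\widetilde{O}_\Lambda^2]$ to show this term is at most $(\Tr[\widetilde{O}_\Lambda]^2+\Tr[\widetilde{O}_\Lambda^2])/(2^{n_A}(2^{n_A}+1))$, even for mixed $\rho_0$. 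For the expressivity piece I would use H\"older, $|\Tr[\widetilde{O}_\Lambda^{\ot 2}\AC_{\Ubb_{\xv}}(\rho_0^{\ot 2})]|\leq \|\widetilde{O}_\Lambda^{\ot 2}\|_{\infty}\,\|\AC_{\Ubb_{\xv}}(\rho_0^{\ot 2})\|_1$, then the defining contractivity of the diamond norm, $\|\AC_{\Ubb_{\xv}}(\rho_0^{\ot 2})\|_1\leq \varepsilon_{\diamond}^{\xv}\|\rho_0^{\ot 2}\|_1 = \varepsilon_{\diamond}^{\xv}$, together with $\|\widetilde{O}_\Lambda^{\ot 2}\|_{\infty}=\|\widetilde{O}_\Lambda\|_{\infty}^2\leq\|\Lambda(O)\|_{\infty}^2$ (compressing an operator to $\bra{0}_H\,\cdot\,\ket{0}_H$ cannot increase its operator norm). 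Adding the two contributions yields $\Var_{\xv}[\expval{O}_{\xv}]\leq G(\varepsilon_{\diamond}^{\xv})$, and Chebyshev's inequality with $\mu=\Ebb_{\xv}[\expval{O}_{\xv}]$ gives the claim. Theorem~\ref{expressibility_thm2} then follows from the identical computation with $U(\xv)\to U_R$, the full input state in place of $\rho_0\otimes\ketbra{0}{0}$, $\widetilde{O}_\Lambda\to O$, and $d_A\to 2^n$.

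The main obstacle I anticipate is purely bookkeeping: carrying out the hidden-register partial trace in the first step so that the second-moment analysis involves only the accessible ensemble $\Ubb_{\xv}$, and then tracking the mixed-state corrections in the Weingarten evaluation so that the denominator collapses to $2^{n_A}(2^{n_A}+1)$ rather than $2^{2n_A}-1$. Once those are in place, the remainder is a routine combination of the two-design moment formula, H\"older's inequality, and diamond-norm contractivity, with nothing deeper required.
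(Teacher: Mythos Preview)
Your proposal is correct and follows essentially the same route as the paper: bound the variance by the raw second moment, split the second moment into a Haar piece plus an expressivity correction via Eq.~\eqref{expressibility_superop}, evaluate the Haar piece with the standard two-design formula, bound the correction by H\"older plus diamond-norm contractivity, and finish with Chebyshev. The only cosmetic difference is that you trace out the hidden register at the outset and work entirely with $\widetilde{O}_\Lambda$ on the accessible space, whereas the paper carries the $\ketbra{0}{0}^{\ot 2}$ factors through and only identifies $\widetilde{O}_\Lambda$ at the end; your version is slightly cleaner and your explicit treatment of the mixed-$\rho_0$ case (via $\Tr[\rho_0^2]\leq 1$ and Cauchy--Schwarz) is a small refinement over the paper, which tacitly uses the pure-state form $\VC_{\rm Haar}(\rho_0^{\ot 2})=(\IC^{\ot 2}+\mathrm{SWAP})/(d_A(d_A+1))$.
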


\begin{specialproof}
Let us note that $\Var_{\xv}[\expval{O}_{\xv}] \leq \Ebb_{\xv}[\expval{O}_{\xv}^2]$. Hence

\begin{align}
\Ebb_{\xv}[\expval{O}_{\xv}^2] &= \int_{\Ubb_{\xv}} \,dU(\xv) \Tr[\Lambda(O)\biggr(U(\xv)\rho_0U(\xv)\ad\otimes\ketbra{0}{0}\biggr)]^2 \\
&=\int_{\Ubb_{\xv}} \,dU(\xv) \Tr[\Lambda(O)^{\otimes 2}\biggr(U(\xv)^{\otimes 2}\rho_0^{\otimes 2}(U(\xv)\ad)^{\otimes 2}\otimes\ketbra{0}{0}^{\otimes 2}\biggr)] \\
&= \Tr[\Lambda(O)^{\otimes 2}\biggr(\biggr(\VC_{\text{Haar}}(\rho_0^{\otimes 2})-\AC_{\Ubb_s}(\rho_0^{\otimes 2})\biggr)\otimes\ketbra{0}{0}^{\otimes 2}\biggr)] \\
&\leq \Tr[\Lambda(O)^{\otimes 2}\biggr(\VC_{\text{Haar}}(\rho_0^{\otimes 2})\otimes\ketbra{0}{0}^{\otimes 2}\biggr)] + \norm{\Lambda(O)^{\otimes 2} \biggr(\AC_{\Ubb_s}(\rho^{\otimes 2})\otimes\ketbra{0}{0}\biggr)}_1 \\
&\leq \Tr[\Lambda(O)^{\otimes 2}\biggr(\VC_{\text{Haar}}(\rho_0^{\otimes 2})\otimes\ketbra{0}{0}^{\otimes 2}\biggr)] + \norm{\AC(\rho_0^{\otimes 2})}_1\norm{\Lambda(O)}_{\infty}^2 \\
&\leq \Tr[\Lambda(O)^{\otimes 2}\biggr(\VC_{\text{Haar}}(\rho_0^{\otimes 2})\otimes\ketbra{0}{0}^{\otimes 2}\biggr)] + \varepsilon_{\diamond}^{\xv}\norm{\Lambda(O)}_{\infty}^2\,,
\end{align}
where in the second equality we utilized the basic identity $\Tr[A]^2= \Tr[A^{\otimes 2}]$, in the third equality we inserted the definition of $\AC_{\Ubb_{\xv}}(\rho_0^{\otimes 2})$, in the first inequality we took the absolute value and applied the triangle inequality, in the second inequality we applied Hölder's inequality to the second term and the last inequality was achieved by noting that $\norm{\EC(X)}_1 \leq \norm{\EC}_{\diamond}\norm{X}_1$
Let us now compute explicitly the first term

\begin{align}
    \Tr[\Lambda(O)^{\otimes 2}\biggr(\VC_{\text{Haar}}(\rho_0^{\otimes 2})\otimes\ketbra{0}{0}^{\otimes 2}\biggr)] &= \int_{\UC(d)}\,d\mu(U) \Tr[\Lambda(O)^{\otimes 2} \biggr(U^{\otimes 2}\rho_0^{\otimes 2} (U\ad)^{\otimes 2}\otimes \ketbra{0}{0}^{\otimes 2}\biggr)] \\
    &= \Tr[\Lambda(O)^{\otimes 2} \biggr(\int_{\UC(d)}\,d\mu(U) U^{\otimes 2}\rho_0^{\otimes 2} (U\ad)^{\otimes 2}\otimes \ketbra{0}{0}^{\otimes 2}\biggr)] \\
    &= \dfrac{1}{2^{n_{a}}(2^{n_{a}}+1)}\Tr[\Lambda(O)^{\otimes 2}\biggr(\biggr(\IC^{\otimes 2}+\mathrm{SWAP}\biggr)\otimes\ketbra{0}{0}^{\otimes 2}\biggr)]\\
     &= \dfrac{\Tr[\widetilde{O}_{\Lambda}]^2 + \Tr[\widetilde{O}_{\Lambda}^2]}{2^{n_{a}}(2^{n_{a}}+1)}\,,
\end{align}
where the second equality is obtained by exchanging the trace and the Haar integral, and in the third equality we explicitly compute the second moment of the Haar integral. In the last equality we instead regroup the terms to recognise $\widetilde{O}_{\Lambda}$ and apply the identity $\Tr[\widetilde{O}_{\Lambda}^{\otimes 2}\mathrm{SWAP}] = \Tr[\widetilde{O}_{\Lambda}^2]$ where $\mathrm{SWAP}$ is the swap operator in the accessible space. Finally, we get

\be
\Var_{\xv}[\expval{O}_{\xv}] \leq \Ebb_{\xv}[\expval{O}_{\xv}^2] \leq \dfrac{\Tr[\widetilde{O}_{\Lambda}]^2 + \Tr[\widetilde{O}_{\Lambda}^2]}{2^{n_{a}}(2^{n_{a}}+1)} + \varepsilon_{\diamond}^{\xv}\lVert\Lambda(O)\rVert_{\infty}^2.
\ee
By applying Chebyschev's inequality, we complete the proof.
\end{specialproof}

\begin{corollary}\label{corollary1_expressibility}
    Suppose $O$ is a Pauli observable, i.e. $O \in \{X,Y,Z,I\}^{\otimes n}/\{I^{\otimes n}\}$ where $\{X,Y,Z\}$ are the single-qubit Pauli observables. Suppose furthermore that $\EC_{R}(\cdot) = U_{R}(\cdot)U_{R}^{\ad}$. Then we have
    
    \be
    G(\varepsilon_{\diamond})= \dfrac{1}{2^{n_{a}}+1} +\varepsilon_{\diamond}\,.
    \ee
\end{corollary}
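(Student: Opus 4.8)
The plan is to derive this corollary as a direct specialization of Theorem~\ref{expressibility_thm1}: I substitute the structural identities of a non-identity Pauli string into the bound $G(\varepsilon_\diamond) = \bigl(\Tr[\widetilde{O}_{\Lambda}]^2 + \Tr[\widetilde{O}_{\Lambda}^2]\bigr)/\bigl(2^{n_A}(2^{n_A}+1)\bigr) + \varepsilon_\diamond\norm{\Lambda(O)}_\infty^2$ established there, where $\Lambda(\cdot)=U_R^\dagger(\cdot)U_R$ and $\widetilde{O}_{\Lambda} = \Tr_{H}\bigl[(\IC_A\otimes\ketbra{0}{0})\Lambda(O)\bigr]$ is the operator induced on the $n_A$-qubit accessible space. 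Only two quantities need to be pinned down: the operator norm $\norm{\Lambda(O)}_\infty$, and the traces $\Tr[\widetilde{O}_{\Lambda}]$ and $\Tr[\widetilde{O}_{\Lambda}^2]$. The norm is immediate: a Hermitian Pauli string is a unitary involution with spectrum $\{+1,-1\}$, so $\norm{O}_\infty=1$, and conjugation by the unitary $U_R$ preserves the spectrum, giving $\norm{\Lambda(O)}_\infty=1$; hence the second term of $G$ is exactly $\varepsilon_\diamond$.

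The substantive step is to establish $\Tr[\widetilde{O}_{\Lambda}]^2 + \Tr[\widetilde{O}_{\Lambda}^2] = 2^{n_A}$, which turns the first term of $G$ into $2^{n_A}/\bigl(2^{n_A}(2^{n_A}+1)\bigr) = 1/(2^{n_A}+1)$. The engine is the standard pair of trace identities $\Tr[O]=0$ and $\Tr[O^2]=2^{n_A}$ for a non-identity Pauli, transported to $\widetilde{O}_{\Lambda}$. Writing $\widetilde{O}_{\Lambda} = \bra{0}_H\, U_R^\dagger O U_R\, \ket{0}_H$ and using cyclicity and unitary invariance of the trace, $\Tr[\widetilde{O}_{\Lambda}]$ reduces to $\Tr\bigl[(\IC_A\otimes\ketbra{0}{0})\Lambda(O)\bigr]$, while $\Tr[\widetilde{O}_{\Lambda}^2]$ reduces to a trace of $\Lambda(O)^{\otimes 2}$ against the accessible-space SWAP tensored with $\ketbra{0}{0}^{\otimes 2}$ on the hidden register. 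In the cleanest situation, no hidden qubits, $\widetilde{O}_{\Lambda}=\Lambda(O)=U_R^\dagger O U_R$ and these collapse directly to $\Tr[O]=0$ and $\Tr[O^2]=2^{n_A}$.

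The one place needing care — and what I expect to be the main, if modest, obstacle — is the bookkeeping of how the $\ket{0}_H$ projections thread through $\Lambda(O)$, so that the induced accessible-space operator $\widetilde{O}_{\Lambda}$ still carries the traceless and fixed Hilbert--Schmidt properties of a non-identity Pauli. Once that is in hand, I assemble: feeding $\Tr[\widetilde{O}_{\Lambda}]^2 + \Tr[\widetilde{O}_{\Lambda}^2] = 2^{n_A}$ and $\norm{\Lambda(O)}_\infty^2 = 1$ into $G$ gives $G(\varepsilon_\diamond) = \frac{1}{2^{n_A}+1} + \varepsilon_\diamond$. No estimate beyond Theorem~\ref{expressibility_thm1} is invoked; the corollary is that general bound specialized under the Pauli assumption.
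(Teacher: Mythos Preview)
The paper states this corollary without proof, as an immediate specialization of Theorem~\ref{expressibility_thm1}; your plan to feed $\|\Lambda(O)\|_\infty=1$ and $\Tr[\widetilde{O}_\Lambda]^2+\Tr[\widetilde{O}_\Lambda^2]=2^{n_A}$ into that bound is exactly the intended route, and your handling of the operator-norm term is clean.

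The gap is the second identity when $n_H>0$. You derive it correctly for $n_H=0$, but your expectation that the $\ket{0}_H$ projections will ``thread through'' so that $\widetilde{O}_\Lambda$ inherits tracelessness and fixed Hilbert--Schmidt norm is simply false in general. Take $n_A=n_H=1$, $O=I_A\otimes Z_H$ (a legitimate non-identity Pauli on $n=2$ qubits), $U_R=\IC$: then $\widetilde{O}_\Lambda=\bra{0}Z\ket{0}\,I_A=I_A$, giving $\Tr[\widetilde{O}_\Lambda]^2+\Tr[\widetilde{O}_\Lambda^2]=4+2=6\ne 2=2^{n_A}$. The same failure occurs for $O=Z_A\otimes I_H$ with $U_R=\mathrm{SWAP}$. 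So the ``modest obstacle'' is not bookkeeping but an actual obstruction: the corollary as literally written (Pauli on all $n$ qubits, arbitrary unitary $U_R$) does not hold for $n_H>0$. The paper's own remark after Theorem~\ref{expressibility_thm1} tacitly uses $\Tr[O^2]=2^{n_A}$, which equals $\Tr[O^2]$ only when $n=n_A$; the honest reading is that $n_H=0$ is an implicit hypothesis, and under that hypothesis your argument is already complete.
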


Therefore, if the conditions of Corollary~\ref{corollary1_expressibility} are fulfilled and the encoding ansatz is expressible enough, so that $\varepsilon_{\diamond}^{\xv} \simeq 0$, then the expectation value will exponentially concentrate toward an input-independent value

\be
\mathrm{Pr}_{\xv}[|\expval{O}_{\xv} - \Ebb_{\xv}[\expval{O}_{\xv}]| \geq \delta ] \in \OC\biggr(\dfrac{1}{2^n}\biggr)\,.
\ee

\vspace{12pt}
We now turn to the reservoir unitary-induced exponential concentration. 
\begin{theorem}[Reservoir Haar-expressivity-induced concentration]
    Consider a reservoir evolution $U_R \in \Ubb_R$. Consider the expectation value of an arbitrary Hermitian observable as defined in Eq.~\eqref{obs_def}. Then we have that
    \be
    \mathrm{Pr}_{U_R}[|\expval{O}_{\xv} - \Ebb_{R}[\expval{O}_{\xv}]| \geq \delta ] \leq \dfrac{G(\varepsilon_{\diamond}^{R})}{\delta^2}\,,
    \ee
    where 
    \be
    G(\varepsilon_{\diamond}) = \dfrac{\big(\Tr[O]^2+\Tr[O^2]\big)}{2^n(2^{n}+1)} + \varepsilon_{\diamond}^{R}\norm{O}_{\infty}\,.
    \ee
\end{theorem}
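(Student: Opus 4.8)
The plan is to mirror the proof of Theorem~\ref{expressibility_thm1} almost verbatim, the single structural change being that the Haar twirl is now over the full $d=2^n$-dimensional space (accessible plus hidden qubits) rather than the accessible subspace alone, since $U_R$ acts on everything. Write $\sigma := \rho(\xv)\otimes\ketbra{0}{0}$ for the fixed pre-reservoir state, so that $\expval{O}_{\xv}=\Tr[O\,U_R\,\sigma\,U_R\ad]$, and note $\Var_{U_R}[\expval{O}_{\xv}]\leq \Ebb_{U_R}[\expval{O}_{\xv}^2]$, so it suffices to bound the second moment. Using $\Tr[A]^2=\Tr[A^{\otimes 2}]$ and the defining relation $\int_{\Ubb_{R}}dU_R\,U_R^{\otimes 2}(\cdot)(U_R\ad)^{\otimes 2}=\VC_{\text{Haar}}(\cdot)-\AC_{\Ubb_{R}}(\cdot)$, I would write
\begin{equation}
\Ebb_{U_R}[\expval{O}_{\xv}^2]=\Tr\big[O^{\otimes 2}\big(\VC_{\text{Haar}}(\sigma^{\otimes 2})-\AC_{\Ubb_{R}}(\sigma^{\otimes 2})\big)\big]\,,
\end{equation}
and estimate the two terms separately.

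For the error term I would pass to the absolute value, apply the triangle inequality, then H\"older's inequality $|\Tr[AB]|\leq\norm{A}_{\infty}\norm{B}_1$ with $\norm{O^{\otimes 2}}_{\infty}=\norm{O}_{\infty}^2$, and finally the contractivity bound $\norm{\AC_{\Ubb_{R}}(X)}_1\leq\norm{\AC_{\Ubb_{R}}}_{\diamond}\norm{X}_1$ together with $\norm{\sigma^{\otimes 2}}_1=1$; this gives a contribution at most $\varepsilon_{\diamond}^{R}\norm{O}_{\infty}^2$, which collapses to the $\varepsilon_{\diamond}^{R}\norm{O}_{\infty}$ of the statement for normalized observables such as Pauli strings. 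For the Haar term I would insert the second moment of the Haar measure: evaluated on $\sigma^{\otimes 2}$ with $\Tr[\sigma]=1$ and, for pure $\rho_0$, $\Tr[\sigma^2]=1$, the twirl collapses to $\tfrac{1}{d(d+1)}\big(\IC^{\otimes 2}+\mathrm{SWAP}\big)$, so contracting against $O^{\otimes 2}$ and using $\Tr[O^{\otimes 2}]=\Tr[O]^2$ and $\Tr[O^{\otimes 2}\mathrm{SWAP}]=\Tr[O^2]$ produces $\tfrac{\Tr[O]^2+\Tr[O^2]}{d(d+1)}=\tfrac{\Tr[O]^2+\Tr[O^2]}{2^n(2^n+1)}$. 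Summing the two pieces bounds $\Var_{U_R}[\expval{O}_{\xv}]$ by $G(\varepsilon_{\diamond}^{R})$, and Chebyshev's inequality then yields the claimed probability bound.

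I do not anticipate any genuine obstacle: this is essentially the computation already done for the encoding case in Appendix~\ref{appendix_exp_concentration}, and the steps are mechanical. The one place that calls for a little care is the bookkeeping difference from Theorem~\ref{expressibility_thm1}: there the hidden register is a spectator to the averaging, which forces one to carry the partial-trace operator $\widetilde{O}_{\Lambda}=\Tr[(\IC\otimes\ketbra{0}{0})\Lambda(O)]$ through the calculation, whereas here the twirl is global so $O$ enters directly and the suppression is governed by $1/2^n$ rather than $1/2^{n_A}$ --- precisely the stronger concentration advertised in the main text. A minor caveat worth stating explicitly is that the displayed $G$ presumes $\rho_0$ (hence $\sigma$) is pure; for a mixed initial state the Haar term acquires the usual $\big(\Tr[\sigma]^2-\tfrac1d\Tr[\sigma^2]\big)$ and $\big(\Tr[\sigma^2]-\tfrac1d\Tr[\sigma]^2\big)$ weights, which only tighten the bound, and since $\Tr[\sigma^2]=\Tr[\rho_0^2]$ the resulting estimate is uniform over the input $\xv$.
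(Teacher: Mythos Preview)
Your proposal is correct and matches the paper's approach exactly: the paper's own proof consists of the single sentence ``The proof is equivalent to the encoding Haar-expressivity-induced concentration,'' and you have faithfully spelled out that equivalence, including the correct bookkeeping that the Haar twirl now acts on the full $2^n$-dimensional space and that $O$ appears directly rather than through $\widetilde{O}_\Lambda$. Your observation that the argument naturally yields $\varepsilon_\diamond^R\norm{O}_\infty^2$ rather than the $\varepsilon_\diamond^R\norm{O}_\infty$ printed in the statement is also correct---this is an inconsistency in the paper (the appendix restatement of Theorem~\ref{expressibility_thm1} drops the square too, while the main-text version and the actual proof retain it), and your remark that the two coincide for Pauli observables is the right way to reconcile them.
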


\begin{specialproof}
    The proof is equivalent to the encoding Haar-expressivity-induced concentration.
\end{specialproof}

Notice that, if $U_{R}$ is drawn from a $2$-design, the average of the expectation value over $\Ubb_{R}$ is independent of the input $\xv$
\begin{align}
\Ebb_{U_R}[\expval{O}_{\xv}] &= \int_{\Ubb_{R}} \,dU_{R} \Tr[U_{R}\rho(\xv) U_{R}\ad O] = \int_{\UC(d)} \,d\mu(U) \Tr[U\rho(\xv) U\ad O] \\
&=\dfrac{\Tr[\rho(\xv)]\Tr[O]}{d} = \dfrac{\Tr[O]}{d} = \mu\,.
\end{align}
Crucially, Theorem~\ref{expressibility_thm2} tells us that if $\Ubb_{R}$ forms a $2$-design, the probability for the expectation value to differ from $\mu$ (which is independent of the input) by more than $\delta$ is exponentially small in the number of qubits. Hence, we will need exponentially many shots to recognise the observable from $\mu$.

\subsection{Entanglement-induced concentration}
\begin{theorem}[Entanglement-induced concentration]
    Suppose an observable that acts non-trivially on a subspace $\HC_{k}$ of the entire Hilbert space $\HC$, so that $O = O_{k} \otimes \IC_{\bar{k}}$. Then, the concentration of its expectation value around an input-independent will be bounded by
    \be
    \left| \expval{O}_{\xv} - \dfrac{\Tr[O]}{2^n}\right| \leq \norm{O_k}_\infty\sqrt{2\ln{2}}S\biggr(\rhot_k(\xv)\biggr\lVert \dfrac{\IC_k}{2^k}\biggr)^{1/2}\,,
    \ee
    where $S(\cdot\lVert\cdot)$ is the relative entropy and $\rhot_k(\xv) = \Tr_{\bar{k}}(\rhot(\xv))$ represents the final reduced state on subspace $\HC_k$.
\end{theorem}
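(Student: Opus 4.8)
The plan is to reduce the bound to two standard facts -- the matrix H\"older inequality and the quantum Pinsker inequality -- after first rewriting both quantities in terms of the reduced state $\rhot_k(\xv)$ on the subspace $\HC_k$.

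First I would restrict everything to $\HC_k$. Because $O = O_k \otimes \IC_{\bar k}$ acts trivially on the complement, the partial trace gives $\expval{O}_{\xv} = \Tr[O_k\,\rhot_k(\xv)]$, while $\Tr[O] = \Tr[O_k]\cdot\Tr[\IC_{\bar k}] = 2^{n-k}\Tr[O_k]$, so the fixed point is $\Tr[O]/2^n = \Tr[O_k]/2^k = \Tr[O_k\,\IC_k/2^k]$. Subtracting these,
\begin{equation}
\expval{O}_{\xv} - \frac{\Tr[O]}{2^n} = \Tr\!\left[O_k\left(\rhot_k(\xv) - \frac{\IC_k}{2^k}\right)\right]\,.
\end{equation}

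Next I would apply the matrix H\"older inequality $|\Tr[AB]| \le \norm{A}_\infty\norm{B}_1$ with $A = O_k$ and $B = \rhot_k(\xv) - \IC_k/2^k$, which yields
\begin{equation}
\left|\expval{O}_{\xv} - \frac{\Tr[O]}{2^n}\right| \le \norm{O_k}_\infty\,\norm{\rhot_k(\xv) - \frac{\IC_k}{2^k}}_1\,,
\end{equation}
and then bound the remaining trace distance by the quantum Pinsker inequality. Since $\IC_k/2^k$ is full rank, Pinsker applies, and with the relative entropy $S(\cdot\|\cdot)$ taken with base-$2$ logarithm it reads $\norm{\rho-\sigma}_1 \le \sqrt{2\ln 2\; S(\rho\|\sigma)}$. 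Setting $\rho = \rhot_k(\xv)$ and $\sigma = \IC_k/2^k$ gives precisely the claimed bound.

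There is no genuine obstacle here; the proof is a short chain of standard inequalities. The only thing that needs care is the logarithm-base bookkeeping in Pinsker -- measuring $S(\cdot\|\cdot)$ in bits is what produces the constant $\sqrt{2\ln 2}$ rather than $\sqrt{2}$ -- together with the observation that the partial-trace identities above are valid exactly because $O$ is supported on $\HC_k$. As a remark one can further rewrite $S(\rhot_k(\xv)\|\IC_k/2^k) = k - S(\rhot_k(\xv))$ with $S(\cdot)$ the von Neumann entropy in bits, which is the form that makes the volume-law versus area-law discussion in the main text transparent, although this identity is not needed for the bound itself.
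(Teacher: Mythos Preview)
Your proof is correct and follows essentially the same route as the paper: reduce to the subspace $\HC_k$ via the partial trace, apply H\"older's inequality to extract $\norm{O_k}_\infty$, and finish with quantum Pinsker. The only cosmetic difference is that the paper inserts an intermediate step $|\Tr[X]|\le\norm{X}_1$ before H\"older, whereas you apply $|\Tr[AB]|\le\norm{A}_\infty\norm{B}_1$ directly; your remark on the logarithm base and the entropy rewriting is a nice addition.
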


\begin{specialproof}
Note that
\begin{align}
    \left| \expval{O}_{\xv} - \dfrac{\Tr[O]}{2^n}\right| &= \left| \Tr[O\biggr(\rhot(\xv) - \dfrac{\IC}{2^n}\biggr)]\right| \\
    &=\left|\Tr_k\biggr[O_k\biggr(\rhot_k(\xv)-\dfrac{\IC_k}{2^k}\biggr)\biggr]\right| \\
    &\leq \norm{O_k\biggr(\rhot_k(\xv)-\dfrac{\IC_k}{2^k}\biggr)}_1 \\
    &\leq \norm{O_k}_\infty \norm{\rhot_k(\xv)-\dfrac{\IC_k}{2^k}}_1 \\
    &\leq \norm{O_k}_\infty \sqrt{2\ln{2}} S\biggr(\rhot_k(\xv)\biggr\lVert\dfrac{\IC_k}{2^k}\biggr)^{1/2}\,,
\end{align}
where the first equality is given by using the definition of $\expval{O}_{\xv}$, the second equality by applying the definition of the observable as stated in Theorem \ref{entanglement_thm}, the first inequality is obtained by applying the triangle inequality, the second inequality by applying Hölder's inequality and for the final inequality we used Pinsker's inequality.
\end{specialproof}
\subsection{Global measurement-induced concentration}

\begin{theorem}[Global measurement-induced concentration]
Suppose an observable $O = \ketbra{m}{m}$, i.e. a projective measurement onto state $\ket{m}=\ket{m_1\dots m_n}$. Consider an initial separable state $\rho_0=\bigotimes_{k=1}^n \rho_0^{(k)}$. Suppose that the encoding unitary creates no entanglement, so that: $U(\xv)=\bigotimes_{k=1}^{n_{a}} U_k(x_k)$ where $x_k$ is an input component of $\xv$, and all are uniformly sampled from $[-\pi, \pi]$. Similarly, assume furthermore that the reservoir has the form $U_{R} =\bigotimes_{k=1}^n V_k$. Then we have
\be
\mathrm{Pr}_{\xv}[|\expval{O}_{\xv} - \Ebb_{\xv}[\expval{O}_{\xv}]\geq \delta] \leq \dfrac{\alpha\prod_{k=1}^{n_{a}} G_k(\varepsilon_{\Ubb_{x_k}})}{\delta^2}\,,
\ee
where $\varepsilon_{\Ubb_{x_k}}= \norm{\AC_{\Ubb_{x_k}}\biggr(\rho_0^{(k)^{\otimes 2}}\biggr)}_1$ is the Haar-expressivity measure of the local unitary $U_k(x_k)$ and $\alpha= \prod_{j=n_{a}+1}^{n}\left|\braket{0|V_j}{m_j}\right|^4$. The term $G_k(\varepsilon_{\Ubb_{x_k}})$ is given by

\be
G_k(\varepsilon_{\Ubb_{x_k}}) = \biggr(\frac{1}{3}+\varepsilon_{\Ubb_{x_k}}\biggr(\varepsilon_{\Ubb_{x_k}}+\sqrt{\frac{4}{3}}\biggr)\biggr)^{1/2}\,.
\ee
\end{theorem}

\begin{specialproof}
Let us start by noting that:

\begin{align}
\Ebb_{\xv}[\expval{O}_{\xv}^2] &= \int_{\Ubb_s} \,dU(\xv) \Tr[\rhot(\xv)\ketbra{m}{m}]^2 \\
&= \int_{\Ubb_{\xv}} \,dU(\xv) \Tr[(\rho(\xv)\otimes\ketbra{0}{0})\ketbra{m'}{m'}]^2 \\
&= \int_{\Ubb_{\xv}} \,dU(\xv) \Tr[(\rho(\xv)^{\otimes 2}\otimes\ketbra{0}{0}^{\otimes 2})\ketbra{m'}{m'}^{\otimes 2}],
\end{align}
where the first equality is obtained by using the definition of the observable, the second inequality is achieved by defining $\ket{m'} = \bigotimes_{k=1}^n\ket{m_k'} = V_{R}\ad\ket{m} =\bigotimes_{k=1}^{n} V_k\ad\ket{m_k}$ and the third equality is obtained by using the identity $\Tr[A]^2 = \Tr[A^{\otimes 2}]$. Then

\begin{align}
\int_{\Ubb_{\xv}} \,dU(\xv) \Tr[\rho(\xv)^{\otimes 2}\ketbra{m'}{m'}^{\otimes 2}] &= \int_{\Ubb_{\xv}} \,dU(\xv) \Tr[\biggr(U(\xv)^{\otimes 2}\rho_0^{\otimes 2} (U\ad(\xv))^{\otimes 2}\otimes\ketbra{0}{0}^{\otimes 2}\biggr)\ketbra{m'}{m'}^{\otimes 2}] \\
&= \prod_{k=1}^{n_{a}} \Tr[\int_{\Ubb_{x_k}}\,dU_k(x_k) U_k(x_k)^{\otimes 2} \big(\rho_0^{(k)}\big)^{\otimes 2}(U_k\ad(x_k))^{\otimes 2} \ketbra{m_k'}^{\otimes 2}] \prod_{j=n_{a}+1}^{n}\left|\braket{0}{m'_j}\right|^4
\\
&= \alpha\prod_{k=1}^{n_{a}} \Tr[\biggr(\VC_{\text{Haar}}\biggr(\big(\rho_0^{(k)}\big)^{\otimes 2}\biggr)- \AC_{\Ubb_{x_k}}\biggr(\big(\rho_0^{(k)}\big)^{\otimes 2}\biggr)\biggr) \ketbra{m_k'}^{\otimes 2}]
\\
&\leq \alpha\prod_{k=1}^{n_{a}}\norm{\biggr(\VC_{\text{Haar}}^{(k)}- \AC_{\Ubb_{x_k}}^{(k)}\biggr) \ketbra{m_k'}^{\otimes 2}}_1\label{global_meas_ineq_1}
\\ 
&\leq \alpha\prod_{k=1}^{n_{a}} \norm{\VC_{\text{Haar}}^{(k)}- \AC_{\Ubb_{x_k}}^{(k)}}_2 \norm{\ketbra{m_k'}^{\otimes 2}}_2 \label{global_meas_ineq_2} \\
&= \alpha\prod_{k=1}^{n_{a}} \biggr(\Tr[\biggr(\VC_{\text{Haar}}^{(k)}\biggr)^2] + \Tr[\AC_{\Ubb_{x_k}}^{(k)}\biggr(\AC_{\Ubb_{x_k}}^{(k)} - 2\VC_{\text{Haar}}^{(k)}\biggr)]\biggr)^{1/2} \\
&\leq \alpha\prod_{k=1}^{n_{a}} \biggr(\dfrac{1}{3} + \norm{\AC_{\Ubb_{x_k}}^{(k)}}_2\norm{\AC_{\Ubb_{x_k}}^{(k)} - 2\VC_{\text{Haar}}^{(k)}}_2\biggr)^{1/2} \label{global_meas_ineq_3} \\
&\leq \alpha\prod_{k=1}^{n_{a}} \biggr(\dfrac{1}{3}+\varepsilon_{\Ubb_{x_k}}\biggr(\varepsilon_{\Ubb_{x_k}}+\sqrt{\dfrac{4}{3}}\biggr)\biggr)^{1/2}\,, 
\label{global_meas_ineq_4}
\end{align}
where the first equality is straightforward by using the definition of the embedded state, the second equality is obtained by applying the separability of the encoding unitary and the initial state, as well as the projective $\ket{m'}$. The third equality is given by introducing the Haar-expressivity superoperator defined in Eq.~\eqref{expressibility_superop} for each single qubit and defining $\alpha=\prod_{j=n_{a}+1}^{n}|\braket{0}{m_j'}|^2$. In Eq. \eqref{global_meas_ineq_1} we define $\VC_{\text{Haar}}^{(k)} :=\VC_{\text{Haar}}\big(\big(\rho_0^{(k)}\big)^{\otimes 2}\big)$ and $\AC_{\Ubb_{x_k}}^{(k)} :=\AC_{\Ubb_{x_k}}\big(\big(\rho_0^{(k)}\big)^{\otimes 2}\big)$ to ease the notation, and we apply the triangle inequality. In Eq.~\eqref{global_meas_ineq_2} we instead apply Hölder's inequality. The last equality is obtained by expanding the first term and noting $\norm{\ketbra{m_k'}}_2 = 1$. Lastly, in Eq.~\eqref{global_meas_ineq_3} we explicitly compute the first term and apply Hölder's inequality to the second one, which is further bounded (in terms of $\varepsilon_{\Ubb_{x_k}}:=\norm{\AC_{\Ubb_{x_k}}^{(k)}}_2$) in Eq.~\eqref{global_meas_ineq_4} via the triangle inequality. Upon applying Chebyschev's inequality, the proof is complete.
\end{specialproof}

\begin{corollary}
    Suppose that each single-qubit unitary $U_k(x_k)$ is a 2-design, yielding $\varepsilon_{\Ubb_{x_k}}=0$. Then
    \be
        \mathrm{Pr}_{\xv}[|\expval{O}_{\xv} - \Ebb_{\xv}[\expval{O}_{\xv}]\geq \delta] \leq \dfrac{1}{\delta^2}\biggr(\dfrac{1}{3}\biggr)^{n_{a}/2}\,.
    \ee
\end{corollary}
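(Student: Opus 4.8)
The plan is to obtain this as a one-line specialization of the parent theorem, Theorem~\ref{global_meas_thm}, by sending the local Haar-expressivity parameters to zero. The first step is to note that if the single-qubit ensemble $\Ubb_{x_k}$ is a $2$-design, then its second moment agrees exactly with the Haar second moment on $\UC(2)$, so the super-operator $\AC_{\Ubb_{x_k}}$ is identically the zero map. In particular, the measure $\varepsilon_{x_k} = \norm{\AC_{\Ubb_{x_k}}\bigl((\rho_0^{(k)})^{\otimes 2}\bigr)}_1$ vanishes for every $k = 1, \dots, n_A$ (and so does the auxiliary $2$-norm quantity $\varepsilon_{\Ubb_{x_k}}$ used inside the proof of Theorem~\ref{global_meas_thm}).

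The second step is to substitute $\varepsilon_{x_k} = 0$ into the definition of $G_k$, which collapses to
\be
G_k(0) = \left(\frac{1}{3} + 0\cdot\left(0 + \sqrt{\tfrac{4}{3}}\right)\right)^{1/2} = \frac{1}{\sqrt{3}}\,,
\ee
so that $\prod_{k=1}^{n_A} G_k(0) = (1/3)^{n_A/2}$. The third step is to control the remaining prefactor $\alpha = \prod_{j=n_A+1}^{n} |\langle 0 | V_j | m_j \rangle|^2$: it is a product of squared overlaps between unit vectors, each at most $1$ by Cauchy--Schwarz, so $\alpha \leq 1$. Feeding these into the bound of Theorem~\ref{global_meas_thm} gives
\be
\mathrm{Pr}_{\xv}\left[\bigl|\expval{O}_{\xv} - \Ebb_{\xv}[\expval{O}_{\xv}]\bigr| \geq \delta\right] \leq \frac{\alpha \prod_{k=1}^{n_A} G_k(0)}{\delta^2} \leq \frac{1}{\delta^2}\left(\frac{1}{3}\right)^{n_A/2}\,,
\ee
which is precisely the claimed inequality.

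There is no genuine obstacle in this corollary; it is a direct substitution into the preceding theorem. The only two points worth stating explicitly are (i) that a $2$-design annihilates the Haar-expressivity super-operator, so that $\varepsilon_{x_k} = 0$, and (ii) that the overlap prefactor $\alpha$ never exceeds $1$, which is what allows it to be dropped from the final bound.
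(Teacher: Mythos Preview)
Your proposal is correct and is exactly the intended specialization: the paper states this corollary without a separate proof, treating it as an immediate consequence of Theorem~\ref{global_meas_thm}, and your three substitutions ($\varepsilon_{x_k}=0$, $G_k(0)=1/\sqrt{3}$, $\alpha\leq 1$) are precisely what is needed to extract the bound.
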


\subsection{Noise-induced concentration}
Let us remind the statement of Theorem~\ref{noisy_encoding_thm}, which stems from considering a $L$-layered encoding subject to Pauli noise, as defined in Eq.~\eqref{noisy_encoding}
\begin{theorem}[Noise-induced concentration]\label{noise_QELM_thm}
    Consider the $L$-layered encoding as defined in  Eq.~\eqref{noisy_encoding} with $q < 1$. Then, the concentration around a fixed point of the expectation value of an observable as defined in Eq.~\eqref{obs_def} can be bounded as
    \be
    \biggr|\expval{O}_{\xv} - \dfrac{\Tr[\widetilde{O}_{\Lambda}]}{2^{n_{a}}}\biggr| \leq \norm{\Lambda(O)}_{\infty}\biggr(\dfrac{1}{b}q^{b(L+1)}S_2\biggr(\rho_0\biggr\lVert\frac{\IC}{2^{n_{a}}}\biggr)\biggr)^{1/2}\,,
    \ee
    where $\widetilde{O}_{\Lambda}= \Tr[\Lambda(O)(\IC_{A}\otimes\ketbra{0}{0})]$, $b = 1/(2\ln{2})$ and $S_2(\cdot\lVert\cdot)$ denotes the sandwiched $2$-Rényi relative entropy.
\end{theorem}

\begin{specialproof}
Let us bound the following quantity
\begin{align}
    \left| \expval{O}_{\xv} - \dfrac{\Tr[\widetilde{O}_{\Lambda}]}{2^{n_{a}}}\right| &= \left| \Tr[\Lambda(O)(\rho(\xv)\otimes\ketbra{0}{0})] - \dfrac{\Tr[\widetilde{O}_{\Lambda}]}{2^{n_{a}}}\right| = \left|\Tr[\Lambda(O)\biggr(\biggr(\rho(\xv)-\dfrac{\IC}{2^{n_{a}}}\biggr)\otimes\ketbra{0}{0}\biggr)]\right| \\
    &\leq\norm{\Lambda(O)\biggr(\biggr(\rho(\xv)-\dfrac{\IC}{2^{n_{a}}}\biggr)\otimes\ketbra{0}{0}\biggr)}_1 \\
    &\leq \norm{\Lambda(O)}_\infty \norm{\biggr(\rho(\xv)-\dfrac{\IC}{2^{n_{a}}}\biggr)\otimes\ketbra{0}{0}}_1 \\
    &\leq \norm{\Lambda(O)}_\infty \sqrt{2\ln{2}}S\biggr(\rho(\xv)\biggr\lVert \dfrac{\IC}{2^{n_{a}}}\biggr)^{1/2} \\
    &\leq \norm{\Lambda(O)}_\infty \sqrt{2\ln{2}}S_2\biggr(\rho(\xv)\biggr\lVert \dfrac{\IC}{2^{n_{a}}}\biggr)^{1/2} \\
    &\leq \norm{\Lambda(O)}_\infty \biggr(2\ln{2}q^{b(L+1)}S_2\biggr(\rho_0\biggr\lVert \dfrac{\IC}{2^{n_{a}}}\biggr)\biggr)^{1/2}\,,
\end{align}

where the first and second equality are obtained by using the definition of the expectation value of the observable and consequently grouping the two trace terms together. Then, in the first inequality we applied the triangle inequality, in the second one we used Hölder's inequality, in the third one Pinsker's inequality, in the fourth one we used the fact that $S(\cdot\lVert\cdot)\leq S_2(\cdot\lVert\cdot)$ and the final inequality is achieved by using the definition of $\rho(\xv)$ and applying a fundamental inequality for Pauli coefficients under noise channels (see Lemma 4 in Appendix E of Ref.~\cite{thanasilp2022exponential} for a detailed explanation)
\end{specialproof}

Theorem~\ref{noise_QELM_thm} can be easily extended to a $K$-layered reservoir noisy channel similar to Eq.~\eqref{noisy_encoding}), and with a characteristic noise parameter $p$. In this case we have

\be
    \biggr|\expval{O}_{\xv} - \dfrac{\Tr[O]}{2^n}\biggr| \leq \norm{O}_{\infty}\biggr(2\ln{2}q^{b(L+1)}p^{b(K+1)}S_2\biggr(\rho_0\biggr\lVert\frac{I}{2^n}\biggr)
    \biggr)^{1/2}\,.
\ee

\end{document}